  \newtheorem{theorem}{Theorem}[section]
  \newtheorem{corollary}[theorem]{Corollary}
  \newtheorem{lemma}[theorem]{Lemma}
  \theoremstyle{definition}
  \newtheorem{definition}[theorem]{Definition}
  \newtheorem{assumption}[theorem]{Assumption}
  \newtheorem{remark}[theorem]{Remark}
\def\Expectation{\mathbb E}
\def\R{\mathbb R}
\def\N{\mathbb N}
\def\P{\mathcal P}
\def\bbP{\mathbb P}
\def\bbQ{\mathbb Q}
\def\X{\mathsf X}
\def\Y{\mathsf Y}
\def\cZ{\mathcal Z}
\def\Lebesgue{\mathscr L}
\def\weakto{\rightharpoonup}
\def\RelEnt{\mathcal H}
\def\FreeEnergy{\mathcal F}
\def\Energy{\mathcal E}
\def\hEnergy{\hat{\mathcal E}}
\def\InvMeas{\mathbb P^{\mathrm{inv}}}
\def\InvMeasf{\mathbb P^{\mathrm{inv},f}}
\def\InvMeasG{\mathbb P^{\mathrm{inv},\mathcal G}}
\def\InvMeasWiszero{\mathbb P^{\mathrm{inv,}W=0}}
\DeclareMathOperator\Prob{Prob}
\DeclareMathOperator*\argmin{arg\,min}
\DeclareMathOperator\supp{supp}
\DeclareMathOperator\grad{grad}
\DeclareMathOperator\Lip{Lip}
\DeclareMathOperator\BL{BL}
\DeclareMathOperator\EDP{GF}
\let\e\varepsilon
\let\ds\displaystyle
\def\Indicator{\mathds{1}}
\DeclareMathAccent{\widerhat}{\mathord}{largesymbols}{"62}
\def\hFreeEnergy{\hat\FreeEnergy}
\DeclareMathSymbol{\widehatsym}{\mathord}{largesymbols}{"62}
\newcommand\lowerwidehatsym{%
  \text{\smash{\raisebox{-1.3ex}{%
    $\widehatsym$}}}}
\newcommand\fixwidehat[1]{%
  \mathchoice
    {\accentset{\displaystyle\lowerwidehatsym}{#1}}
    {\accentset{\textstyle\lowerwidehatsym}{#1}}
    {\accentset{\scriptstyle\lowerwidehatsym}{#1}}
    {\accentset{\scriptscriptstyle\lowerwidehatsym}{#1}}
}
\def\Ent{{\mathcal E}\mathrm{nt}}
\def\hEnt{\fixwidehat\Ent}
\newcommand\spX{\R_x}
\newcommand\spY{\R_y}
\def\clap#1{\hbox to 0pt{\hss#1\hss}}
\newlength{\leftstackrelawd}
\newlength{\leftstackrelbwd}
\def\leftstackrel#1#2{\settowidth{\leftstackrelawd}%
{${{}^{#1}}$}\settowidth{\leftstackrelbwd}{$#2$}%
\addtolength{\leftstackrelawd}{-\leftstackrelbwd}%
\leavevmode\ifthenelse{\lengthtest{\leftstackrelawd>0pt}}%
{\kern-.5\leftstackrelawd}{}\mathrel{\mathop{#2}\limits^{#1}}}
\begin{document}

\title{Large deviations and gradient flows for the Brownian one-dimensional hard-rod system}
\author{Nir Gavish\and Pierre Nyquist\and Mark Peletier}
\date{\today}

\begin{abstract}
We study a system of hard rods of finite size in one space dimension, which move by Brownian noise while avoiding overlap. We consider a scaling  in which the number of particles tends to infinity while the volume fraction of the rods remains constant; in this limit the empirical measure of the rod positions converges almost surely to a deterministic limit evolution. 
We prove a large-deviation principle on path space for the empirical measure, by exploiting a one-to-one mapping between the hard-rod system and a system of non-interacting particles on a contracted domain. The large-deviation principle naturally identifies a gradient-flow structure for the limit evolution, with clear interpretations for both the driving functional (an `entropy') and the dissipation, which in this case is the Wasserstein dissipation. 

This study is inspired by recent developments in the continuum modelling of multiple-species interacting particle systems with finite-size effects; for such systems many different modelling choices appear in the literature, raising the question how one can understand such choices in terms of  more microscopic models. 
The results of this paper give a clear answer to this question, albeit for the simpler one-dimensional hard-rod system. 
For this specific system this result provides a clear understanding of the value and interpretation of different modelling choices, while giving hints for more general systems.
\end{abstract}

\keywords{Steric interaction, volume exclusion, hard-sphere, hard-rod, large deviations, continuum limit, Brownian motion}

\maketitle

\section{Introduction}

\subsection{Continuum modelling of systems of interacting particles}
Systems of interacting particles can be observed in physics (e.g.\ gases, liquids, solutions~\cite{Ruelle69,Liggett06}), biology (e.g.\ populations of cells~\cite{TindallMainiPorterArmitage08-II}),  social sciences (e.g.\ animal swarms~\cite{Okubo86}), engineering (e.g.\ swarms of robots~\cite{BrambillaFerranteBirattariDorigo13}), and various other fields.  Such systems are routinely described with different types of models: \emph{particle-} or \emph{individual-based} models characterize the position and velocity of every single particle, while \emph{continuum} models characterize the behaviour of the system in terms of (often continuous) \emph{densities} or \emph{concentrations}. While particle-based models contain more information and may well be more accurate, continuum models are easier to analyze and less demanding to simulate, and there is a natural demand for continuum models that describe such systems in as much accuracy as possible. 

In this paper we focus on systems of particles for which  the \emph{finite size} of the particles has a prominent effect on the larger-scale, continuum-level behavior of the system. For such systems a wide range of continuum models has been postulated (see below) but very few of these continuum models have been rigorously justified, and  particularly the dynamics of these continuum models have rarely been rigorously justified. We present here a rigorous derivation of the continuum equation that describes such a class of  interacting particles, but restricted to one space dimension. The proof uses large-deviation theory, and this method identifies not only the limiting equation but also the gradient-flow structure of the limit. In this way it gives a rigorous derivation of the \emph{Variational Modelling} structure of the  limit.


\subsection{Sterically interacting particles}\label{sec:Sterically_interacting_particles}

In many applications, the finite size of the particles  can be witnessed in various ways, such as in the natural upper bound on the density of such particles, in the way particles `push away' other particles (\emph{cross-diffusion}, possibly leading to \emph{uphill diffusion}), and in the striking oscillations of ion densities near charged walls (see e.g.~\cite{GrootFaberVan-der-Eerden87,HyonFonsecaEisenbergLiu12,Gillespie15}).

Characterizing the macroscopic, continuum-level behaviour of such `sterically interacting' particles (from the Greek \foreignlanguage{greek}{stere\'os} for `hard, solid') is a major challenge, and various communities have addressed this challenge. In the mathematical  community, in one-dimensional systems the behaviour of single, `tagged' particles has been characterized~\cite{LebowitzPercus67,MonPercus02,LizanaAmbjornsson09}, the equilibrium statistical  properties of the ensemble were determined by Percus~\cite{Percus76},  and the dynamic continuum limit was first derived by Rost~\cite{Rost84}.
In higher dimensions cluster expansions have opened the door to accurate expansions of the free energy~\cite{JansenKonigMetzger15,Jansen15}. In a number of papers Bruna and Chapman~\cite{Bruna12TH,BrunaChapman12,BrunaChapman12a,BrunaChapman14} have given asymptotic expressions for the continuum-limit partial differential equation in the limit of low volume fraction. 
A  related line of research focuses on strongly interacting particle systems with \emph{soft} interaction; Spohn characterized the central-limit fluctuations in an infinite system of interacting Brownian particles~\cite{Spohn86}, and Varadhan proved the continuum limit as $n\to\infty$ for a system of $n$ interacting particles~\cite{Varadhan91,Uchiyama94} in one dimension.

In the chemical-engineering community there has been a strong interest in the case of finite-size particles that are \emph{charged}. For such systems the classical stationary Poisson-Boltzmann theory  (e.g.~\cite[Ch.~6]{PoonAndelman06}) and time-dependent Poisson-Nernst-Planck equations (e.g.~\cite{Rubinstein90}) both give unsatisfactory predictions, such as unphysically high concentrations near charged walls. Starting with the early work of Bikerman~\cite{Bikerman42} various authors have modified the static Poisson-Boltz\-mann theory by incorporating the entropy of solvent molecules, thus limiting the concentration of the 
ions~\cite{grimley1947general,grimley1950contact,dutta1950onthe,bagchi1950new,eigen1951theorie,wicke1952einfluss,eigen1954thermodynamics,dutta1954theory,redlich1955solutions,wiegel1993distribution,strating1993effects,strating1993distribution,kralj1994influence,kralj1996simple,BorukhovAndelmanOrland97,borukhov2000adsorption,bohinc2001thickness,bohinc2002charged,borukhov2004charge,kilic2007steric,wiegel2013physical}. The Boublik-Mansoori-Carnahan-Starling-Leland theory~\cite{Boublik70,MansooriCarnahanStarlingLeland-Jr71,DiCaprioBorkowskaStafiej03} further modifies this by adding higher-order concentration dependencies, and later works~\cite{Tresset08,liu2014poisson} generalize this to the case of ions of different sizes.  Other approaches include modelling the solvent as polarizable spheres~\cite{Lopez-GarciaHornoGrosse11}, and the addition to the free energy of convolution integrals with various kernels, such as Lennard-Jones-type kernels~\cite{EisenbergHyonLiu10,HyonFonsecaEisenbergLiu12,HsiehHyonLeeLinLiu15,gavish2018poisson} or step functions and their derivatives~\cite{Rosenfeld89,Roth10}. See~\cite{bazant2009towards} for further review and references.

Despite all this  activity, however, the main question for this paper is still open: Which continuum-level partial differential equation describes the evolution of systems of many finite-size particles, and what is the corresponding gradient-flow structure? Before describing the answer of this paper we first comment on the philosophy of \emph{Variational Modelling}, which underlies both this paper and some of the work in this area.

\subsection{Variational Modelling}
Many strongly damped continuum systems can be modelled by gradient flows; they are then fully characterized by a driving functional (e.g., a Gibbs free energy) and a dissipation mechanism that describes how the system dissipates its free energy.  
By choosing these two components one fully determines the model, and the model equations are readily derived  as an outcome of the these two inputs.
We call this way of working \emph{Variational Modelling}; recent examples can be found in e.g.~\cite{Ziegler83,Doi11,ArroyoWalaniTorres-SanchezKaurin18}, and the lecture notes~\cite{PeletierVarMod14TR} describe this modelling philosophy and its foundations in detail. 

The quality of a variational-modelling derivation rests on the quality of the two choices, the choice of the driving functional  and the choice of the dissipation (e.g., drift-diffusion or Wasserstein gradient flow).  Different combinations of choices, however, can lead to the same equation (see e.g.~\cite{PeletierRedigVafayi14} or~\cite[Eq.~(2.1)]{DondlFrenzelMielke18TR}).  Therefore, it is not possible to assess the quality of the independent modelling choices, nor to deduce that the combined choices are right, based on comparison of the model predictions to particle-based simulations, or to experimental data.  Accordingly, there is great importance in systematically determining the driving functional and dissipation mechanism from `first principles'. In recent years it has been discovered that not only the free energy, but also the dissipation mechanism can be rigorously deduced from an upscaling of the underlying particle system, by determining the large-deviation rate functional in the many-particle limit. In this way various free energies and dissipation mechanisms have been placed on a secure foundation~\cite{AdamsDirrPeletierZimmer13,MielkePeletierRenger14,PeletierRedigVafayi14,BonaschiPeletier16,MielkePeletierRenger16}.

In a series of works, Hyon, Horng, Lin, Liu, and Eisenberg applied a special case of the variational modelling approach, called `Energetic Variational Approach', to derive evolution equations for hard-sphere ions~\cite{horng2012pnp,lin2014new,hyon2011mathematical}.
However, the authors solely focused on the effect of the finite size on the free energy. The impact of the finite size on the dissipation mechanism, and therefore the dynamics,  was not considered, and implicitly taken to be the same as for zero-size particles.

Instead, in this work we develop a systematic derivation of both the driving functional and the dissipation mechanism for the system of this paper: a one-dimensional system of Brownian hard rods.

\subsection{The model: One-dimensional hard rods}
\label{subsec:model}
The system that we consider is a collection of $n$ hard rods of length $\alpha/n$, for $\alpha\in (0,1)$, that are free to move  along the real line~$\R$, except that they may not overlap. The position of each rod is given by its left-hand point~$Y_i^n$, i.e.\ the rod occupies the space $[Y_i^n,Y_i^n+\alpha/n)$. Since the rods can not overlap, the state space is the `swiss cheese' space 
\begin{equation}
\label{def:Omega-n}
\Omega_n := \Bigl\{y\in \R^n:\; \forall i,j, \ i\not=j,\; |y_i-y_j|\geq \alpha/n\Bigr\}.
\end{equation}
Note that the length $\alpha/n$ is scaled such that the total volume fraction of the rods is $O(1)$. 

The evolution of the rods is that of Brownian motion in a potential landscape with the non-overlap constraint, where we additionally allow for mean-field interaction between the particles. For this paper we choose as potentials an on-site potential $V$ and a two-particle interaction potential $W$, and both are assumed to be sufficiently smooth.

In the interior of $\Omega_n$ we therefore solve
\begin{equation}
\label{eq:SDE}
dY_i^n(t) = -V'(Y_i^n(t))\, dt - \frac1n \sum_{j=1}^n W'(Y_i^n(t)-Y_j^n(t))\, dt + dB_i(t), 
\end{equation}
where $B_i$ are independent one-dimensional standard Brownian motions. On the boundary $\partial\Omega_n$ we assume reflecting boundary conditions. 

\bigskip

This system has been studied before. For the case $V\equiv W\equiv 0$, Percus~\cite{Percus76} calculated various distribution functions for finite $n$. Also for $V\equiv W\equiv 0$, Rost~\cite{Rost84} proved that in the the $n\to\infty$ limit, the empirical measures 
\[
\hat \rho_n(t) := \frac1n\sum_{i=1}^n \delta_{Y_i^n(t)}
\]
converge almost surely (the \emph{continuum limit}) to a solution of the nonlinear parabolic equation
\[
\partial_t \rho = \frac12 \partial_y \frac{\partial_y \rho}{(1-\alpha \rho)^2} \qquad \text{on $\R$.}
\]
Bodnar and Velazguez generalized this convergence by allowing for $n$-dependent $W_n$ that shrinks to a Dirac delta function as $n\to\infty$, leading to an additional term $\partial_y (\rho\partial_y\rho)$ in the equation above~\cite{BodnarVelazquez05}. Bruna and Chapman studied the related case of fixed $n$ in the limit $\alpha\to 0$, in arbitrary dimensions, and calculated the approximate limit equation up to order $O(\alpha)$~\cite{BrunaChapman12,BrunaChapman12a,BrunaChapman14,Bruna12TH}.

Based on analogy with continuum limits in other interacting particle systems (see e.g.~\cite{Oelschlager84,DawsonGartner87}), one would expect that the continuum limit for the case of non-zero~$V$ and~$W$ is 
\begin{equation}
\label{eq:HL}
\partial_t \rho = \frac12 \partial_y \frac{\partial_y \rho}{(1-\alpha \rho)^2}
+ \partial_y \Bigl[\rho \partial_y \bigl( V + W*\rho\bigr)\Bigr].
\end{equation}
Here $(W*\rho)(y) = \int_\R W(y-y')\rho(dy')$ is the convolution of $W$ with $\rho$.

The aims of this paper are (a) to prove the limit equation above rigorously, and (b) show that it has a variational, gradient-flow structure that is generated by large deviations in a canonical way. 

%

\subsection{Main result I: Large deviations of the invariant measure}
\label{subsec:var-structure}

The first step in probing the variational structure of equation~\eqref{eq:HL}  is to derive the `free energy' that will drive the gradient-flow evolution. For reversible stochastic processes, such as the particle system $Y^n$, it is well known (see e.g.~\cite{MielkePeletierRenger14}) that this driving functional is given by the large-deviation rate functional of the invariant measure.

\medskip

Under our conditions on $V$ and $W$, the system of particles $Y^n$ has an invariant measure 
\begin{equation}
\label{def:InvMeasn}
\InvMeas_n := \frac1{\cZ_n} \exp \biggl[\,-2 \sum_{i=1}^n V(y_i) - \frac1{n} \sum_{i,j=1}^n  W(y_i-y_j)\biggr] \, \Lebesgue^n\Big|_{\Omega_n},
\end{equation}
where $\Lebesgue^n|_{\Omega_n}$ is $n$-dimensional Lebesgue measure restricted to $\Omega_n$, and $\mathcal Z_n$ is 
the normalization constant
\begin{equation}
\label{def:norm-const-Zn-intro}
\cZ_n := \int_{\Omega_n}\exp \biggl[\,-2 \sum_{i=1}^n V(y_i) - \frac1{n} \sum_{i,j=1}^n  W(y_i-y_j)\biggr] \, dy.
\end{equation}
Our first main result identifies the large-deviation behaviour of these invariant measures. In this paper, $\P(\R)$ is the space of probability measures on $\R$.

\medskip

\begin{theorem}[Large-deviation principle for the invariant measures]
\label{th:LDP-invmeas}
Assume that the functions $V$ and $W$ satisfy Assumption~\ref{ass:VW}. 
For each $n\in\N$, let $Y^n \in \R^n$ have law  $\InvMeas_n$, and let $\rho_n := \frac1n \sum_{i=1}^n\delta_{Y^n_i} \in\P(\R)$ be the corresponding empirical measure. Then the measures $\rho_n$ satisfy a large-deviation principle with good rate function  $2\hFreeEnergy$:
\begin{equation}
\label{def:LDP-static-formal}
\Prob(\rho_n\approx  \nu) \sim e^{-n2\hFreeEnergy(\nu)} \qquad \text{as }n\to\infty.
\end{equation}
Here $\hFreeEnergy:\P(\R)\to[0,\infty]$ is given by
\begin{equation}
\label{def:F}
\hFreeEnergy (\rho)
:= \begin{cases}
\ds \int_{\R} \rho \biggl[\frac12\log \rlap{$\ds\frac{\rho}{1-\alpha \rho} + V\biggr] + \frac12 \int_{\R}\int_{\spY} W(y-y')\rho(dy)\rho(dy') + c$}\\[4\jot]
\qquad\qquad&\text{if $\rho$ is Lebesgue-absolutely-continuous and $\rho(y)<1/\alpha$ a.e.},\\[2\jot]
+\infty &\text{otherwise}
\end{cases}
\end{equation}
The constant $c$ in the definition~\eqref{def:F} is chosen such that $\min \{\hFreeEnergy(\rho): \rho\in \P(\R)\} = 0$. 
\end{theorem}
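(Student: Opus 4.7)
The plan is to exploit the rod-to-point bijection alluded to in the abstract, reducing the LDP for $\InvMeas_n$ to a standard Sanov-type LDP for a mean-field Gibbs measure on $\R^n$ without a hard-core constraint, and then contracting back to the $Y$ variables. First, introduce the change of coordinates $T_n\colon \Omega_n \to \R^n$ that orders the rod positions as $Y_{(1)} < \cdots < Y_{(n)}$ and subtracts the accumulated rod length: $X_{(i)} := Y_{(i)} - (i-1)\alpha/n$. The map $T_n$ is a bijection onto the ordered chamber with unit Jacobian, so after symmetrising over the $n!$ orderings the law of the (unordered) $X$-tuple has density on $\R^n$ proportional to
\begin{equation*}
\exp\Bigl[\,-2\sum_i V\bigl(x_i+\tfrac{\alpha}{n}(R_i-1)\bigr)-\tfrac1n \sum_{i,j} W\bigl(x_i-x_j+\tfrac{\alpha}{n}(R_i-R_j)\bigr)\Bigr],
\end{equation*}
where $R_i\in\{1,\dots,n\}$ is the rank of $x_i$. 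The key observation is that the $Y$-empirical measure $\rho_n$ is the image of the $X$-empirical measure $\hat\rho_n := \frac1n \sum \delta_{X_i}$ under the deterministic map $\Phi\colon \mu \mapsto (T_\mu)_\# \mu$, where $T_\mu(x) := x + \alpha F_\mu(x)$ and $F_\mu$ is the CDF of $\mu$.

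Second, prove an LDP on $\P(\R)$ for $\hat\rho_n$. The $X$-particles carry no hard-core constraint, so up to the $O(\alpha/n)$ rank-dependent shifts in the energy this is the standard mean-field Gibbs problem
\begin{equation*}
\bbQ_n^{(0)} \propto \exp\Bigl[-2\sum V(x_i)-\tfrac1n \sum W(x_i-x_j)\Bigr]\,\Lebesgue^n,
\end{equation*}
for which Sanov's theorem combined with Varadhan's lemma (as in the references cited in \S\ref{subsec:var-structure}) yields an LDP with rate function
\begin{equation*}
\hat I(\mu) = \int_\R \rho_\mu \log \rho_\mu\,dx + 2\int_\R V\,d\mu + \int_\R\!\!\int_\R W(x-x')\,d\mu(x)\,d\mu(x') - c_X,
\end{equation*}
provided exponential tightness holds; this follows from the coercivity of $V$ in Assumption~\ref{ass:VW}. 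The rank-dependent shifts are absorbed by a perturbation argument: under the event $\hat\rho_n\approx\mu$ the empirical CDF $F_{\hat\rho_n}$ converges uniformly to $F_\mu$, so by smoothness of $V'$ and $W'$ the shifted energies differ from the unshifted ones by $o(1)$ uniformly in the particle locations, which is negligible on the LDP scale.

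Third, apply the contraction principle to the continuous map $\Phi$: since $\rho_n = \Phi(\hat\rho_n)$, the LDP for $\rho_n$ has rate function
\begin{equation*}
2\hFreeEnergy(\nu) = \inf\{\hat I(\mu) : \Phi(\mu) = \nu\}.
\end{equation*}
On densities $\nu$ with $\nu < 1/\alpha$ a.e., $\Phi^{-1}(\nu)$ is a singleton: the unique $\mu$ is the pushforward of $\nu$ under $y\mapsto y-\alpha F_\nu(y)$, for which $\rho_\mu = \nu/(1-\alpha\nu)$. Substituting this into $\hat I$ and using the change-of-variables identities $\int V\circ T_\mu\,d\mu = \int V\,d\nu$, $\int \log\rho_\mu\,d\mu = \int \log(\nu/(1-\alpha\nu))\,d\nu$, and the analogous double integral for $W$, turns $\hat I$ into exactly $2\hFreeEnergy(\nu)$ with the normalisation fixed so that $\min\hFreeEnergy=0$. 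Measures $\nu$ failing absolute continuity or the bound $\nu<1/\alpha$ must be shown to have no $\Phi$-preimage with finite $\hat I$, yielding $\hFreeEnergy(\nu) = +\infty$.

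The main obstacle I anticipate is the middle step: the energy in $X$-coordinates depends on the random empirical CDF $F_{\hat\rho_n}$, and one must replace it by the deterministic $F_\mu$ uniformly at the level of the large-deviation estimates. I would address this by an exponential sup-norm bound on $F_{\hat\rho_n}-F_\mu$ of Dvoretzky-Kiefer-Wolfowitz type, or more robustly by an approximate Varadhan-lemma argument treating the CDF-dependent potential as a continuous perturbation of a fixed mean-field energy. A secondary, more standard issue is the non-compactness of $\R$, which is handled by the exponential tightness provided by the coercivity of $V$ in Assumption~\ref{ass:VW}.
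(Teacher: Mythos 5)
Your overall strategy---compress with the rod-to-point bijection, prove an LDP for the compressed invariant measure, and transfer back via a contraction principle---is natural and does capture the mechanism the paper relies on (the compressed system replaces the singular hard-core interaction by a `mean-field-like' one). The paper itself, however, proves Theorem~\ref{th:LDP-invmeas} differently: it deduces the static LDP as a corollary of the pathwise large-deviation principle (Theorem~\ref{th:LDP-path-strongestversion}) by contracting along the evaluation map $\rho\mapsto\rho(0)$, and the static reference-measure machinery you envisage appears instead inside the proof of the pathwise result (Section~\ref{s:ldp-estimates}). Your proposal is therefore a genuinely different route in organisation, not just in notation.

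There is, however, a concrete error in your Step 2. After compression, the law of $X^n$ has density proportional to
\[
\exp\Bigl[-2\sum_i V\bigl(T_{\eta_n(x)}x_i\bigr) - \tfrac1n\sum_{i,j}W\bigl(T_{\eta_n(x)}x_i - T_{\eta_n(x)}x_j\bigr)\Bigr],
\]
i.e.\ the potentials are evaluated at the \emph{expanded} coordinates. The shift $T_{\eta_n(x)}x_i - x_i = \tfrac{\alpha}{n}(R_i-1)$ ranges up to $\alpha(1-1/n)$ and is therefore $O(\alpha)$, not ``negligible on the LDP scale.'' Your claimed rate function
\[
\hat I(\mu) = \int \rho_\mu\log\rho_\mu + 2\int V\,d\mu + \iint W(x-x')\,d\mu\,d\mu - c_X
\]
uses the \emph{unshifted} $V$ and $W$, which is wrong; the correct expression (compare~\eqref{char:FreeEnergy}) has $V(T_\mu x)$ and $W(T_\mu x - T_\mu x')$. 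This mistake is not self-correcting: your Step 3 relies on the identity $\int V\circ T_\mu\,d\mu = \int V\,d\nu$, which only applies if $\hat I$ contains $\int V\circ T_\mu\,d\mu$. As written, Step 3 does not follow from Step 2.

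Once the shift is retained, a second subtlety you only partially address becomes unavoidable. The rate function evaluated at $\mu$ then involves $T_\mu$, i.e.\ the reference measure $\bbQ^\mu(dx)\propto e^{-2V(T_\mu x)}dx$ depends on $\mu$ itself, and the dependence $\mu\mapsto F_\mu$ is \emph{not} narrowly continuous. This breaks the straightforward Sanov-plus-Varadhan route: the tilting functional is discontinuous, and the DKW bound you mention presupposes continuity of $F_\mu$, i.e.\ absolutely continuous $\mu$. The paper handles this with a two-case local comparison argument (Lemma~\ref{l:mean-field-LDP}) in which absolutely continuous $\nu$ are treated with a quantitative Poly\=a estimate (Lemma~\ref{l:Polya}) and the remaining $\nu$ (where the rate function is $+\infty$) need only a crude uniform bound. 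To repair your proposal you would need to (i) restore the $T_\mu$ shift in $\hat I$, and (ii) replace the Varadhan step by something of the Dai Pra--den Hollander / mean-field type that is robust to the discontinuity of $\mu\mapsto F_\mu$ and distinguishes absolutely continuous from atomic $\mu$.
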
%

\medskip

The functional $\hFreeEnergy$ is non-negative by definition;  our assumptions on $V$ and $W$  imply that~$\hFreeEnergy$ has at least one minimizer at value zero, and possibly more than one. 

The large-deviation principle~\eqref{def:LDP-static-formal} gives a characterization of the behaviour of the empirical measures $\rho_n$  that can be split into two parts:
\begin{enumerate}
\item With probability one, and along a subsequence, $\rho_n$ converges to a minimizer of $\hFreeEnergy$. (This can be proved using the Borel-Cantelli lemma; see e.g.~\cite[Th.~A.2]{PeletierSchlottke19TR}.)
\item The event that $\rho_n\approx\nu$ where $\nu$ is \emph{not} a minimizer of $\hFreeEnergy$ becomes increasingly unlikely as $n$ tends to infinity; in fact, it is  exponentially unlikely in $n$, with a prefactor $2\hFreeEnergy(\nu)$ that depends on $\nu$. 
Large values of $\hFreeEnergy(\nu)$ correspond to `even more unlikely' behaviour of $\rho_n$ than smaller values. 
\end{enumerate}

\subsection{Gradient flows}
\label{sec:intro-GF-LDP}

We now turn to the evolution. A gradient-flow structure is defined by a state space, a driving functional, and a dissipation metric~\cite{PeletierVarMod14TR,Mielke16}. The driving functional was identified above as $\hFreeEnergy$; the large-deviation principle that we prove below will indicate that the state space for this gradient-flow structure is the metric space given by the set $\P_2(\R)$ of probability measures of finite second moment (i.e.\ $\int_\R y^2\rho(dy)<\infty$) equipped with the  Wasserstein metric. 

We describe the  Wasserstein metric and Wasserstein gradient-flow structures in more detail in Section~\ref{s:prelims}; here we only summarize a few aspects. 
The Wasserstein distance $W_2$ is a measure of distance between two probability measures on physical space. When modelling particles embedded in a viscous fluid, the appearance of the Wasserstein distance in gradient-flow structures can be traced back to the  drag force experienced by the particles when moving through the  fluid. This is illustrated by the property that if $n$ particles are dragged from positions $y_1,\dots,y_n$ to positions $\overline y_1,\dots,\overline y_n$ in time $\tau$, then the minimal viscous dissipation as a result of this motion is given by the Wasserstein distance between the two empirical measures:
\begin{equation}
\label{def:Wasserstein-intro}
\frac {cn} \tau \,W_2\biggl(\frac1n\sum_{i=1}^n \delta_{y_i} , \frac1n \sum_{i=1}^n \delta_{\overline y_i}\biggr)^2 = 
\frac c\tau \min \biggl\{ \sum_{i=1}^n \bigl| y_{\sigma(i)} - \overline y_i\bigr|^2: \; \sigma\text{ permutation of }1,\dots,n\; \biggr\}.
\end{equation}
The drag-force parameter $c$  depends on the size of the particles and the viscosity of the fluid. This connection between the Wasserstein distance and viscous dissipation is described in depth in the lecture notes~\cite[Ch.~5]{PeletierVarMod14TR}. 

For the functional $\hFreeEnergy$ one can formally define a `Wasserstein gradient' $\grad_W \hFreeEnergy(\rho)$ for each~$\rho$ as a real-valued function on $\R$ given by
\begin{equation}
\label{def:gradW}
\grad_W \hFreeEnergy(\rho)(y) := -\partial_y \bigl[ \rho \partial_y \xi\bigr](y), \qquad 
\xi := \frac{\delta \hFreeEnergy}{\delta\rho} = \frac12 \log \frac{\rho}{1-\alpha\rho} + \frac{1}{2(1-\alpha\rho)} + V + W*\rho.
\end{equation}
Therefore~\eqref{eq:HL} can be rewritten abstractly as the Wasserstein gradient flow of $\hFreeEnergy$, 
\begin{equation}
\label{eq:GF}
\partial_t \rho =  -\grad_W\hFreeEnergy(\rho).
\end{equation}

\medskip

In this context there are two natural solution concepts for equation~\eqref{eq:HL}. The first is the more classical, distributional defintion.
\begin{definition}[Distributional solutions of~\eqref{eq:HL}]
\label{def:distr-solutions}
A Lebesgue measurable  function $\rho:[0,T]\to \P(\R)$ is a \emph{distributional} solution of~\eqref{eq:HL} if it is a solution in the sense of distributions on $(0,T)\times \R$ of the (slightly rewritten) equation
\begin{equation}
\label{eq:HL-distributional}
\partial_t \rho = \frac1{2\alpha} \partial_{yy}\Bigl(\frac 1{1-\alpha\rho} \Bigr)
  + \partial_y \Bigl [ \rho\partial_y \bigl( V + W*\rho\bigr)\Bigr].
\end{equation}
\end{definition}

We also use a second solution concept that is more adapted to the gradient-flow structure. The monograph~\cite{AmbrosioGigliSavare08} formulates  a number of alternative solution concepts for the general idea of a `metric-space gradient flow'; in this paper we focus on the following one, called Curve of Maximal Slope in~\cite{AmbrosioGigliSavare08} and Energy-Dissipation Principle in~\cite{Mielke16a}, and attributed originally to De Giorgi~\cite{DeGiorgiMarinoTosques80}. 

\begin{definition}[Gradient-flow solutions in the Energy-Dissipation Principle formulation]
\label{def:metric-GFs}
A curve $\rho\in AC^2([0,T];\P_2(\R))$ with $\hFreeEnergy(\rho(0))<\infty$ is called a \emph{solution of the gradient flow} of~$\hFreeEnergy$ if for all $t\in[0,T]$,
\begin{equation}
\label{eq:EDP}
0= \hFreeEnergy(\rho(t)) - \hFreeEnergy(\rho(0)) + \frac12 \int_0^t \Bigl[ |\dot \rho|^2(s) + |\partial\hFreeEnergy|^2(\rho(s))\Bigr]\, ds.
\end{equation}
Here
\begin{itemize}
\item $AC^2([0,T];\P_2(\R))$ is the space of absolutely continuous functions $\rho:[0,T]\to \P_2(\R)$ (see Definition~\ref{def:AC2});
\item The metric derivative $|\dot \rho|$ of a curve $\rho\in AC^2([0,T];\P_2(\R))$ is defined as
\begin{equation}
\label{def:metric-derivative}
|\dot \rho|(t):= \lim_{h\to0} \frac {W_2(\rho(t+h),\rho(t))}h, \qquad \text{for } 0<t<T;
\end{equation}
\item The  local slope is 
\begin{equation}
\label{def:local-slope-abstract}
|\partial \hFreeEnergy|(\rho) := \limsup_{\nu\to \rho} \frac{(\hFreeEnergy(\rho)-\hFreeEnergy(\nu))_+}{W_2(\rho,\nu)}.
\end{equation}
\end{itemize}
\end{definition}

One can calculate that the metric velocity $|\dot \rho|(t)$ and the metric slope $|\partial\hFreeEnergy|(\rho)$ are formally given by the expressions (see Sections~\ref{ss:Wasserstein} and~\ref{s:Wasserstein-functionals}):
\begin{align*}
|\dot \rho|^2(t) &:= \int_\R v^2(t,y)\, \rho(t,dy), \qquad 
  v(t,y) := -\frac1{\rho(t,d y)}\int_{-\infty}^y \partial_t \rho(t,d\tilde y),\\
|\partial\hFreeEnergy|^2(\rho) &:=  \int_\R |\partial_y \xi(y)|^2 \rho(dy), \qquad \text{with $\xi$ given in~\eqref{def:gradW}.}
\end{align*}
Each gradient-flow solution also is a distributional solution, and for given initial datum $\rho(0)$ gradient-flow solutions are unique (see Lemma~\ref{l:GF-sol-is-distr-sol}).

\medskip
The definition~\eqref{eq:EDP} is inspired by the smooth Hilbert-space case, in which $|\cdot|$ is a Hilbert norm, and an expression of the form of equation~\eqref{eq:EDP} for a curve $x$ and a functional $\Phi$ in Hilbert space can be rewritten as
\begin{equation}
\label{eq:EDP2}
0=  \frac12 \int_0^t  |\dot x(s) + D\Phi(x(s))|^2\, ds,
\end{equation}
where $D\Phi(x)$ is the Hilbert gradient (Riesz representative) of $\Phi$ at $x$. The right-hand side in~\eqref{eq:EDP2} is non-negative, its minimal value is zero, and this minimal value is achieved when $\dot x(t) = -D \Phi(x(t))$ for almost all $t$. For metric spaces, under some  conditions, the same is true for~\eqref{eq:EDP}: the right-hand side is non-negative, its minimal value is zero, and this minimal value is achieved in exactly one curve $x$ among all curves $\tilde x$ with given initial datum $\tilde x(0)$. In other words, {equations} such as~\eqref{eq:EDP} and~\eqref{eq:EDP2} both define a flow in the state space, and this is what we call a `gradient flow' in this paper.

In recent years it has become clear that expressions of the type of~\eqref{eq:EDP} and~\eqref{eq:EDP2} arise naturally as large-deviation rate functions associated with stochastic processes, typically in a many-particle limit; we describe this in detail below for the system of this paper, and the general scheme can be found in~\cite[Prop.~3.7]{MielkePeletierRenger14}. Through such connections, gradient-flow structures of various partial-differential equations can be understood as a natural consequence of the upscaling from a more microscopic system of which the PDE is a scaling limit~\cite{AdamsDirrPeletierZimmer11,AdamsDirrPeletierZimmer13,DuongPeletierZimmer13,DuongLaschosRenger13,MielkePeletierRenger14,PeletierRedigVafayi14,ErbarMaasRenger15}. In addition, this connection provides a natural way to derive and understand new gradient-flow structures for equations in the long term. In this paper we use this method to investigate the gradient-flow structure that arises in this simple one-dimensional, hard-rod system.

\subsection{Main result II: Large deviations of the stochastic evolutions}

The second main theorem of this paper then describes the large-deviation behaviour of the empirical measures $\rho_n(t) = \frac1n \sum_{i=1}^n \delta_{Y^n_i(t)}$ as functions of time, i.e.\ in the state space $C\bigl([0,T];\P(\R)\bigr)$. 

The choice of initial data for the process $Y_i^n$ requires some care. From the point of view of  equation~\eqref{eq:HL} we would like to fix a measure $\rho^\circ\in \P(\R)$ and then select initial data $Y^n_i(0)$ such that the empirical measures $\frac1n\sum_{i=1}^n \delta_{Y^n_i(0)}$ converge to $\rho^\circ$ as $n\to\infty$.

However, not all $\rho^\circ\in \P(\R)$ are admissible, since initial data for~\eqref{eq:HL} should have Lebesgue density bounded by $1/\alpha$. This is a natural consequence of the fact that each particle occupies a section of length $\alpha/n$, and it is also visible in the degeneration of the denominators in~\eqref{eq:HL} and~\eqref{def:F}.

Given some $\rho^\circ$ satisfying this restriction, one might try to draw initial data $Y_i^n(0)$ i.i.d.\ from~$\rho^\circ$, since then with probability one we have $ n^{-1}\sum_{i=1}^n \delta_{Y_i^n(0)}\weakto \rho^\circ$. This is still problematic, since the strong interaction between the rods implies that the initial data for $Y_i^n$ can never be chosen independently. Instead, in the theorem below, we choose initial data for the~$Y^n_i$ by modifying a version of the invariant measure $\InvMeas_n$ instead.  

\medskip
Let $f\in C_b(\R)$, and define the tilted, `$W=0$' invariant measure $\InvMeasf_n\in\P(\R)$ by
\begin{equation}
\label{def:InvMeasf}
\InvMeasf_n(dy) := \frac1{\mathcal Z_n^f} \exp \biggl[\,-\sum_{i=1}^n f(y_i) - 2 \sum_{i=1}^n V(y_i) \biggr] \, \Lebesgue^n\Big|_{\Omega_n}(dy).
\end{equation}
Under this measure the particles are i.i.d.
Also define the 
tilted free energy
\begin{equation}
\label{def:Ff}
\hFreeEnergy^f(\rho) := \begin{cases}
\ds\frac12 \int_{\R} \rho \log \rlap{$\ds\frac{\rho}{1-\alpha \rho} + \int_{\R}\Bigl[\frac12 f +V\Bigr]\rho \; + \;  C_f$}\\[4\jot]
\qquad\qquad&\text{if $\rho$ is Lebesgue-absolutely-continuous and $\rho(y)<1/\alpha$ a.e.},\\[2\jot]
+\infty &\text{otherwise}
\end{cases}
\end{equation}
where the constant $C_f$ is chosen such that $\inf \hFreeEnergy^f=0$. The functional $\hFreeEnergy^f$ is strictly convex and coercive, and we write $\rho^{\circ,f}$ for the unique minimizer of $\hFreeEnergy^f$:
\[
\rho^{\circ,f} = \argmin\limits_{ \P(\R)}\ \hFreeEnergy^f.
\]
(It is not hard to verify that any $\rho$ can be written this way, provided it satisfies $\rho<1/\alpha$ a.e.\ and $\log (\rho/e^{-2V})\in C_b(\R)$.)

\begin{theorem}[Large-deviation principle on path space]
\label{th:LDP-path}
Assume that $V,W$ satisfy Assumption~\ref{ass:VW}. 
For each $n$, let the particle system $t\mapsto Y^n(t)\in\R^n$ be given by~\eqref{eq:SDE}, with initial positions drawn from the tilted invariant measure $\InvMeasf_n$.

The random evolving empirical measures $\rho_n(t) =\frac1n\sum_{i=1}^n \delta_{Y^n_i(t)} $ then satisfy a large-deviation principle on $C\bigl([0,T];\P(\R)\bigr)$ with good rate function $\hat I^f$:
\[
\Prob\Bigl(\rho_n|_{t\in[0,T]} \,\approx \,\nu|_{t\in[0,T]}\Bigr) \sim 
e^{-n\hat I^f(\nu)}\qquad\text{as }n\to\infty.
\]

If in addition $\rho$ satisfies  $\rho(0)\in \P_2(\R)$ and  $\hFreeEnergy(\rho(0)) + \hat I^f(\rho) < \infty$, then we have $\rho\in C([0,T];\P_2(\R))$ and $\hat I^f(\rho)$ can be characterized as 
\begin{align}
\hat I^f(\rho) := 2\, &\hFreeEnergy^f(\rho(0)) + \hFreeEnergy( \rho(T))-  \hFreeEnergy(\rho(0))
+\frac12\int_0^T |\dot {\rho}|^2(t)\, dt  
+ \frac12\int_0^T |\partial \hFreeEnergy|^2(\rho(t))\, dt.
\label{def:RF-ind-part-intro}
\end{align}
Here $|\dot {\rho}|$ and $|\partial \hFreeEnergy|$ are the metric derivative and the local slope defined in Definition~\ref{def:metric-GFs}, for the Wasserstein metric space $\mathcal X=(\P_2(\R),W_2)$ (see Section~\ref{ss:Wasserstein}).
\end{theorem}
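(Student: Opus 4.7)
The central device is the hard-rod/point-particle bijection announced in the abstract. Ordering the rod positions as $Y^n_{(1)}<\cdots<Y^n_{(n)}$ and setting $X^n_i:=Y^n_{(i)}-(i-1)\alpha/n$ produces a bijection from the ordered swiss-cheese domain onto the closed Weyl chamber $\{x_1\le\cdots\le x_n\}$. Reflecting Brownian motion on $\Omega_n$ maps, under this bijection, to Brownian motion with elastic collisions on the chamber, which for identical Brownian particles is pathwise equivalent to relabelling and passing through. Hence the unordered $X^n_i$ evolve as $n$ Brownian particles subject only to mean-field drifts, where the original $V$ and $W$ are transformed by inserting the relation $Y=X+\alpha F_{\sigma_n}(X)$ for $\sigma_n:=\frac{1}{n}\sum\delta_{X^n_i}$. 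At the level of empirical measures, $\rho_n$ and $\sigma_n$ are related by a homeomorphism characterised on distribution functions by $F_\sigma(y-\alpha F_\rho(y))=F_\rho(y)$, which maps $\P(\R)$ onto $\{\rho\in\P(\R):\rho\le 1/\alpha\}$.

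\textbf{Path-space LDP in contracted variables.} In contracted variables the system is a McKean--Vlasov-type diffusion without volume exclusion, so its empirical measures $\sigma_n(\cdot)$ satisfy a Dawson--G\"artner path LDP on $C([0,T];\P(\R))$ conditional on the initial datum, with the usual weighted negative-Sobolev action rate
\[
I^\sigma_{\mathrm{dyn}}(\sigma)=\tfrac12\int_0^T\bigl\|\partial_t\sigma-L^\star_\sigma\sigma\bigr\|^2_{-1,\sigma}\,dt,
\]
where $L^\star_\sigma$ is the adjoint McKean generator in the contracted coordinates. For the initial distribution, a tilted version of Theorem~\ref{th:LDP-invmeas}, obtained by applying Varadhan's lemma to the factor $\exp[-\sum f(y_i)]$ in $\InvMeasf_n$, yields the static LDP with rate $2\hFreeEnergy^f(\rho(0))$. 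Combining the two via the Markov property, and transferring through the bijection by the contraction principle, produces the claimed path LDP for $\rho_n$ with good rate $\hat I^f(\rho)=2\hFreeEnergy^f(\rho(0))+I_{\mathrm{dyn}}(\rho)$, where $I_{\mathrm{dyn}}$ is the pushforward of $I^\sigma_{\mathrm{dyn}}$.

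\textbf{Reduction to the EDP form.} Under the bijection, the limit McKean equation in $\sigma$ translates to~\eqref{eq:HL} in $\rho$, i.e.\ $\partial_t\rho=-\grad_W\hFreeEnergy(\rho)$ with $\grad_W\hFreeEnergy$ as in~\eqref{def:gradW}. The Benamou--Brenier formula identifies $\|\partial_t\rho\|^2_{-1,\rho}$ with the squared Wasserstein metric derivative $|\dot\rho|^2$, while a direct computation of the Wasserstein slope of $\hFreeEnergy$ gives $\|\grad_W\hFreeEnergy(\rho)\|^2_{-1,\rho}=|\partial\hFreeEnergy|^2(\rho)=\int|\partial_y\xi|^2\rho\,dy$ for $\xi$ as in~\eqref{def:gradW}. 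Expanding
\[
I_{\mathrm{dyn}}(\rho)=\tfrac12\int_0^T\bigl[|\dot\rho|^2+|\partial\hFreeEnergy|^2(\rho)\bigr]\,dt-\int_0^T\langle\partial_t\rho,\grad_W\hFreeEnergy(\rho)\rangle_{-1,\rho}\,dt,
\]
and applying the chain rule $\frac{d}{dt}\hFreeEnergy(\rho(t))=\langle\partial_t\rho,\grad_W\hFreeEnergy\rangle_{-1,\rho}$ along absolutely continuous curves of finite rate, the cross term integrates to $\hFreeEnergy(\rho(T))-\hFreeEnergy(\rho(0))$, giving~\eqref{def:RF-ind-part-intro}. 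The inequality $W_2(\rho(s),\rho(t))^2\le|t-s|\int_s^t|\dot\rho|^2$, combined with $\rho(0)\in\P_2(\R)$, upgrades $\rho$ to $C([0,T];\P_2(\R))$.

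\textbf{Main obstacle.} The delicate point is twofold. First, the contracted-variable drifts depend on the empirical distribution function $F_{\sigma_n}$, which is non-smooth; making the Dawson--G\"artner LDP work for this non-standard, discontinuously-coupled mean-field system --- equivalently, controlling the $O(1/n)$ discretisation errors so as to establish exponential equivalence with a smooth reference process --- is the principal technical challenge. Second, the chain rule and the slope identity must be validated right up to the hard-rod ceiling $\rho=1/\alpha$: the singularity of $\xi$ forces one to show that finiteness of $\hat I^f(\rho)$ prevents $\rho$ from saturating $1/\alpha$ on a set of positive space-time measure, in a sufficiently quantitative way to justify the integration by parts underlying the EDP rewrite. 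These are precisely the two places where the hard-rod geometry parts company with standard mean-field diffusion theory.
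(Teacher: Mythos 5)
Your overall architecture is the right one and matches the paper's strategy in outline: compress the rods to point particles, prove a path-space LDP for the compressed (non-excluded) system, transfer it back via the isometry, and then massage the rate function into the entropy--dissipation form. The Wasserstein-metric bookkeeping in your ``Reduction to the EDP form'' paragraph is also broadly the right calculation. However, the proposal has a genuine gap exactly where you flag the ``main obstacle,'' and it is not resolved.

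The compressed drift $b(x,\mu)=-V'(T_\mu x)-\int W'(T_\mu x-T_\mu x')\,\mu(dx')$ depends on $\mu$ through the cumulative distribution function $F_\mu$, hence it is \emph{discontinuous} in the empirical measure (and in fact discontinuous in each particle position at collision points). A ``Dawson--G\"artner path LDP for a McKean--Vlasov diffusion'' does not apply here: those results, like the Dai~Pra--den~Hollander/Varadhan-lemma approach, require the drift to be a bounded continuous (typically Lipschitz) functional of the empirical measure. You acknowledge this but propose to fix it by ``controlling the $O(1/n)$ discretisation errors so as to establish exponential equivalence with a smooth reference process.'' This is not justified in your write-up, and it is not obviously true: the discontinuity in the drift is $O(1)$ in amplitude (it is a jump of size of order $\alpha V''$ whenever two particles cross), so it is not simply a small discretisation effect, and exponential equivalence would require showing that the accumulated effect along paths is negligible at speed $n$. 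The paper avoids this head-on by invoking the result of Hoeksema--Maurelli--Holding--Tse (Theorem~\ref{th:JasperMarioOliver}, cf.\ \cite{HoeksemaMaurelliHoldingTse20TR}), which is precisely a Sanov--Varadhan-type theorem engineered to accommodate this rank-dependent, discontinuous interaction. Without that (or an equivalent dedicated argument), the central dynamic LDP in contracted variables is unproved.

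A second, more minor, gap concerns the handling of the initial distribution. You propose to ``combine the two via the Markov property,'' treating the initial LDP as a separate ingredient. But in the compressed variables, even the $W=0$ invariant measure $\InvMeasWiszero_n$ is \emph{not} a product measure (its compressed density involves $T_{\eta_n(x)}$, coupling all particles), so the i.i.d.\ Sanov-type input is not directly available for it. The paper has to work harder here: it introduces an auxiliary $Z^n$ system with genuinely i.i.d.\ initial data drawn from $\bbQ^{\nu}$, establishes quantitative closeness of the Radon--Nikodym derivative $d\widetilde\bbP_n/d\bbQ_n^\nu$ in $d_{BL}$-balls (Section~\ref{s:ldp-estimates}, using a quantitative Poly\=a lemma), and then applies the mean-field localization Lemma~\ref{l:mean-field-LDP}. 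This is a nontrivial technical step that your ``combine via Markov'' summary passes over, and it is the reason the rate function acquires the mean-field form $\mathfrak I_{\bbQ^{\mu_0}}(\mu)+\gamma(\mu_0)$ rather than a naive sum of static and dynamic terms.

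Finally, the EDP rewrite requires more than the formal chain rule: one must first show that finiteness of the rate implies $\mu\in AC^2$, that $\mu_t$ is Lebesgue-absolutely-continuous with finite Fisher information $\int_0^T\!\int|\partial_x\mu_t|^2/\mu_t$, and that the subdifferential of $\FreeEnergy$ is identified with $\partial_x\mu/2\mu-b(\cdot,\mu)$. The paper does this in Lemmas~\ref{l:char-RF-path}--\ref{lemma:char-I-rho-hrho} via the Cattiaux--L\'eonard representation of $\mathfrak I_\nu$ and a dual bound on the Fisher information. Your proposal asserts the chain rule and the slope identity but does not supply the a priori regularity that makes them legitimate; this is exactly the second obstacle you identify, and it remains open in your plan.

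In short: right scaffolding, but the dynamic LDP for the discontinuously-coupled compressed system --- the very heart of the argument --- is asserted rather than proved, and the proposed workaround would need substantial justification that is not given.
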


\subsection{Consequences: the limit equation as a Wasserstein gradient flow}

The large-deviation rate functional $\hat I^f$ in~\eqref{def:RF-ind-part-intro} can be decomposed as 
\[
\hat I^f(\rho) =  2\, \hFreeEnergy^f(\rho(0)) 
+ \EDP[\hFreeEnergy,W_2](\rho),
\]
where $\EDP[\hFreeEnergy,W_2](\rho)$ is shorthand for the right-hand side in the gradient-flow definition in~\eqref{eq:EDP}, with driving functional $\hFreeEnergy$ and dissipation metric $W_2$.  
Both terms are non-negative, and they represent different aspects of the large-deviation behaviour of the sequence of particle systems~$Y^n$.

The first term, $2 \hFreeEnergy^f(\rho(0)) $, characterizes the probability of deviations of the initial empirical measure $\rho_n(0) = \frac1n \sum_{i=1}^n \delta_{Y^n_i(0)}$ from the minimizer $\rho^{\circ,f}$ of $\hFreeEnergy^f$. The second term $\EDP[\hFreeEnergy,W_2](\rho)$ measures deviations of the time course $t\mapsto \rho_n(t)$ from `being a solution of the gradient flow~\eqref{eq:HL}' (or~\eqref{eq:GF}). For minimizers both terms are zero, implying the following

\begin{corollary}
Minimizers $\rho$ of the rate function $\hat I^f$ are solutions of the Wasserstein gradient flow equation~\eqref{eq:HL} (in the gradient-flow sense), with initial datum $\rho(0) = \rho^{\circ,f}$. Therefore minimizers of $\hat I^f$ are unique.
\end{corollary}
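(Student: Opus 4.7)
The plan is to observe that the rate function splits cleanly as
\[
\hat I^f(\rho) \;=\; 2\hFreeEnergy^f(\rho(0)) \;+\; \EDP[\hFreeEnergy,W_2](\rho),
\]
with both summands manifestly non-negative, and then to read off the corollary by showing that minimality forces each summand to vanish separately. The first summand is non-negative by the choice of $C_f$ in~\eqref{def:Ff}, and since $\hFreeEnergy^f$ is strictly convex with unique minimizer $\rho^{\circ,f}$, it vanishes exactly when $\rho(0)=\rho^{\circ,f}$.

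For the second summand I would invoke the standard metric-space chain rule for $\hFreeEnergy$ along absolutely continuous curves in $(\P_2(\R),W_2)$,
\[
\hFreeEnergy(\rho(T)) - \hFreeEnergy(\rho(0)) \;\geq\; -\int_0^T |\partial\hFreeEnergy|(\rho(t))\,|\dot\rho|(t)\,dt,
\]
and then apply Young's inequality $ab\leq \tfrac12 a^2 + \tfrac12 b^2$ pointwise in $t$ to deduce that $\EDP[\hFreeEnergy,W_2](\rho)\geq 0$. Equality holds if and only if both the chain rule is an equality and $|\partial\hFreeEnergy|(\rho(t))=|\dot\rho|(t)$ for a.e.\ $t$; together these are exactly the Energy--Dissipation Principle~\eqref{eq:EDP} of Definition~\ref{def:metric-GFs}. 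Consequently, any minimizer of $\hat I^f$ must satisfy $\rho(0)=\rho^{\circ,f}$ \emph{and} be a gradient-flow solution in the sense of Definition~\ref{def:metric-GFs}. Uniqueness of the minimizer then follows at once, because the initial datum is forced and, as noted after Definition~\ref{def:metric-GFs}, gradient-flow solutions are uniquely determined by their initial datum (Lemma~\ref{l:GF-sol-is-distr-sol}).

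The main obstacle is not the algebraic manipulation above but the chain-rule inequality itself: one must verify that $|\partial\hFreeEnergy|$ is a strong upper gradient for $\hFreeEnergy$ on $(\P_2(\R),W_2)$. For the functional~\eqref{def:F}---a Boltzmann-type entropy modified by the hard-rod denominator, plus a linear external-potential term in $V$ and a quadratic convolution term in $W$---this reduces to checking (geodesic) semi-convexity and the requisite lower-semicontinuity/absolute-continuity properties of $t\mapsto \hFreeEnergy(\rho(t))$. Under the smoothness assumptions on $V$ and $W$ this is the kind of technical input that should already be in hand from Section~\ref{s:Wasserstein-functionals}; once it is, the corollary is a one-line consequence of non-negativity of $\hFreeEnergy^f$ and the Young inequality.
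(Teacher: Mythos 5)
Your proposal is correct and follows essentially the same route the paper takes: the paper simply decomposes $\hat I^f(\rho)=2\hFreeEnergy^f(\rho(0))+\EDP[\hFreeEnergy,W_2](\rho)$, observes both summands are non-negative, and reads off the corollary, with uniqueness coming from Lemma~\ref{l:GF-sol-is-distr-sol}. You have merely spelled out the details the paper leaves implicit — why $\EDP[\hFreeEnergy,W_2]\geq 0$ via the chain-rule (strong upper gradient) inequality and Young's inequality, and why equality forces the EDP at every intermediate time — relying on the $\lambda$-convexity and lower semicontinuity of $\hFreeEnergy$ established in Lemma~\ref{lemma:properties-of-the-functionals}, exactly as the paper intends.
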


Minimizers of $\hat I^f$ describe the typical behaviour of empirical measures $\rho_n$, by the Borel-Cantelli argument that was already mentioned above:
\begin{corollary}
\label{cor:convergence}
The curve of empirical measures $t\mapsto \rho_{n}(t)$ converges almost surely in $C([0,T];\P(\R))$  to a (unique) solution $\rho$ of~\eqref{eq:HL} with initial datum $\rho(0)=\rho^{\circ,f}$.
\end{corollary}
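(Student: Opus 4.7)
The plan is to deduce almost-sure convergence from the large-deviation principle of Theorem~\ref{th:LDP-path} combined with uniqueness of minimizers of $\hat I^f$ (the previous corollary), via the Borel--Cantelli lemma. Write $\rho^*$ for the unique minimizer of $\hat I^f$, i.e.\ the Wasserstein gradient-flow solution of \eqref{eq:HL} with initial datum $\rho^{\circ,f}$; in particular $\hat I^f(\rho^*)=0$. Since $\R$ is Polish, $\P(\R)$ equipped with narrow convergence is metrizable and Polish, and consequently so is $C([0,T];\P(\R))$. Hence there exists a countable decreasing base $\{U_k\}_{k\in\N}$ of open neighborhoods of $\rho^*$ with $\bigcap_k U_k=\{\rho^*\}$.

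The first key step is to show, for each fixed $k$, that $c_k:=\inf_{U_k^c}\hat I^f>0$. Suppose for contradiction $c_k=0$; then there is a sequence $\nu^{(m)}\in U_k^c$ with $\hat I^f(\nu^{(m)})\to 0$. Because $\hat I^f$ is a \emph{good} rate function, the sublevel set $\{\hat I^f\le 1\}$ is compact in $C([0,T];\P(\R))$, so along a subsequence $\nu^{(m)}$ converges to some $\nu^\star$ which, by the closedness of $U_k^c$ and the lower semicontinuity of $\hat I^f$, satisfies $\nu^\star\in U_k^c$ and $\hat I^f(\nu^\star)=0$. Uniqueness of the minimizer forces $\nu^\star=\rho^*\notin U_k^c$, a contradiction. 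The large-deviation upper bound applied to the closed set $U_k^c$ then gives
\[
\limsup_{n\to\infty}\frac1n\log\Prob\bigl(\rho_n\in U_k^c\bigr)\le -c_k,
\]
so that $\Prob(\rho_n\in U_k^c)\le e^{-nc_k/2}$ for all sufficiently large $n$. These probabilities are summable in $n$, and Borel--Cantelli yields that, almost surely, $\rho_n\in U_k$ for all but finitely many $n$.

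Taking the countable intersection over $k\in\N$ of these full-probability events, we conclude that almost surely $\rho_n$ enters every $U_k$ eventually, which is precisely the statement $\rho_n\to\rho^*$ in $C([0,T];\P(\R))$. The main technical subtlety is the strict positivity of $c_k$: it relies on both the goodness of $\hat I^f$ (compactness of sublevel sets, which is part of Theorem~\ref{th:LDP-path}) and uniqueness of the minimizer (the previous corollary, itself a consequence of the gradient-flow characterization of zeros of $\hat I^f$ and uniqueness of Energy--Dissipation-Principle solutions given in Lemma~\ref{l:GF-sol-is-distr-sol}). Once these are available, the Borel--Cantelli step and the countable assembly are routine.
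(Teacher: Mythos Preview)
Your argument is correct and is precisely the Borel--Cantelli argument the paper invokes (it does not spell out a proof but refers to this standard method, citing~\cite[Th.~A.2]{PeletierSchlottke19TR}). The key ingredients you identify---goodness of $\hat I^f$ from Theorem~\ref{th:LDP-path} and uniqueness of its minimizer from the preceding corollary---are exactly what the paper relies on.
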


Although the $\lambda$-convexity of $\hFreeEnergy$ already guarantees existence of gradient-flow solutions by~\cite{AmbrosioGigliSavare08}, Corollary~\ref{cor:convergence} trivially gives the same:
\begin{corollary}
\label{cor:existence-of-solution}
Equation~\eqref{eq:HL} with initial datum {$\rho^{\circ,f}$} has a gradient-flow solution.
\end{corollary}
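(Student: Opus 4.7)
Existence is essentially a free corollary of the preceding statements; my plan is to extract it from the large-deviation machinery just developed rather than reprove it via the minimising-movement route of~\cite{AmbrosioGigliSavare08} mentioned in the statement itself.

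The starting observation is that $\hat I^f\geq 0$. From~\eqref{def:RF-ind-part-intro} one has the decomposition $\hat I^f(\rho)=2\hFreeEnergy^f(\rho(0))+\EDP[\hFreeEnergy,W_2](\rho)$, and both summands are non-negative: the first because the constant $C_f$ was chosen so that $\inf\hFreeEnergy^f=0$, the second by the sum-of-squares observation recalled after~\eqref{eq:EDP2}. Since $\hat I^f$ is a \emph{good} rate function by Theorem~\ref{th:LDP-path}, its sublevel sets are compact in $C([0,T];\P(\R))$, so $\hat I^f$ attains its infimum as soon as that infimum is finite.

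To show the infimum is actually $0$, I would invoke Corollary~\ref{cor:convergence}: for almost every realisation the empirical measures $\rho_n$ converge in $C([0,T];\P(\R))$ to a limit $\rho_\infty$, and then the LDP lower bound applied to shrinking open neighbourhoods of $\rho_\infty$ forces $\hat I^f(\rho_\infty)=0$. By the corollary preceding Corollary~\ref{cor:convergence} (``minimisers of $\hat I^f$ are gradient-flow solutions of~\eqref{eq:HL} with initial datum $\rho^{\circ,f}$''), this $\rho_\infty$ is the gradient-flow solution we are after.

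Two routine checks remain. First, $\rho^{\circ,f}$ must be admissible in the sense of Definition~\ref{def:metric-GFs}, i.e.\ $\hFreeEnergy(\rho^{\circ,f})<\infty$: this follows from $\hFreeEnergy^f(\rho^{\circ,f})<\infty$, which both confines $\rho^{\circ,f}<1/\alpha$ a.e.\ and controls the entropy integral, combined with the boundedness of $V$ and $W$ under Assumption~\ref{ass:VW} so that the interaction and potential contributions to $\hFreeEnergy$ are finite. Second, $\rho_\infty\in AC^2([0,T];\P_2(\R))$, which is immediate from the finiteness of $\tfrac12\int_0^T|\dot\rho_\infty|^2\, dt$ contained in $\hat I^f(\rho_\infty)=0$. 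The main ``obstacle'' is therefore absent at this level: everything non-trivial has already been absorbed into Theorem~\ref{th:LDP-path} and the identification of minimisers of $\hat I^f$, and the present statement is the purely formal harvest.
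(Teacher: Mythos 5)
Your proof is correct and follows essentially the same route as the paper: existence is read off from the almost-sure convergence in Corollary~\ref{cor:convergence}. Two remarks. First, much of your argument is redundant: Corollary~\ref{cor:convergence} already asserts that the almost-sure limit $\rho_\infty$ \emph{is} the unique gradient-flow solution of~\eqref{eq:HL} with $\rho(0)=\rho^{\circ,f}$, so there is no need to re-derive $\hat I^f(\rho_\infty)=0$ via the LDP lower bound and lower semicontinuity and then appeal separately to the identification of minimizers --- the paper simply invokes Corollary~\ref{cor:convergence} and stops. Second, when you check $\hFreeEnergy(\rho^{\circ,f})<\infty$ you cite ``the boundedness of $V$ and $W$ under Assumption~\ref{ass:VW}''; only $W$ is bounded there, while $V$ is coercive and grows at least linearly. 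The check still goes through, but for a different reason: the $V$-contribution appears with the same coefficient in both $\hFreeEnergy$ and $\hFreeEnergy^f$, so the difference $\hFreeEnergy-\hFreeEnergy^f$ involves only the bounded quantities $f$ and $W$, and finiteness of $\hFreeEnergy^f(\rho^{\circ,f})$ transfers to $\hFreeEnergy(\rho^{\circ,f})$.
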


\subsection{Ingredients of  the proofs}


As in many proofs of large-deviation principles, the core of the argument is Sanov's theorem, which provides a large-deviation principle for independent particles. 

In the case of this paper, however, the particles are not only correlated, but the hard-core interaction is a very strong one.
A central step in the proof is to replace this strong interaction by a weaker one. 
This step is done by the second main ingredient, a mapping from the hard-rod particle system $Y^n$ to a system of weakly-interacting zero-length particles called $X^n$. This map appears to have been known at least to Lebowitz and Percus~\cite{LebowitzPercus67} and was used to prove the many-particle limit by Rost~\cite{Rost84} and later by Bodnar and Velazguez~\cite{BodnarVelazquez05}.

\begin{figure}[ht]
\centering
\labellist
\pinlabel $y_1$ [b] at 24 70
\pinlabel $y_2$ [b] at 56 70
\pinlabel $y_3$ [b] at 78 70
\pinlabel $y_4$ [b] at 130 70
\pinlabel $x_1$ [t] at 24 1
\pinlabel $x_2$ [t] at 41 1
\pinlabel $x_3$ [t] at 53 1
\pinlabel $x_4$ [t] at 90 1
\pinlabel $\spY$ [l] at 183 63
\pinlabel $\spX$ [l] at 183 7
\endlabellist
\includegraphics[width=7cm]{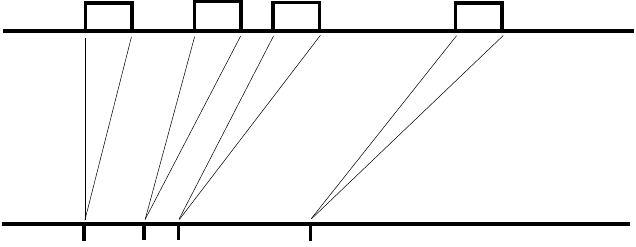}
\caption{The discrete compression and expansion maps (see Section~\ref{ss:compression_map}). Particles of length $\alpha/n$ at positions $y_i$ are mapped to zero-length particles at positions $x_i$ (the compression map $A_n^{-1}$) and vice versa (the expansion map~$A_n$). }
\label{fig:compression-map}
\end{figure}

The idea behind this mapping is to map the original collection of rods of length $\alpha/n$  to a collection of zero-length particles by `collapsing' or `compressing' them to zero length and moving  the rods on the right-hand side up towards the left (see Figure~\ref{fig:compression-map}). Two particles $Y^n_i$ and $Y^n_{i+1}$ that collide at some time $t_0$ are mapped by this transformation to two particles $X^n_i$ and $X^n_{i+1}$ that occupy the same point $x$ at time $t_0$. While the compressed particles $X^n_i$ and $X^n_{i+1}$ remain ordered for all time ($X^n_i(t)\leq X^n_{i+1}(t)$), the distribution of the empirical measures remains the same if the two particles {are} allowed to pass each other instead. Mapping the length of the particles to zero therefore allows us to remove the non-passing restriction, and by removing this restriction we eliminate the strong interaction between particles. 

The price to pay is that after transformation the effects of the on-site potential $V$ and the interaction potential $W$ come to depend on the whole particle system. This happens because the amount that particle $X^n_i$ should be considered `shifted to the right' is equal to $\alpha/n$ times the number of particles $X^n_j$ that are---at that moment---to the left of $X^n_i$, and that therefore the force exerted by the on-site potential $V$ (for instance) is equal to 
\[
-V'\biggl( X^n_i(t) + \frac\alpha n\#\Bigl\{j\in 1,\dots, n: X^n_j(t)< X^n_i(t)\Bigr\}\biggr).
\]
This  force on the particle $X^n_i$  depends in a discontinuous manner on the positions of all particles. Had this force been smooth, a standard application of Varadhan's Lemma would   convert Sanov's theorem into a large-deviation principle for the particle system $X^n$, as done by e.g. Dai Pra and Den Hollander (see~\cite{Dai-PraHollander96} or~\cite[Ch.~X]{DenHollander00}).
Since it is not smooth, however, we use a recent result by Hoeksema, Maurelli, Holding, and Tse~\cite{HoeksemaMaurelliHoldingTse20TR}, that generalizes Varadhan's Lemma to mildly singular and discontinuous forcings (Theorem~\ref{th:JasperMarioOliver} below).

Finally, a fortuitous property of the expansion and compression maps is that they are isometries for the Wasserstein metric. This implies that the metric structure of the large-deviation rate functional $\hat I^f$---in terms  of the metric velocity $|\dot\rho|$ and the metric slope $|\partial\hFreeEnergy|$---transforms transparently from the $X^n$ to the $Y^n$ particle system.

\subsection{Conclusion and discussion, part I: Mathematics}
\label{sec:discussion-I}

We have proved a large-deviation principle on path space for a one-dimensional system of hard rods, in the many-particle limit. This large-deviation principle characterizes the entropy of the system as a function of the density, and identifies the limit evolution as a Wasserstein gradient flow of the entropy.

\medskip
From a mathematical point of view, this result can be  interpreted in different ways:
\begin{enumerate}
\item It rigorously establishes equation~\eqref{eq:HL} as the continuum limit of the particle system, in the sense that the empirical measures $\rho_n$ converge to a solution of~\eqref{eq:HL}. While this result was proved for the case $V=W=0$ by Rost in~\cite{Rost84}, it is new for the case of non-zero $V$ and $W$.
\item In addition, it establishes the functional $\hFreeEnergy$ as the driving functional and the metric $W_2$ as the dissipation of the gradient-flow structure for equation~\eqref{eq:HL}. This result is  new, also for the case $V=W=0$. 
\end{enumerate}

\subsubsection*{The difference between  $W_2$- and narrow topology.} 
Hidden in the notation of the two large-deviation theorems  is a subtlety concerning  topology. The $W_2$-topology is central to the gradient-flow structure, and we argue here that this structure arises from the large deviations. On the other hand, the two large-deviation principles themselves are proved in the narrow topology on $\P(\R)$, which is weaker. 

The large-deviation theorems themselves probably do not hold in the stronger $W_2$-topology.  For independent particles this can be recognized in the characterization of the validity of Sanov's theorem in Wasserstein metric spaces by Wang, Wang, and Wu~\cite{WangWangWu10}. These authors show that Sanov's theorem is invalid without exponential moments on the underlying distribution, and this condition is much stronger than the first-moment condition induced by $V$ in the case of this paper. 

This begs the question how the $W_2$-topology is generated by the large-deviation rate function while not being part of the large-deviation principle. The answer is that if $I^f(\rho)$ is finite and if the  initial datum $\rho(0)$ is in $\P_2(\R)$, then $\rho(t)\in \P_2(\R)$ for all time $t$; this is shown in Lemma~\ref{lemma:I-to-v-formulation}. However,  $\rho(t)\in \P_2(\R)$ is a much weaker property than finiteness of exponential moments of $\rho(t)$, which is necessary for exponential tightness in $W_2$ of the underlying particle system.

\subsection{Conclusion and discussion, part II: Consequences for modelling}

This large-deviation result also gives rise to a rigorous \emph{Variational-Modelling} derivation of the limit equation~\eqref{eq:HL}. It  explains and motivates the choice of the modified entropy $\hFreeEnergy$ as the driving functional and the Wasserstein distance as the dissipation.

The appearance of  the driving functional $\hFreeEnergy$ is  expected. The first integral in $\hFreeEnergy$ arises as a measure of `free space' after taking into account the finite length of the particles; this becomes apparent in the discussion of the `compression' map in Section~\ref{sec:part-systems}. The second and third integrals are relatively standard contributions from on-site and interaction potentials.

On the other hand, the appearance of the Wasserstein distance as the dissipation metric is unexpected. This is the same metric as for non-interacting particles~\cite{DawsonGartner87,KipnisOlla90,AdamsDirrPeletierZimmer11,PeletierVarMod14TR}, and the result therefore shows that incorporating steric interactions does not change the dissipation metric, a fact that is surprising at first glance.

This fact can be understood from the proof, however. It is related to the property that the compression and expansion maps are isometries for the Wasserstein distance. The central observation is the following: the total travel distance between a set of initial points $y_1,\dots,y_n$ and final points $\overline y_1,\dots,\overline y_n$ is the same as the total travel distance between the corresponding compressed set of  initial points $x_1,\dots,x_n$ and final points $\overline x_1,\dots,\overline x_n$. This is true because in the minimization problem~\eqref{def:Wasserstein-intro} the optimal permutation of the particles is such that particles preserve their ordering, and therefore the compression mapping moves the points $y_{\sigma(i)}$ and $\overline y_i$ to the left by the same amount.
 
This result therefore is intrinsically limited to the one-dimensional setup of this paper. In higher dimensions there is no such compression map, but one can still wonder whether the dissipation of particles with finite and with zero size might be both respresented by the Wasserstein distance. This appears not to be the case: we illustrate this in Figure~\ref{fig:mobility}. In addition, in the case of multiple species the metric can certainly not be Wasserstein, since particles moving in opposite directions will be forced to move around each other. 
\begin{figure}[ht]
\includegraphics[width=0.8\hsize]{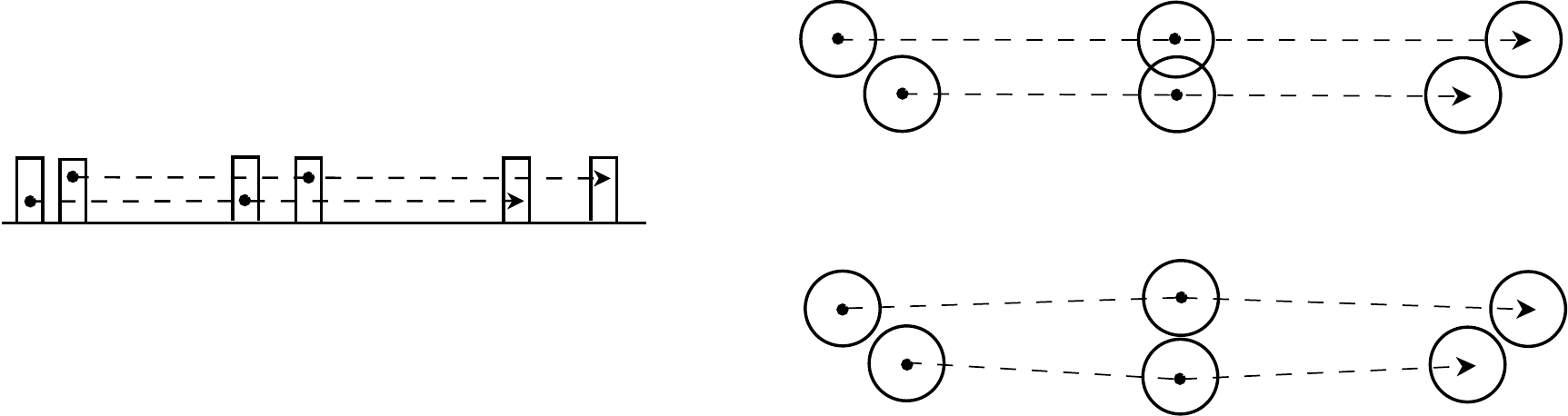}
\caption{In higher dimensions the metric will not be Wasserstein.
In one dimension (left), linear interpolation of particle positions preserves admissibility: if the initial and final positions do not overlap, then the intermediary positions also do not overlap. In higher dimensions, this is false: two spheres arranged in admissible configurations may collide under linear interpolation (top right). We expect that the metric in higher dimensions therefore will be non-Wasserstein, since it will have to accommodate particles `moving around' each other (bottom right). 
}
\label{fig:mobility}
\end{figure}


\subsubsection*{Comparison with Bruna \& Chapman's approximate equation.} 
In a series of publications~\cite{Bruna12TH,BrunaChapman12,BrunaChapman12a,BrunaChapman14}, Bruna and Chapman analyze systems of hard spheres with Brownian noise in the limit of small volume fraction. Their approach is to apply a singular-limit analysis to the Fokker-Planck equation associated with the particles, and this allows them to address this issue in all dimensions and for finite numbers of particles. For the setup of this paper with $W=0$, Bruna finds an approximate equation in the small-$\alpha$ limit~\cite[App.~D]{Bruna12TH}
\begin{equation}
\label{eq:Bruna-Chapman}
\partial_t \rho = \partial_y\Bigl[ \frac12 \partial_y \rho + \alpha \rho \partial_y \rho
 + V'(y)\rho\Bigr] + O(\alpha^2).
\end{equation}
This equation is also found by a Taylor development of the denominator in~\eqref{eq:HL}. Similarly applying a formal Taylor development to~$\hFreeEnergy$ in~\eqref{def:F}, we find  that equation~\eqref{eq:Bruna-Chapman} has a formal `approximate' gradient flow structure
\[
\text{driving functional }\hFreeEnergy_{\mathrm{BC}}(\rho) = \int \Bigl[ \frac12 \rho\log \rho + \alpha\rho^2 + V\rho \Bigr] + O(\alpha^2),
\qquad \text{and metric $W_2$.}
\]
Bruna, Burger, Ranetbauer, and Wolfram study the concept of approximate  gradient-flow structures in more detail in~\cite{BrunaBurgerRanetbauerWolfram17,BrunaBurgerRanetbauerWolfram17TR}.

\subsubsection*{Comparison with Poisson-Nernst-Planck type models with steric effects.}
As described in the introduction, a wide family of generalized Poisson-Nernst-Planck models has been derived by modelling the effect of the finite particle size on the driving functional (free energy) of the system, while assuming that the dissipation mechanism is the same as for systems with point ions. Our work shows that the last assumption is valid for the case of a single species of hard rods in one spatial dimension, and is therefore consistent with the current literature.  

As illustrated above, however, in the case of multiple species in higher dimensions a form of cross-diffusion is to be expected. We present an example of such a system for charged particles in~\cite{gavish2017solvent}.  Here the mobility matrix is nonlinear and degenerate, in that transport of particles of species A  to a region diminishes with increasing concentration of that species in that region.  The mobility matrix is also non-diagonal, reflecting inter-diffusion, i.e., the movement of an ionic species must involve counter movement of water and other ionic species.  (This also is observed in limits of lattice models with exclusion; see e.g.~\cite{BurgerSchlakeWolfram12}). Furthermore, while in classical Poisson-Nernst-Planck theory, the diffusivity of the ions is proportional to their concentration, the modified equation show a super-linear increase of diffusivity with ionic concentration.  This increase reflects the solvent tendency to diffuse to the regions of high ionic concentration and  may be a significant effect since the entropy per volume of many small particles is larger than the entropy of a fewer larger particles and the solvent molecules are typically significantly smaller than the ions. This work should be considered a step towards the study of such systems.
%
%

\subsection{Overview of the paper}

In Section~\ref{s:prelims} we introduce the Wasserstein distance, Wasserstein gradient flows,  and inverse cumulative distribution functions, which play a central role in the analysis. In Section~\ref{s:ldp} we introduce large-deviation principles. In Section~\ref{sec:part-systems} we formally define the systems that we study and the compression and expansion maps that we mentioned above. In Section~\ref{s:functionals} we formally define various functionals that appear in the analysis, and prove a number of properties. In Sections~\ref{s:ldp-iid}, \ref{s:ldp-Y-special}, and~\ref{s:proof-dynamic-ldp} we prove Theorems~\ref{th:LDP-invmeas} and~\ref{th:LDP-path} in three stages, while Section~\ref{s:ldp-estimates} is devoted to a number of estimates used in Section~\ref{s:ldp-Y-special}.

\subsection{Notation}

We sometimes write $\spX$ and $\spY$ to distinguish state spaces for particle systems of `compressed' particles (usually called $X^n$, sometimes $Z^n$) and `expanded' particles~$Y^n$.
For measures $\mu$ on $\R$ we  write $\mu(x)$ for the Lebesgue density and $\mu(dx)$ for the measure inside an integral. For time-dependent measures $\mu(t,dx)$ we write both $\mu(t)$ and $\mu_t$ for the measure $\mu(t,\cdot)$, and correspondingly $\mu_0$ and $\mu(0)$ both indicate the measure $\mu(0,\cdot)$.

\begin{center}
\newcommand{\specialcell}[2][c]{%
  \begin{tabular}[#1]{@{}l@{}}#2\end{tabular}}
\begin{longtable}{lll}
$|\dot \rho|(t)$ & Metric derivative &\eqref{def:metric-derivative}, \eqref{eq:metric-derivative-v}\\
$\|\cdot\|_{BL}$ & Bounded-Lipschitz norm on continuous and bounded functions &Sec.~\ref{ss:narrow-convergence}\\
$\alpha$ & Rods have length $\alpha/n$ \\
$ A$, $A_n$ & Expansion maps &Def.~\ref{def:AAn}\\
$\gamma(\cdot)$ & Correction term in entropy & \eqref{def:gamma}\\
$d_{BL}$ & Dual bounded-Lipschitz metric on $\P(\R)$ &Sec.~\ref{ss:narrow-convergence}\\
$\Ent_V$, $\hEnt_V$
 & Entropies  in compressed and expanded coordinates & Sec.~\ref{s:functionals}\\
$\Energy_W$, $\hEnergy_W$
 & Interaction energies in compressed and expanded coordinates & Sec.~\ref{s:functionals}\\
$\FreeEnergy$, $\hFreeEnergy$ & Free energies in compressed and expanded coordinates & \eqref{def:F}, Sec.~\ref{s:functionals}\\
$\hFreeEnergy^f$ & Tilted free energy  & \eqref{def:Ff}\\
$\partial F(\rho)$ & Fr\'echet subdifferential of $F$ & Def.~\ref{def:subdiff}\\
$\partial^\circ F(\rho)$ & Element of $\partial F(\rho)$ of minimal norm & Def.~\ref{def:subdiff}\\
$|\partial F|(\rho)$ & Metric slope of $F$ &\eqref{def:local-slope-abstract}\\
$\eta_n(\cdot)$ & Empirical measure map & \eqref{def:eta_n}\\
$\RelEnt(\,\cdot\,|\,\cdot\,)$ & Relative entropy & \eqref{e:def:RelEnt}\\
icdf & Inverse cumulative distribution function & Def.~\ref{def:icdf}\\
$I^f$ & Rate functional for pathwise large-deviation principle &\eqref{def:RF-ind-part-intro}\\
$\mathfrak I_\xi$ & Dynamic rate function for i.i.d. initial data &\eqref{def:I-time-dependent}\\
$\mathcal L_X, \mathcal L_Y$ & Generators for $X^n$ and $Y^n$ stochastic particle systems & Def.~\ref{def:particle-systems}\\
$\Omega_n$ & State space for particle system $Y^n$ & \eqref{def:Omega-n}\\
$\P(\R)$ & Probability measures on $\R$, with metric $d_{BL}$& Sec.~\ref{ss:narrow-convergence}\\
$\P_2(\R)$ & Probability measures  with finite second moments and $W_2$-metric &Sec.~\ref{ss:Wasserstein}\\
$\P^n(\R)$ & Empirical measures of $n$ points on $\R$ & \eqref{def:Pn} \\
$\InvMeas_n$ & Invariant measure for $Y^n$ & \eqref{def:InvMeasn}\\
$\InvMeasf_n$ & Tilted, $W=0$ invariant measure for $Y^n$ & \eqref{def:InvMeasf}\\
$\bbQ^\nu$ & Single-particle tilted measure on $\spX$ & \eqref{def:Qnu}\\
$\bm t_\mu^\nu$ & Transport map from $\mu$ to $\nu$ &Lemma~\ref{l:props-W2}\\
$T_\mu$ & Auxiliary expansion map &Lemma~\ref{lemma:A-is-pushforward}\\
$V$ & On-site  potential & Ass.~\ref{ass:VW}\\
$W$ & Interaction potential & Ass.~\ref{ass:VW}\\
$W_2$ & Wasserstein metric of order $2$ & Sec.~\ref{ss:Wasserstein}\\
$\mathcal Z_n$ & Normalization constant for $\InvMeas_n$ & \eqref{def:norm-const-Zn-intro}
\end{longtable}
\end{center}

\section{Measures and the Wasserstein metric}
\label{s:prelims}

The Wasserstein gradient of a functional $\hFreeEnergy$, and the corresponding gradient flow, was informally defined in Section~\ref{subsec:var-structure}. There is an extensive literature on the Wasserstein metric and its properties~\cite{Villani03,AmbrosioGigliSavare08,Villani09,Santambrogio15}, but for the discussion of this paper we only need a number of facts, which we summarize in this section.

\subsection{Preliminaries on one-dimensional measures}

The concept of push-forward will be used throughout this work:
\begin{definition}[Push-forwards]
Let $f:\R\to\R$ be Borel measurable, and $\mu\in\P(\R)$. The \emph{push-forward} $f_\#\mu\in \P(\R)$ is the measure $\mu\circ f^{-1}$, and has the equivalent characterization
\[
\int_\R \varphi(y)\,(f_\#\mu)(dy) = \int_\R \varphi(f(x))\,\mu(dx),
\qquad\text{for all Borel measurable $\varphi:\R\to\R$.}
\]
\end{definition}

The Wasserstein distance and the energy functionals in this paper have convenient representations in terms of inverse cumulative distribution functions.
\begin{definition}[Inverse cumulative distribution functions]
\label{def:icdf}
Let $\mu\in \P(\R)$. Let $F:\R\to[0,1]$ be the right-continuous cumulative distribution function:
\[
F(x) := \mu((-\infty,x]).
\]
Then the \emph{inverse cumulative distribution function} $\X$ of $\mu$ is the generalized (right-continuous) inverse of $F$,
\[
\X(m) := \inf\{x\in \R: F(x)> m\}.
\]
\end{definition}

The following lemma collects some well-known properties of inverse cumulative distribution functions.
\begin{lemma}
\label{lemma:icdf-transformation}
Let $\mu\in \P(\R)$ and let $\X$ be the inverse cumulative distribution function of $\mu$.
\begin{enumerate}
\item $\X$ is non-decreasing and right-continuous;
\item If $\mu$ is absolutely continuous, then $F(\X(m)) = m$, $\X'(m)$ exists for Lebesgue-almost-every $m\in[0,1]$, and for those $m$ we have 
\[
\X'(m) = 
 1/\mu(\X(m));
\]
\item 
For all Borel measurable $\varphi:\R\to\R$ we have
\begin{equation}
\label{eq:transformation-X}
\int_\R \varphi(x) \,\mu(dx) = \int_0^1 \varphi(\X(m))\, dm.
\end{equation}
\end{enumerate}
\end{lemma}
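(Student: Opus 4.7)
The plan is to prove the three parts in the order (1), (3), (2), since part (3) will be used to locate the a.e.\ set of $m$ on which the derivative formula in (2) can be verified.

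For part (1), monotonicity is immediate from the definition: if $m_1\leq m_2$ then $\{x:F(x)>m_1\}\supseteq\{x:F(x)>m_2\}$, so the infima satisfy $\X(m_1)\leq \X(m_2)$. For right-continuity I would take $m_k\downarrow m$. Monotonicity gives $\X(m_k)\geq \X(m)$. For the reverse bound, any $x>\X(m)$ admits a $y\leq x$ with $F(y)>m$; by monotone convergence $F(y)>m_k$ for $k$ large, hence $\X(m_k)\leq y\leq x$. Taking $x\downarrow \X(m)$ gives $\limsup\X(m_k)\leq \X(m)$.

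For part (3), the plan is to identify $\X_\#\bigl(\Lebesgue^1|_{[0,1]}\bigr)=\mu$ by matching half-line masses. The key computation is
\[
\bigl\{m\in[0,1]:\X(m)\leq x\bigr\}=\bigl\{m\in[0,1]: m<F(x)\bigr\},
\]
which follows because $\X(m)>x$ is equivalent (using right-continuity of $F$ and monotonicity) to $F(x)\leq m$. Both sides have Lebesgue measure $F(x)=\mu((-\infty,x])$. Since half-lines generate the Borel $\sigma$-algebra on $\R$, the push-forward identity holds, and \eqref{eq:transformation-X} then follows from the standard change-of-variables lemma for push-forwards, first for bounded Borel $\varphi$ and then for general Borel $\varphi$ by monotone convergence.

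For part (2), absolute continuity of $\mu$ yields continuity of $F$. The identity $F(\X(m))=m$ then follows on two sides: $F(\X(m))\geq m$ by right-continuity of $F$ applied to $x_k\downarrow \X(m)$ with $F(x_k)>m$; and $F(\X(m))>m$ would, by continuity of $F$, furnish some $x<\X(m)$ with $F(x)>m$, contradicting the infimum definition. The monotone function $\X:[0,1]\to\R$ is differentiable a.e.\ by Lebesgue's differentiation theorem. For the derivative formula I would apply the fundamental theorem of calculus for the absolutely continuous function $F$: for $h>0$,
\[
h = F(\X(m+h))-F(\X(m)) = \int_{\X(m)}^{\X(m+h)}\mu(x)\,dx.
\]
Dividing by $\X(m+h)-\X(m)$ (which is positive at any $m$ where $\X'(m)>0$) and letting $h\to 0$, the right-hand side tends to $\mu(\X(m))$ whenever $\X(m)$ is a Lebesgue point of the density; by part (3), the set of such $\X(m)$ has full $\mu$-measure, which pulls back to a full-Lebesgue-measure set of $m\in[0,1]$. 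This yields $\X'(m)=1/\mu(\X(m))$ on that full-measure set.

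The only genuinely delicate point is dealing with the flat plateaus of $F$ in (2): on an interval where the density vanishes a.e., $\X$ jumps over it, but such plateaus correspond to single $m$-values in $[0,1]$ and thus form a Lebesgue-null set of $m$, so they are harmlessly excluded. The remaining steps are routine monotone-function theory and FTC, so I expect no substantial obstacle beyond organising the two exceptional null sets (non-differentiability of $\X$; non-Lebesgue points of $\mu$) and confirming their pull-back is still null via part (3).
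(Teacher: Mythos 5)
Your proof is correct and takes a genuinely different, more self-contained route than the paper. For part (3) and the identity $F(\X(m))=m$ in part (2), the paper simply cites Winter (Prop.~1 and Th.~2) whereas you prove both directly; your push-forward argument via half-lines is the standard one and is correct modulo one tiny imprecision: the set equality $\{m:\X(m)\leq x\}=\{m:m<F(x)\}$ can fail at the single value $m=F(x)$ (e.g.\ when $F$ is strictly increasing one has $\X(F(x))=x$, putting $m=F(x)$ in the left-hand set), but since this is a one-point discrepancy the Lebesgue-measure comparison is unaffected. For the derivative formula in part (2) the two arguments are genuinely different in flavour: the paper differentiates the identity $F\circ\X=\mathrm{id}$, then excludes the bad set by observing that the set $\mathcal X$ of non-differentiability points of $F$ is Lebesgue-null and that $F$, being absolutely continuous, has the Lusin N property, so $F(\mathcal X)$ is also null and $\{m:\X'(m)=0\}\subseteq F(\mathcal X)$. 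You instead use the integral form $h=\int_{\X(m)}^{\X(m+h)}\mu$ and the Lebesgue differentiation theorem for the density $\mu$, invoking part (3) to pull the full-$\mu$-measure set of Lebesgue points back to a full-Lebesgue-measure set of $m$. Both work; your route leans on part (3) and properties of the density, the paper's on the Lusin property of the cdf. One small gap you should patch: the reciprocal limit only gives a finite, non-zero $\X'(m)$ when $\mu(\X(m))>0$, so you must also discard $\{m:\mu(\X(m))=0\}$. This is again null by your own part (3), since $\mu(\{x:\mu(x)=0\})=0$, so the fix is one extra line; alternatively, observe that at a Lebesgue point with $\mu(\X(m))=0$ the computation would force $\X'(m)=+\infty$, contradicting a.e.\ finite differentiability of the monotone function $\X$. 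Either patch closes the argument.
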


\begin{proof}
The function $\X$ is obviously non-decreasing, and the right-continuity is a direct consequence of the definition. To characterize  $\X'(m)$ for absolutely-continuous $\mu$, first note that $F$ then is an absolutely continuous function; by~\cite[Prop.~1]{Winter97} we then have $F(\X(m)) = m$ for all $m\in [0,1]$. Since $\X$ is monotonic, it is differentiable at almost all $m\in [0,1]$. Let $M$ be the set of such $m$; then for each $m\in M$, 
\[
\X'(m) = \lim_{\tilde m\to m} \frac{\X(\tilde m)-\X(m)}{\tilde m-m}
= \lim_{\tilde m\to m} \frac{\X(\tilde m)-\X(m)}{F(\X(\tilde m))-F(\X(m))}.
\]
First, assume that $\X'(m) = 0$. The identity above then implies that  $F$ is not differentiable at $x=\X(m)$; the set $\mathcal X$ of such $x$ is a Lebesgue null set of $\R$, and since the function $F$ has the `Lusin N' property~\cite[Def.~9.9.1]{Bogachev07.II} the corresponding set of values $F(\mathcal X)$ has Lebesgue measure zero as well. For all $m$ in the full-measure set $M\setminus F(\mathcal X)$ we therefore have that $\X'(m)$ exists and is non-zero, and by the calculation above  $\X'(m) =1/F'(\X(m)) =  1/\mu(\X(m))$.

Finally, the transformation rule~\eqref{eq:transformation-X} is proved in~\cite[Th.~2]{Winter97}.
\end{proof}

\subsection{Narrow topology and the dual bounded-Lipschitz metric}
\label{ss:narrow-convergence}

We will be using two topologies  on spaces of probability measures. The first type is the \emph{narrow} topology, often called the \emph{weak topology of measures}, which can be defined in various ways. For the purposes of this paper it is convenient to introduce it through the set $BL(\R)$ of \emph{bounded Lipschitz functions} on $\R$, with norm
\[
\|f\|_{BL} := \|f\|_\infty + \Lip(f), \qquad \Lip(f) := \sup_{x,y\in \R} \frac{|f(x)-f(y)|}{|x-y|}.
\]
The narrow convergence on $\P(\R)$ is metricised by duality with the set of bounded Lipschitz functions, leading to the \emph{dual bounded-Lipschitz metric}
\[
d_{BL}(\mu,\nu) := \sup_{\|f\|_{BL}\leq 1} \int_\R f(x)\bigl[\mu(dx)-\nu(dx)\bigr].
\]
Alternative ways of defining the same topology are by the L\'evy metric or through duality with continuous and bounded functions~\cite{Rachev91}. When we write $\P(\R)$, we implicitly equip the space with the dual bounded-Lipschitz metric $d_{BL}$.

\subsection{The Wasserstein metric}
\label{ss:Wasserstein}

We write $\P_2(\R)$ for the space of probability measures with finite second moments,
\[
\P_2(\R) := \biggl\{ \mu\in \P(\R): \int_\R x^2 \mu(dx) < \infty\biggr\}.
\]

\begin{definition}[Wasserstein distance]
The \emph{Wasserstein distance of order 2} between measures $\mu$ and $\nu$ in {$\P_2(\R)$} is defined by 
\begin{equation}
\label{def:W2}
W_2( \mu,\nu)^2 := \inf\Bigl\{ \int_\R |x-x'|^2 \, \gamma(dxdx'): \gamma\in \Gamma(\mu,\nu)\Bigr\},
\end{equation}
where $\Gamma(\mu,\nu)$ is the set of couplings (`transport plans') of $\mu$ and $\nu$, i.e.\ of measures $\gamma\in \P(\R\times\R)$ such that 
\[
\gamma(A\times \R) = \mu(A), \quad \gamma(\R\times A) = \nu(A), \qquad\text{for all Borel sets }A\subset \R.
\]
\end{definition}

\medskip
\noindent
In this paper we always consider $\P_2(\R)$ to be equipped with the metric $W_2$.

\bigskip

\begin{lemma}[Properties of the Wasserstein metric]
\label{l:props-W2}
\noindent
\begin{enumerate}
\item \label{lem:W2:inf-achieved}
The infimum in~\eqref{def:W2} is achieved and unique. 
\item \label{lem:W2:icdf}
We have the characterization
\begin{equation}
\label{eq:W2-in-icdf}
W_2^2(\mu,\nu) = \int_0^1 |\X_\mu(m)-\X_\nu(m)|^2 \, dm,
\end{equation}
where $\X_\mu$ and $\X_\nu$ are the inverse cumulative distribution functions of $\mu$ and $\nu$.
\item \label{lem:W2:map}
If $\mu$ is Lebesgue-absolutely-continuous, then the minimizer in~\eqref{def:W2} can be written as a transport \emph{map}: $\gamma = (id\times \bm t_\mu^\nu)_\#\mu$ where $\bm t_\mu^\nu: \R\to\R$ pushes forward $\mu$ to $\nu$, i.e. $\nu = (\bm t_\mu^\nu)_\# \mu$. 
In terms of the inverse cumulative distribution functions $\X_\mu$ and $\X_\nu$ of $\mu$ and $\nu$, the map $\bm t_\mu^\nu$ satisfies
\begin{equation}
\label{prop:t-map-icdf}
\bm t_\mu^\nu(\X_\mu(m)) = \X_\nu(m)\qquad\text{for all }m\in [0,1].
\end{equation}

\item \label{lem:W2:bnd-BL}
$d_{BL}(\mu,\nu)\leq W_2(\mu,\nu)$ for all $\mu,\nu\in \P_2(\R)$.
\end{enumerate}
\end{lemma}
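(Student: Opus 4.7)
The plan is to construct an explicit candidate minimizer and then argue its optimality. For (1), I would invoke the direct method: $\Gamma(\mu,\nu)$ is tight (both marginals are fixed and hence tight) and closed under narrow convergence, while the cost $(x,x')\mapsto|x-x'|^2$ is nonneg\-ative and lower semicontinuous, so by the Portmanteau theorem the functional $\gamma\mapsto\int|x-x'|^2\,d\gamma$ is narrowly lsc, yielding existence of a minimizer. For (2), define the \emph{monotone coupling} $\gamma_0:=(\X_\mu,\X_\nu)_\#\bigl(\Lebesgue|_{[0,1]}\bigr)$. Lemma~\ref{lemma:icdf-transformation}(3) immediately gives that the marginals of $\gamma_0$ are $\mu$ and $\nu$, and that its cost is exactly the right-hand side of \eqref{eq:W2-in-icdf}. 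Optimality of $\gamma_0$ among $\Gamma(\mu,\nu)$ then follows from the one-dimensional rearrangement argument: any optimal plan must be supported on a cyclically monotone subset of $\R^2$, which in one dimension forces the support to be the graph of a non-decreasing relation; the monotone coupling is the unique element of $\Gamma(\mu,\nu)$ with this property. This simultaneously yields the uniqueness claim in (1).

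\textbf{Part (3).} Under absolute continuity of $\mu$, define the candidate map $\bm t_\mu^\nu:=\X_\nu\circ F_\mu$, where $F_\mu$ is the cdf of $\mu$. The identity \eqref{prop:t-map-icdf} follows from Lemma~\ref{lemma:icdf-transformation}(2), which gives $F_\mu(\X_\mu(m))=m$ for all $m\in[0,1]$, so that $\bm t_\mu^\nu(\X_\mu(m))=\X_\nu(F_\mu(\X_\mu(m)))=\X_\nu(m)$. To verify $(\bm t_\mu^\nu)_\#\mu=\nu$, I would use Lemma~\ref{lemma:icdf-transformation}(3) twice: for bounded Borel $\varphi$,
\begin{equation*}
\int_\R \varphi\,d\nu=\int_0^1\varphi(\X_\nu(m))\,dm=\int_0^1\varphi\bigl(\bm t_\mu^\nu(\X_\mu(m))\bigr)\,dm=\int_\R \varphi\circ\bm t_\mu^\nu\,d\mu.
\end{equation*}
Finally, the plan $(id\times\bm t_\mu^\nu)_\#\mu$ coincides with $\gamma_0$ from part~(2): pushing the identity $m\mapsto(\X_\mu(m),\X_\nu(m))$ forward from $\Lebesgue|_{[0,1]}$ yields $(id\times\bm t_\mu^\nu)_\#\mu$ on one side and $\gamma_0$ on the other, so the map representation is the unique optimizer.

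\textbf{Part (4).} For any $f$ with $\|f\|_{BL}\leq 1$ we have $\Lip(f)\leq 1$. Letting $\gamma^*$ be the optimal coupling from part~(1) and applying the Cauchy--Schwarz inequality,
\begin{equation*}
\int_\R f\,d\mu-\int_\R f\,d\nu=\int_{\R\times\R}\bigl[f(x)-f(x')\bigr]\gamma^*(dx\,dx')\leq \int_{\R\times\R}|x-x'|\,\gamma^*(dx\,dx')\leq W_2(\mu,\nu).
\end{equation*}
Taking the supremum over admissible $f$ gives the claim.

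The main obstacle is the optimality step in part~(2): verifying that the monotone coupling $\gamma_0$ minimizes over \emph{all} of $\Gamma(\mu,\nu)$ (and not merely that it is a stationary point). The classical route through cyclical monotonicity, or alternatively via the Hoeffding--Fr\'echet rearrangement inequality applied to the convex cost, is the technical heart of the lemma; everything else is then a direct consequence of the icdf representation.
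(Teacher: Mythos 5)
Your proof is correct, and the underlying mathematics is essentially what the paper leans on, but you fill in what the paper outsources. The paper disposes of parts~(2)--(3) by citing \cite[Th.~2.18 and 2.12]{Villani03}, whereas you actually reconstruct the standard one-dimensional argument: build the monotone coupling $(\X_\mu,\X_\nu)_\#\Lebesgue|_{[0,1]}$, verify its marginals and cost via Lemma~\ref{lemma:icdf-transformation}(3), and establish optimality (and hence uniqueness, feeding back into part~(1)) through cyclical monotonicity; your map $\bm t_\mu^\nu=\X_\nu\circ F_\mu$ and the identification $(\mathrm{id}\times\bm t_\mu^\nu)_\#\mu=\gamma_0$ are exactly the classical picture. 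For part~(4), the paper detours through Kantorovich duality for $W_1$ and then applies Jensen inside the $W_1$ infimum; you instead test a bounded-Lipschitz $f$ directly against the optimal $W_2$ coupling and apply Cauchy--Schwarz, which is marginally more economical since it avoids invoking the $W_1$ duality theorem. Both are correct; the paper's version makes the $d_{BL}\le W_1\le W_2$ chain explicit, while yours goes straight to the endpoint. Your identification of the optimality step (cyclical monotonicity forcing monotone support in 1D) as the technical heart of the lemma is accurate -- that is precisely the content the paper delegates to Villani.
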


\begin{proof}
Part~\ref{lem:W2:inf-achieved} is a consequence of the tightness of $\Gamma(\mu,\nu)$ and the strict convexity of the quadratic function. Parts~\ref{lem:W2:icdf} and~\ref{lem:W2:map} are proved in~\cite[Th.~2.18 and 2.12]{Villani03}.
To prove part~\ref{lem:W2:bnd-BL}, we use the Kantorovich formulation of the Wasserstein distance of order $1$ (e.g.~\cite[Th.~1.14]{Villani03}):
\begin{align*}
d_{BL}(\mu,\nu) = \sup_{\|f\|_\infty + \Lip(f)\leq 1} \int_\R  f\bigl[\mu-\nu\bigr]
&\leq \sup_{ \Lip(f)\leq 1} \int_\R  f\bigl[\mu-\nu\bigr]\\
&= \inf\Bigl\{ \int_\R |x-x'| \, \gamma(dxdx'): \gamma\in \Gamma(\mu,\nu)\Bigr\}\\
&\leq \inf\Bigl\{ \int_\R |x-x'|^2 \, \gamma(dxdx'): \gamma\in \Gamma(\mu,\nu)\Bigr\}^{1/2}\\
&= W_2(\mu,\nu).
\end{align*}

\end{proof}

\bigskip

\begin{definition}[$AC^2$-curves in the $W_2$-metric~{\cite[Ch.~8]{AmbrosioGigliSavare08}}]
\label{def:AC2}
Define the space $AC^2([0,T];\P_2(\R))$ as the space of curves $\mu\in C([0,T];\P_2(\R))$ such that there exists $w\in L^2(0,T)$ with the property
\[
W_2(\mu_s,\mu_t) \leq \int_s^t w(\sigma)\, d\sigma, \qquad \text{for all }0\leq s\leq t \leq T.
\]
\end{definition}

\begin{lemma}[Characterization of $AC^2$-curves in $\P_2$]
\label{lemma:char-AC2}
A curve $\mu\in C([0,T];\P_2(\R))$ is an element of $AC^2([0,T];\P_2(\R))$ if and only if there exists a Borel vector field $v:(x,t)\to v_t(x)$ such that $v_t \in L^2(\mu_t)$ for a.e. $t\in [0,T]$, and 
\begin{equation}
\label{cond:vt-in-L2}
 t\mapsto \|v_t\|_{L^2(\mu_t)}\in L^2(0,T),
\end{equation}
and the continuity equation
\begin{equation}
\label{eq:ct-eq}
\partial_t\mu_t + \partial_x (\mu_t v_t)  = 0
\end{equation}
holds in the sense of distributions on $(0,T)\times \R$. In this case, if two functions $(t,x)\mapsto v(t,x), \tilde v(t,x)$  satisfy~\eqref{cond:vt-in-L2} and~\eqref{eq:ct-eq}, then $\mu_t v_t = \mu_t \tilde v_t$ Lebesgue-almost everywhere in $(0,T)\times \R$,

In addition, if $\mu\in AC^2([0,T];\P_2(\R))$, then the metric derivative $|\dot \mu|$ defined in~\eqref{def:metric-derivative} exists at a.e.\ $t\in [0,T]$, and satisfies
\begin{equation}
\label{eq:metric-derivative-v}
|\dot \mu|^2(t) := \int_\R |v_t|^2 \mu_t, \qquad \text{for a.e. }t\in[0,T].
\end{equation}
\end{lemma}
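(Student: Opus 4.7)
The plan is to prove the two directions separately, exploiting the one-dimensional structure via the inverse cumulative distribution function $\X_t:=\X_{\mu_t}$ from Definition~\ref{def:icdf}. By Lemma~\ref{l:props-W2}(\ref{lem:W2:icdf}), the map $\mu\mapsto\X_\mu$ is an isometric embedding of $(\P_2(\R),W_2)$ into the Hilbert space $L^2((0,1),dm)$, which converts both the $AC^2$ condition and the continuity equation into transparent Hilbert-space statements.

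For the \emph{necessity} direction, suppose $\mu\in AC^2([0,T];\P_2(\R))$. The isometry transfers the $AC^2$ property, so $t\mapsto\X_t$ lies in $AC^2([0,T];L^2(0,1))$; standard Hilbert-space theory then yields differentiability at a.e.\ $t$, with derivative $V_t\in L^2(0,1)$ satisfying $\|V_t\|_{L^2(0,1)}=|\dot\mu|(t)$. I would define $v_t:\R\to\R$ by setting $v_t(\X_t(m)):=V_t(m)$ (well-defined $\mu_t$-a.e.\ and extended Borel-measurably elsewhere). The transformation rule~\eqref{eq:transformation-X} gives $\|v_t\|_{L^2(\mu_t)}=\|V_t\|_{L^2(0,1)}$, establishing~\eqref{cond:vt-in-L2}. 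To verify the continuity equation I would test against $\varphi\in C_c^\infty(\R)$ and rewrite both sides via~\eqref{eq:transformation-X}, reducing~\eqref{eq:ct-eq} to the Hilbert-space identity $\partial_t\int_0^1\varphi(\X_t(m))\,dm=\int_0^1\varphi'(\X_t(m))V_t(m)\,dm$, which follows from absolute continuity of $t\mapsto\X_t$ in $L^2(0,1)$.

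For the \emph{sufficiency} direction, suppose $v_t$ satisfies~\eqref{cond:vt-in-L2}--\eqref{eq:ct-eq}. The aim is the estimate
\[
W_2(\mu_s,\mu_t)^2\le(t-s)\int_s^t\|v_\tau\|_{L^2(\mu_\tau)}^2\,d\tau,
\]
which delivers both $\mu\in AC^2([0,T];\P_2(\R))$ and $|\dot\mu|(t)\le\|v_t\|_{L^2(\mu_t)}$ at a.e.\ $t$. The standard strategy is to spatially mollify $v_t$ into smooth $v_t^\varepsilon$, construct the classical flow $T_s^t$ of $v^\varepsilon$, estimate the cost of the coupling $(\mathrm{id}\times T_s^t)_\#\mu_s$ by Cauchy-Schwarz in time, and pass to the limit $\varepsilon\downarrow 0$ using uniqueness of distributional solutions to the continuity equation for fixed initial datum $\mu_s$. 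In one dimension this can be bypassed more cleanly by again working with $\X_t$: using the computation $\partial_t F_t = -\rho_t v_t$ for the CDF $F_t$ and the identity $F_t(\X_t(m))=m$, the continuity equation translates to $\partial_t\X_t(m)=v_t(\X_t(m))$ in $L^2(0,1)$, so integrating in time and applying Cauchy-Schwarz immediately yields $\|\X_t-\X_s\|_{L^2(0,1)}^2\le(t-s)\int_s^t\|v_\tau\|_{L^2(\mu_\tau)}^2\,d\tau$, which is exactly the desired $W_2$-bound.

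Uniqueness of $\mu_t v_t$ follows from the one-dimensional nature of~\eqref{eq:ct-eq}: if both $v_t$ and $\tilde v_t$ realize it, then $\partial_x(\mu_t(v_t-\tilde v_t))=0$ in distributions on $\R$ for a.e.\ $t$, so $\mu_t(v_t-\tilde v_t)$ is a spatially constant finite signed measure and therefore zero. Combining the sufficiency inequality $|\dot\mu|(t)\le\|v_t\|_{L^2(\mu_t)}$ with the explicit velocity field from the necessity construction, which achieves $\|v_t\|_{L^2(\mu_t)}=|\dot\mu|(t)$ by construction, produces~\eqref{eq:metric-derivative-v}. The main obstacle in a general setting would be the flow argument in the sufficiency step, since a merely Borel $v_t$ has no classical flow; the 1D inverse-CDF reformulation is attractive precisely because it sidesteps the DiPerna-Lions-type theory that would be needed in higher dimensions.
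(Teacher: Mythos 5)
Your route is genuinely different from the paper's: the paper dispatches the equivalence and the metric-derivative formula by citing~\cite[Th.~8.3.1]{AmbrosioGigliSavare08} and only proves the uniqueness of $\mu_t v_t$ by hand (via an argument that is essentially the one you sketch --- $\partial_x(\mu(v-\tilde v))=0$ forces $\mu_t(v_t-\tilde v_t)$ to equal some $f(t)$, and then a test-function bound shows $f\equiv 0$). Your proposal instead gives a self-contained one-dimensional proof by embedding $(\P_2(\R),W_2)$ isometrically into $L^2(0,1)$ via inverse cumulative distribution functions, which is an attractive and more elementary alternative that avoids the DiPerna--Lions-style superposition/mollification machinery behind the AGS theorem. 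This is a real trade-off: the paper's one-line citation is shorter and dimension-independent, while your argument is transparent but genuinely one-dimensional.

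However, there is a technical gap you gloss over, and it sits in exactly the place where the $L^2(0,1)$-picture is slippery: measures with atoms. In the necessity direction you set $v_t(\X_t(m)):=V_t(m)$ and call this ``well-defined $\mu_t$-a.e.''; but $\X_t$ is not injective on a plateau (an atom of $\mu_t$), so this only makes sense if $V_t=\partial_t\X_t$ is $\mu_t$-a.e.\ constant on each plateau of $\X_t$. This is in fact true for a.e.\ $t$ along an $AC^2$ curve, but it requires an argument (and can fail at isolated times --- e.g.\ a Dirac mass splitting symmetrically, where $\partial_t\X_t$ is $\pm1$ on the two halves of $[0,1]$ while $\|v_0\|_{L^2(\mu_0)}=0$). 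In the sufficiency direction you use $F_t(\X_t(m))=m$, which holds only when $F_t$ is continuous, i.e.\ when $\mu_t$ is atom-free --- but nothing in the hypotheses rules out atoms (take $\mu_t=\delta_{a(t)}$ with $a\in AC^2$). Neither issue is fatal, and the statement only claims the identity~\eqref{eq:metric-derivative-v} for a.e.\ $t$, but a complete proof along your lines would need to justify that the exceptional set of ``bad'' times is Lebesgue-null; as written, the sketch implicitly assumes it away. If you want a fully elementary 1D proof, it is worth knowing that the identity $\partial_t\X_t(m)=v_t(\X_t(m))$ for solutions of~\eqref{eq:ct-eq} with $L^2$ velocity, including the atom case, is a known but nontrivial fact, and you should either cite it or supply the plateau argument.
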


\begin{proof}
The statement follows directly from~\cite[Th.~8.3.1]{AmbrosioGigliSavare08}; we only need to prove uniqueness of $v$. Assume that there exist two  functions $(t,x)\mapsto v(t,x), \tilde v(t,x)$ as in the Lemma. Then $\partial_x(\mu(v-\tilde v)) = 0$ in $(0,T)\times \R$ in the sense of distributions, and there exists a Borel measurable function $f:[0,T]\to\R$ such that $\mu_t(x)(v_t(x)-\tilde v_t(x)) = f(t)$ for Lebesgue-almost all $(t,x)\in (0,T)\times\R$. 

We then calculate for $a<b$ and $\varphi\in C_b([0,T])$,
\begin{align*}
(b-a)\int_0^T \varphi(t) f(t) \,dt
&= \int_0^T \varphi(t) \int_a^b (v_t(x)-\tilde v_t(x))\mu_t(dx) \,dt\\
&\leq \int_0^T \varphi(t) \biggl\{ \frac12\int_a^b (v_t(x)-\tilde v_t(x))^2\mu_t(dx) 
+  \frac12\int_a^b \mu_t(dx)\biggr\}\,dt\\
&\leq \|\varphi\|_\infty \int_0^T \Bigl\{\|v_t\|_{L^2(\mu_t)}^2+\|\tilde v_t\|_{L^2(\mu_t)}^2 + \frac12\Bigr\}
\, dt.
\end{align*}
Since the right-hand side does not depend on $(b-a)$, we find $\int_0^T\varphi(t)f(t)\, dt = 0$ for all $\varphi\in C_b([0,T])$, and therefore $f=0$. It follows that $\mu v$ and $\mu \tilde v$ are almost everywhere equal. 
\end{proof}

\subsection{Functionals on Wasserstein space}
\label{s:Wasserstein-functionals}

\begin{definition}[$\lambda$-convex functionals; {\cite[Ch.~9]{AmbrosioGigliSavare08}}]
Fix $\lambda\in \R$. The functional $F:\P_2(\R)\to\R\cup\{\infty\}$ is called \emph{$\lambda$-convex} if
\[
F(\mu_t^{1\to2}) \leq (1-t)F(\mu^1) + tF(\mu^2) - \frac\lambda2 t (1-t) W_2^2(\mu^1,\mu^2),
\]
where $\mu^{1\to2}_t$ is the constant-speed geodesic connecting $\mu^1$ to $\mu^2$ (see e.g.~\cite[Sec.~7.2]{AmbrosioGigliSavare08}).
\end{definition}

\begin{definition}[Fr\'echet subdifferentials; {\cite[Def.~10.1.1]{AmbrosioGigliSavare08}}]
\label{def:subdiff}
Let $F:\P_2(\R)\to \R\cup\{\infty\}$, and let  $\mu\in D(F) := \{\mu': F(\mu')<\infty\}$ be Lebesgue-absolutely-continuous. The \emph{Fr\'echet subdifferential} $\partial F(\mu)$ is the set of all $\xi\in L^2(\mu)$ such that
\[
F(\nu)-F(\mu)\geq \int_\R \xi(x) (\bm t_\mu^\nu(x)-x)\, \mu(dx) + o(W_2(\mu,\nu))
\qquad \text{as }\nu\to\mu.
\]
The subdifferential is a closed convex subset of $L^2(\mu)$; if it is non-empty, it therefore admits a unique element $\xi^\circ$ of minimal $L^2(\mu)$-norm. 
We write $\partial^\circ F(\mu) := \xi^\circ$ if this element exists. 
\end{definition}

\begin{lemma}[Subdifferentials and the chain rule; {\cite[Lemma~10.1.5 and Proposition~10.3.18]{AmbrosioGigliSavare08}}]
\begin{enumerate}
\item \label{lem:i:local-slope-minimal-element}
In the context of Definition~\ref{def:subdiff}, if the subdifferential is non-empty, then the local slope~\eqref{def:local-slope-abstract} is finite and satisfies
\begin{equation}
\label{char:metric-slope-subdifferential}
|\partial F|(\mu) = \|\xi^\circ\|_{L^2(\mu)} = \inf \{\|\xi\|_{L^2(\mu)}: \xi\in \partial F(\mu)\}.
\end{equation}
\item \label{lem:i:chain-rule}
The following \emph{chain rule} holds. Let $F:\P_2(\R)\to\R\cup\{\infty\}$ be $\lambda$-convex, and let $\mu\in AC^2([0,T];\P_2(\R))$ be such that 
\begin{enumerate}
\item $\mu_t$  is Lebesgue-absolutely-continuous and $\partial F(\mu_t) \not= \emptyset$ for almost all $t\in[0,T]$;
\item We have
\[
\int_0^T |\dot\mu|(t) |\partial F|(\mu_t)\, dt < \infty.
\]
\end{enumerate}
For any $0\leq s\leq t\leq T$ and any selection $\xi_\sigma\in \partial F(\mu_\sigma)$ we then have 
\begin{equation}
\label{eq:chain-rule}
F(\mu_t)-F(\mu_s) = \int_s^t \int_\R v_\sigma(x) \xi_\sigma(x) \,\mu_\sigma(dx) d\sigma,
\end{equation}
where $v_t$ is the velocity field given by Lemma~\ref{lemma:char-AC2}.
\end{enumerate}
\end{lemma}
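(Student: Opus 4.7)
The plan is to prove the two items separately; although both rest on the subdifferential inequality combined with the transport-map structure of Lemma~\ref{l:props-W2}, they require distinct ingredients.

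For part~(\ref{lem:i:local-slope-minimal-element}), the bound $|\partial F|(\mu)\leq \inf_{\xi\in\partial F(\mu)}\|\xi\|_{L^2(\mu)}$ is immediate from Cauchy--Schwarz: any $\xi\in\partial F(\mu)$ gives
\[
F(\mu)-F(\nu) \leq -\int_\R \xi(x)(\bm t_\mu^\nu(x)-x)\,\mu(dx) + o(W_2(\mu,\nu)),
\]
and since $\|\bm t_\mu^\nu-\mathrm{id}\|_{L^2(\mu)}=W_2(\mu,\nu)$ by Lemma~\ref{l:props-W2}, dividing through yields $(F(\mu)-F(\nu))_+/W_2(\mu,\nu)\leq \|\xi\|_{L^2(\mu)}+o(1)$. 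The matching lower bound $|\partial F|(\mu)\geq \|\xi^\circ\|_{L^2(\mu)}$ is the substantive half; I would prove it by testing along the perturbation $\nu_\varepsilon := (\mathrm{id}-\varepsilon\xi^\circ)_\#\mu$ (for which $W_2(\mu,\nu_\varepsilon)\leq \varepsilon\|\xi^\circ\|_{L^2(\mu)}$), combined with a projection argument: if the slope were strictly smaller than $\|\xi^\circ\|_{L^2(\mu)}$, one could modify $\xi^\circ$ orthogonally in $L^2(\mu)$ to produce a subdifferential element of strictly smaller norm, contradicting the minimal-norm definition of $\xi^\circ$.

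For part~(\ref{lem:i:chain-rule}), the $\lambda$-convexity upgrades the infinitesimal subdifferential inequality to the global form
\[
F(\nu) - F(\mu) \geq \int_\R \xi(x)(\bm t_\mu^\nu(x)-x)\,\mu(dx) + \frac{\lambda}{2} W_2^2(\mu,\nu), \qquad \xi\in \partial F(\mu),
\]
now valid for every $\nu\in\P_2(\R)$. Applying this at time $t$ with $\nu=\mu_{t+h}$, and again at time $t+h$ with $\nu=\mu_t$ and the transport reversed, sandwiches $F(\mu_{t+h})-F(\mu_t)$ between two explicit expressions. Dividing by $h$, the $\lambda$-terms are $O(W_2^2(\mu_t,\mu_{t+h})/h)=O(h\,|\dot\mu|^2(t))$ and disappear in the limit, while the remaining terms identify the a.e.\ derivative of $t\mapsto F(\mu_t)$ with $\int_\R v_t\,\xi_t\,d\mu_t$ provided one shows that $h^{-1}(\bm t_{\mu_t}^{\mu_{t+h}}-\mathrm{id})\to v_t$ weakly in $L^2(\mu_t)$ as $h\to 0$, and symmetrically for the reversed transport. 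Absolute continuity of $t\mapsto F(\mu_t)$, needed to integrate in $t$ to recover~\eqref{eq:chain-rule}, follows from the same sandwich combined with the integrability hypothesis on $|\dot\mu|\cdot |\partial F|$.

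The main obstacle is precisely this convergence of transport-map difference quotients to the Eulerian velocity $v_t$ from Lemma~\ref{lemma:char-AC2}. The clean route, executed in~\cite[Ch.~8 and 10]{AmbrosioGigliSavare08}, is to lift $\mu\in AC^2([0,T];\P_2(\R))$ to a probability measure on $AC^2$-paths in $\R$ via the superposition principle for the continuity equation, whereupon $v_t$ is realized as the a.e.-derivative of particle trajectories and the required convergence becomes a Lebesgue-differentiation statement along curves. Because this disintegration argument and the projection identity underlying~\eqref{char:metric-slope-subdifferential} are both executed in full generality in~\cite{AmbrosioGigliSavare08}, the practical choice here is to invoke those results directly and record only the explicit identities~\eqref{char:metric-slope-subdifferential} and~\eqref{eq:chain-rule} for use in the sequel.
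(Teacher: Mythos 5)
The paper gives no proof of this lemma: it is stated as a citation of \cite[Lemma~10.1.5 and Prop.~10.3.18]{AmbrosioGigliSavare08}, and your proposal ultimately reaches the same conclusion — invoke AGS directly — so in that sense your approach matches the paper's exactly.

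That said, since you do sketch a proof, a couple of remarks on the sketch itself. The upper-bound half of part~(\ref{lem:i:local-slope-minimal-element}), i.e.\ $|\partial F|(\mu)\leq \|\xi\|_{L^2(\mu)}$ for every $\xi\in\partial F(\mu)$, is correctly reduced to Cauchy--Schwarz plus $\|\bm t_\mu^\nu-\mathrm{id}\|_{L^2(\mu)}=W_2(\mu,\nu)$. The lower-bound half is the substantive part and your ``projection argument'' as stated does not constitute a proof: the subdifferential inequality only gives one-sided control on $F(\mu)-F(\nu_\varepsilon)$, so there is nothing in the definition alone forcing a slope as large as $\|\xi^\circ\|_{L^2(\mu)}$, and ``modifying $\xi^\circ$ orthogonally'' does not by itself produce a new subdifferential element. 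In AGS this direction is established under $\lambda$-convexity along (generalized) geodesics, using the strong-subdifferential characterization and a Moreau--Yosida-type argument; $\lambda$-convexity is therefore an essential hypothesis here even though part~(\ref{lem:i:local-slope-minimal-element}) of the lemma as stated in the paper only says ``in the context of Definition~\ref{def:subdiff}''. Your sketch of part~(\ref{lem:i:chain-rule}) --- upgrading to a global subdifferential inequality via $\lambda$-convexity, sandwiching $F(\mu_{t+h})-F(\mu_t)$, observing that the $\lambda$-terms are $O(h|\dot\mu|^2)$, and reducing everything to the convergence $h^{-1}(\bm t_{\mu_t}^{\mu_{t+h}}-\mathrm{id})\to v_t$ in $L^2(\mu_t)$ plus absolute continuity of $t\mapsto F(\mu_t)$ --- is an accurate summary of the AGS argument, and you are right that the superposition principle is the clean route to that convergence. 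Given the two soft spots (the lower bound in~(\ref{lem:i:local-slope-minimal-element}) and the transport-difference-quotient convergence), deferring to AGS is the right call, and it is also what the paper does.
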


\subsection{Wasserstein gradient flows}
\label{ss:Wass-GF}

Recall from Section~\ref{sec:intro-GF-LDP} the definition of `gradient flow' that we use here, applied to the case of the Wasserstein metric space $\P_2(\R)$ and a functional $F:\P_2(\R)\to\R\cup {+\infty}$: A function $\rho\in AC^2([0,T];\P_2(\R))$ is a gradient-flow solution if for all $t>0$, 
\begin{equation}
\label{def:Wass-GFs}
0 = F(\mu_t) -F(\mu_0) + \frac12 \int_0^t \Bigl[ |\dot\mu|^2(s) + |\partial F|^2(\mu_s)\Bigr]\, ds.
\end{equation}
By~\cite[Th.~11.1.3]{AmbrosioGigliSavare08}, if $F$ is proper, lower semicontinuous, and $\lambda$-convex, then solutions in this sense satisfy the pointwise property
\[
v_t = -\partial^\circ F(\rho_t) \qquad\text{for a.e. }t>0.
\]

In the case of the Wasserstein gradient flow of $\hFreeEnergy$, we show in Lemma~\ref{lemma:properties-of-the-functionals} that $\hFreeEnergy$ satisfies these properties, and that $\partial^\circ \hFreeEnergy(\rho)$ is 
$\partial_y \xi$, where the function $\xi$ was already introduced in~\eqref{def:gradW}, 
\[
\xi (\rho) :=\frac12 \log \frac{\alpha\rho}{1-\alpha\rho} + \frac{\alpha\rho}{2(1-\alpha\rho)} + V + W*\rho.
\]
If $\rho$ is a gradient-flow solution with $\hFreeEnergy(\rho_0)<\infty$, then writing~\eqref{def:Wass-GFs} as 
\[
\hFreeEnergy(\rho_t) + \frac12 \int_0^t \Bigl[|\dot\rho|^2(s) + |\partial\hFreeEnergy(\rho_s)|^2 \Bigr] \, dt  = \hFreeEnergy(\rho_0),
\]
it follows that $|\partial\hFreeEnergy(\rho_t)|<\infty$ and therefore $\partial_y \xi(\rho_t)\in L^2(\rho_t)$ for almost all $t$. 
Therefore solutions $\rho$ of the gradient flow of $\hFreeEnergy$ satisfy
\begin{equation}
\label{eq:GF-v-xi}
\partial_t\rho = \partial_y \bigl[\rho \partial_y \xi(\rho)\bigr],
\end{equation}
in the sense of distributions on $(0,T)\times \R$.

\begin{lemma}
\label{l:GF-sol-is-distr-sol}
Gradient-flow solutions of $\hFreeEnergy$ are unique, and a gradient-flow solution also is a distributional solution in the sense of Definition~\ref{def:distr-solutions}. 
%
\end{lemma}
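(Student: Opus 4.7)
The plan is to treat the two claims in the lemma separately: first, that every gradient-flow solution is a distributional solution of~\eqref{eq:HL-distributional}; second, that gradient-flow solutions are unique. For the distributional claim, I would start from equation~\eqref{eq:GF-v-xi}, whose derivation is already sketched in the paragraph preceding the lemma: Lemma~\ref{lemma:char-AC2} produces a Borel velocity $v$ satisfying the continuity equation with $\|v_t\|_{L^2(\rho_t)}=|\dot\rho|(t)$; the EDP equality~\eqref{def:Wass-GFs} forces $|\partial\hFreeEnergy|(\rho_t)<\infty$ for a.e.\ $t$, so by~\eqref{char:metric-slope-subdifferential} the subdifferential is nonempty and its minimal element equals $\partial_y\xi(\rho_t)$ by Lemma~\ref{lemma:properties-of-the-functionals}; applying the chain rule~\eqref{eq:chain-rule} with this selection together with Cauchy-Schwarz forces $v_t=-\partial_y\xi(\rho_t)$ almost everywhere, where the uniqueness clause of Lemma~\ref{lemma:char-AC2} is used to promote this from an identity in $L^2(\rho_t)$ to the distributional identification $\rho_t v_t = -\rho_t \partial_y\xi(\rho_t)$.

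Once this is in hand, I would convert the continuity equation into~\eqref{eq:HL-distributional} by a short calculation. Expanding
\[
\rho\,\partial_y\xi(\rho) = \frac{\partial_y\rho}{2(1-\alpha\rho)^2} + \rho\,\partial_y\bigl(V+W*\rho\bigr)
\]
on the full-$\rho_t$-measure set $\{\rho<1/\alpha\}$, and observing that $\partial_y\bigl[1/(1-\alpha\rho)\bigr]=\alpha\,\partial_y\rho/(1-\alpha\rho)^2$, the first term equals $(2\alpha)^{-1}\partial_y\bigl[1/(1-\alpha\rho)\bigr]$. Taking one further distributional derivative and substituting into $\partial_t\rho=\partial_y[\rho\,\partial_y\xi(\rho)]$ yields exactly the form in~\eqref{eq:HL-distributional}. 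These manipulations are legitimate at the distributional level because $\|v_t\|_{L^2(\rho_t)}^2=|\dot\rho|^2(t)\in L^1(0,T)$, so $\rho\,\partial_y\xi(\rho)\in L^1_{\mathrm{loc}}((0,T)\times\R)$ by Cauchy-Schwarz.

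For uniqueness I would invoke the Wasserstein gradient-flow theory of~\cite{AmbrosioGigliSavare08}. Lemma~\ref{lemma:properties-of-the-functionals} will certify that $\hFreeEnergy$ is proper, lower semicontinuous, and $\lambda$-convex along $W_2$-geodesics, the property already used tacitly before Corollary~\ref{cor:existence-of-solution}. Under these hypotheses, \cite[Ch.~11]{AmbrosioGigliSavare08} shows that Curves of Maximal Slope in the sense of~\eqref{def:Wass-GFs} coincide with the EVI gradient flow of $\hFreeEnergy$, which satisfies the $\lambda$-contraction $W_2(\rho^1_t,\rho^2_t)\leq e^{-\lambda t}W_2(\rho^1_0,\rho^2_0)$; in particular, equal initial data produce equal trajectories. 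The main delicate point in the plan is not the uniqueness step but Step~1: one must check the hypotheses of the chain rule~\eqref{eq:chain-rule}, notably non-emptiness of $\partial\hFreeEnergy(\rho_t)$ almost everywhere and $\int_0^T|\dot\rho|(t)\,|\partial\hFreeEnergy|(\rho_t)\,dt<\infty$. Both follow from the EDP equality, which forces $|\partial\hFreeEnergy|(\rho_t)\in L^2(0,T)$ and hence (via~\eqref{char:metric-slope-subdifferential}) the existence of $\partial^\circ\hFreeEnergy(\rho_t)$; this is precisely where the preparatory analysis of $\hFreeEnergy$ in Section~\ref{s:functionals} pays off.
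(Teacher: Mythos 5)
Your overall route matches the paper's: for uniqueness you invoke $\lambda$-convexity and lower semicontinuity from Lemma~\ref{lemma:properties-of-the-functionals} together with the theory of~\cite{AmbrosioGigliSavare08} (the paper uses \cite[Th.~11.1.4]{AmbrosioGigliSavare08} rather than the EVI contraction, but this is the same circle of ideas), and for the evolution you recover $\partial_t\rho=\partial_y[\rho\,\partial_y\xi(\rho)]$ in distributions from the EDP equality exactly as the paper does in Section~\ref{ss:Wass-GF}. The problem is the passage from this to~\eqref{eq:HL-distributional}, where you write ``observing that $\partial_y\bigl[1/(1-\alpha\rho)\bigr]=\alpha\,\partial_y\rho/(1-\alpha\rho)^2$'' and treat the chain rule as a free identity. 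It is not: the map $s\mapsto 1/(1-\alpha s)$ is not locally Lipschitz up to $s=1/\alpha$, and the only a priori information you have is $\rho_t<1/\alpha$ a.e., with $\rho_t$ possibly approaching $1/\alpha$. The $L^1_{\mathrm{loc}}$ bound on $\rho\,\partial_y\xi(\rho)$ that you deduce from Cauchy--Schwarz tells you the right-hand side of the chain-rule identity is integrable, but it does not tell you that the distributional derivative of $1/(1-\alpha\rho_t)$ is what you want it to be; the standard Sobolev chain rule requires either a Lipschitz nonlinearity or a separate argument.

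This is precisely the content of the auxiliary lemma the paper proves immediately after Lemma~\ref{l:GF-sol-is-distr-sol}: writing $\psi$ for the pure entropy part of $\xi$ and $\widetilde\psi(s)=\frac1{2\alpha(1-\alpha s)}-\frac1{2\alpha}$, with $\widetilde\psi'=s\psi'(s)$, one needs $u\,\partial_y\psi(u)=\partial_y\widetilde\psi(u)$ in $L^1_{\mathrm loc}$ for a.e.\ time slice $u=\rho_t$. The paper's argument has three ingredients that your sketch is missing: first, the lower bound $\psi'(s)\geq(2s)^{-1}$ combined with $\partial_y\psi(u)\in L^2_u$ yields $\partial_y u\in L^2$ and hence continuity of $u$; second, one truncates to the set $A_\e=\{\e<u<1/\alpha-\e\}$ where $\psi$ and $\widetilde\psi$ are smooth and the chain rule is immediate; third, one passes $\e\downarrow 0$ by dominated convergence, using the pointwise bound $|\varphi\, u\,\partial_y\psi(u)|\le\frac12 u\varphi^2+\frac12 u(\partial_y\psi(u))^2$. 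Without some version of this argument, the single word ``observing'' is a genuine gap in your proof, and it is the place where the hard-rod singularity at $\rho=1/\alpha$ actually bites.
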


\begin{proof}
The $\lambda$-convexity and lower semicontinuity properties of the functional $\hFreeEnergy$ (Lemma~\ref{lemma:properties-of-the-functionals}), in combination with e.g.~\cite[Th.~11.1.4]{AmbrosioGigliSavare08}, together imply that gradient-flow solutions are unique.

To prove the distributional-solution property, set 
\[
\psi(s) := \frac12 \log \frac{\alpha s}{1-\alpha s}
  + \frac{\alpha s}{2(1-\alpha s)}\qquad\text{and}\qquad
\widetilde \psi(s) := \frac1{2\alpha(1-\alpha s)} - \frac1{2\alpha},
\]
so that $\widetilde\psi'(s) = s\psi'(s)$.

Comparing equation~\eqref{eq:GF-v-xi} with~\eqref{eq:HL-distributional} it follows that $\rho$ satisfies~\eqref{eq:HL-distributional} in the sense of distributions if we prove that $\bigl[\partial_y\widetilde\psi(\rho_t)\bigr](y) = \rho_t(y)\bigl[\partial_y \psi(\rho_t)\bigr](y)$ in the sense of distributions on $(0,T)\times\R$.
This identity follows from the next Lemma and the fact that $\partial_y {\xi}(\rho_t)\in L^2(\rho_t)$. 
\end{proof}

\begin{lemma}
Let $u\in L^\infty(\R)\cap \P(\R)$ satisfy $\partial_y \psi(u)\in L^2_u$ and $u<1/\alpha$ a.e.\ on $\R$. Then $u\partial_y \psi(u) = \partial_y \widetilde\psi(u)$ in $L^1_{\mathrm loc}(\R)$. 
\end{lemma}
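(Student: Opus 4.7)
The identity $\widetilde\psi'(s) = s\psi'(s)$ suggests examining the composition $\Phi := \widetilde\psi \circ \psi^{-1} : \R \to \R$. Its derivative is $\Phi'(r) = \psi^{-1}(r) \in [0, 1/\alpha)$, so $\Phi$ is globally Lipschitz with constant at most $1/\alpha$. Since $\widetilde\psi(u) = \Phi(\psi(u))$ pointwise on $\{u > 0\}$, the chain rule formally gives
\[
\partial_y \widetilde\psi(u) \;=\; \Phi'(\psi(u))\cdot \partial_y \psi(u) \;=\; \psi^{-1}(\psi(u))\cdot g \;=\; u\, g,
\]
where $g := \partial_y \psi(u)\in L^2_u$ and $u g\in L^2(\R)\subset L^1_{\mathrm{loc}}(\R)$ since $u\in L^\infty(\R)$. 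The plan is to make this calculation rigorous by truncating $\psi(u)$ from below and passing to the limit.

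The obstacle to a direct chain rule is that $\psi(s)\to -\infty$ as $s\to 0^+$, so $\psi(u)$ need not lie in $L^1_{\mathrm{loc}}(\R)$ when $\{u=0\}$ has positive Lebesgue measure. To handle this, for $M>0$ set $\varepsilon_M := \psi^{-1}(-M) > 0$ (so $\varepsilon_M\searrow 0$ as $M\to\infty$) and introduce
\[
u_M \;:=\; u\vee \varepsilon_M \;=\; \psi^{-1}\!\bigl(\psi(u)\vee(-M)\bigr).
\]
Then $\psi(u_M)=\psi(u)\vee(-M)$ takes values in $[-M,\psi(\|u\|_\infty)]$ and therefore lies in $L^\infty_{\mathrm{loc}}(\R)$, using that $\|u\|_\infty<1/\alpha$ in the setting where the lemma is applied. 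A Stampacchia-type truncation argument applied to the hypothesis $g=\partial_y \psi(u)\in L^2_u$ shows that $\psi(u_M)\in W^{1,1}_{\mathrm{loc}}(\R)$ with weak derivative $g\,\mathbf{1}_{\{u>\varepsilon_M\}}$. Applying the chain rule for composition with the Lipschitz map $\Phi$ then yields, in the sense of distributions,
\[
\partial_y \widetilde\psi(u_M) \;=\; \Phi'(\psi(u_M))\cdot g\,\mathbf{1}_{\{u>\varepsilon_M\}} \;=\; u_M\, g\,\mathbf{1}_{\{u>\varepsilon_M\}} \;=\; u\, g\,\mathbf{1}_{\{u>\varepsilon_M\}}.
\]

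It then remains to pass to the limit $M\to\infty$ in the resulting identity
\[
\int_\R \widetilde\psi(u_M)\,\phi'\,dy \;=\; -\int_\R u\, g\,\mathbf{1}_{\{u>\varepsilon_M\}}\,\phi\,dy, \qquad \phi\in C_c^\infty(\R).
\]
Pointwise a.e., $\widetilde\psi(u_M)\to \widetilde\psi(u)$ by continuity of $\widetilde\psi$ at $0$ (with $\widetilde\psi(0)=0$), and $u\, g\,\mathbf{1}_{\{u>\varepsilon_M\}}\to u g$ (both sides vanish on $\{u=0\}$). Dominated convergence applies: $\widetilde\psi(u_M)\le \widetilde\psi(\|u\|_\infty)$ uniformly in $M$, and $|u\,g\,\mathbf{1}_{\{u>\varepsilon_M\}}\phi|\le|ug\phi|\in L^1(\R)$, since $ug\in L^2(\R)$ and $\phi$ has compact support. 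Passing to the limit gives $\partial_y \widetilde\psi(u)=u g=u\,\partial_y\psi(u)$ as distributions, hence as an equality in $L^1_{\mathrm{loc}}(\R)$ because $ug\in L^2\subset L^1_{\mathrm{loc}}$. The main obstacle is the Stampacchia-type step: one must unpack the precise meaning of the hypothesis $\partial_y \psi(u)\in L^2_u$ far enough to justify $\partial_y(\psi(u)\vee(-M))=g\,\mathbf{1}_{\{u>\varepsilon_M\}}$. Once this truncation identity is in hand, the Lipschitz chain rule for $\Phi$ and dominated convergence do the rest.
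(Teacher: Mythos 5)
Your route is genuinely different from the paper's, and the key manipulation — defining $\Phi:=\widetilde\psi\circ\psi^{-1}$ as a globally Lipschitz transition function and trying to chain-rule through it — is an attractive idea. But the gap you flag at the end is a real one, and as written the argument does not close it.

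The problem is foundational, not merely technical: if $\{u=0\}$ has positive Lebesgue measure, then $\psi(u)=-\infty$ there, so $\psi(u)\notin L^1_{\mathrm{loc}}(\R)$ and the object $\partial_y\psi(u)$ is not a priori a distribution at all. The hypothesis ``$\partial_y\psi(u)\in L^2_u$'' therefore cannot be read as a statement about a weak derivative of $\psi(u)$ on $\R$, and there is no version of $\psi(u)$ to which a Stampacchia truncation lemma can be applied. Your step $\partial_y\bigl(\psi(u)\vee(-M)\bigr)=g\,\mathbf{1}_{\{u>\varepsilon_M\}}$ therefore assumes precisely the kind of weak differentiability that has not yet been established. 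Separately, your bound $\widetilde\psi(u_M)\leq\widetilde\psi(\|u\|_\infty)$ requires $\|u\|_\infty<1/\alpha$ strictly, which is not part of the lemma's hypotheses (the lemma only assumes $u<1/\alpha$ a.e., and $\mathrm{ess\,sup}\,u=1/\alpha$ is permitted), so your truncation only addresses the singularity at $0$ and silently excludes the one at $1/\alpha$.

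The observation that unlocks both issues is the one the paper's proof opens with and yours omits: since $\psi'(s)\geq(2s)^{-1}$ and $u\in L^\infty$, the hypothesis $\int u\,|\psi'(u)\,\partial_y u|^2<\infty$ forces $\partial_y u\in L^2(\R)$, hence (together with $u\in L^1\cap L^\infty$) $u\in H^1(\R)$, hence $u$ is continuous. Once $u$ is continuous, the sets $A_\e=\{\e<u<1/\alpha-\e\}$ are \emph{open}, $\psi$ and $\widetilde\psi$ are Lipschitz on $[\e,1/\alpha-\e]$, and the chain rule for $\psi(u)$, $\widetilde\psi(u)\in W^{1,2}(A_\e)$ is classical; the identity $u\,\partial_y\psi(u)=\partial_y\widetilde\psi(u)$ holds pointwise a.e.\ on each $A_\e$, and the limit $\e\downarrow 0$ is justified by dominated convergence against $\tfrac12 u\varphi^2+\tfrac12 u(\partial_y\psi(u))^2\in L^1(\R)$. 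In particular, this continuity is exactly what you would need to make your Stampacchia step rigorous (apply truncation to $u\in H^1$, not to $\psi(u)$), and the two-sided truncation $\e<u<1/\alpha-\e$ removes the need for $\|u\|_\infty<1/\alpha$. So your composition-with-$\Phi$ idea is salvageable, but only after first extracting $u\in H^1(\R)$ and its continuity from the hypothesis — which is the step your proposal is missing.
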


\begin{proof}
First note that since $\psi'(s) \geq (2s)^{-1}$, the property $\partial_y \psi(u)\in L^2_u$ implies that $\partial_y u\in L^2$ and that $u$ is continuous. 
For $\e>0$ set $A_\e := \{ y\in \R: \e< u(y) < 1/\alpha -\e\}$. Since $\psi$ and $\widetilde\psi$ are smooth on $[\e,1/\alpha-\e]$ and $u$ is continuous, on $A_\e$ we have $
u\partial_y \psi(u) = \partial_y \widetilde \psi(u)$. 

Take $\varphi\in C_c(\R)$. From the estimate $\varphi u \partial_y \psi(u)\leq \frac12 u\varphi^2 + \frac12 u(\partial_y \psi(u))^2$ and the Lebesgue dominated convergence theorem we find
\begin{align*}
\int_\R \varphi u \partial_y \psi(u) 
= \int_{u>0} \varphi u \partial_y \psi(u)
&= \lim_{\e\downarrow 0} \int_{A_\e} \varphi u \partial_y \psi(u)\\
&= \lim_{\e\downarrow 0} \int_{A_\e} \varphi  \partial_y \widetilde \psi(u)
= \int_{u>0} \varphi \partial_y \widetilde \psi(u)
= \int_\R \varphi \partial_y \widetilde \psi(u).
\end{align*}
This proves the assertion.
\end{proof}

\section{Large-deviation principles}
\label{s:ldp}
The theory of large deviations characterizes the probability of events that become exponentially small in an asymptotic sense. Consider a sequence of probability measures $\{ \gamma _n \} _{n=1} ^{\infty}$ on some space $\mathcal X$. Large-deviation theory describes exponentially small probabilities under the $\gamma _n$'s in the limit $n \to \infty$, in terms of a {\emph rate function} $I: \mathcal X \to [0, \infty]$, in the following (rough) sense: for $A \subset \mathcal X$,
\begin{align*}
	\gamma _n (A) \sim e^{- n \inf _{x \in A} I(x)}, \ \ \textrm{as } n \to \infty.
\end{align*} 
This is formalized by the notion of a large-deviation principle. Before giving the definition we define the type of functions $I$ of interest in this setting; $\mathcal X$ is here taken to be a complete separable metric space.
\begin{definition}
	A function $I: \mathcal X \to [0,\infty]$ is called a rate function if it is lower semicontinous. The function $I$ is called a good rate function if for each $\alpha \in [0, \infty)$, the sublevel sets $\{ x: I(x) \leq \alpha \}$ are compact.
\end{definition}
Note that for a good rate function lower semicontinuity follows from the compact sublevel sets.

We are now ready to state the definition of a large-deviation principle. The definition can be made more general, however the following form suffices for this paper.
\begin{definition}
\label{def:LDP}
	Let $\{ \gamma _n \}$ be a sequence of probability measures on a complete separable metric space $\mathcal X$. We say that the sequence $\{ \gamma _n \}$ satisfies a large-deviation principle with rate function $I: \mathcal{X} \to [0, \infty]$ if for every measurable set $A \subset \mathcal X$,
	\begin{align*}
		- \inf _{x \in A ^{\circ}} I(x) \leq \liminf _{n \to \infty} \frac{1}{n} \log \gamma _n  (A^\circ) \leq \limsup _{n \to \infty} \frac{1}{n}\log \gamma_n (\bar A) \leq - \inf _{x \in \bar A} I(x),
	\end{align*}
	where $A ^{\circ}$ and $\bar A$ denote the interior and closure, respectively, of the set $A$.
\end{definition}
This definition is also referred to as a {\it strong} large-devation principle and there is a related notion of a {\it weak large-deviation principle}: The sequence $\{ \gamma _n \}$ is said to satisfy a weak large-deviation principle, with rate function $I$, if the lower bound in the previous definition holds for all measurable sets, and the following upper bound holds for every $\alpha < \infty$:
\begin{align*}
	\limsup \frac{1}{n} \log \gamma _n (A) \leq -\alpha,
\end{align*}
for $A$ a compact subset of $\Psi_I (\alpha) ^c$, where $\Psi _I (\alpha)$ is the $\alpha$-sublevel set of $I$.

A weak LDP can be strengthened to a full, or strong, LDP by showing exponential tightness of $\{ \gamma _n \}$:
\begin{definition}
	The sequence $\{ \gamma _n \}$ is {\it exponentially tight} if for every $\alpha < \infty$, there exists a compact $K_{\alpha}$ such that 
	\begin{align*}
		\limsup_{n \to \infty} \frac{1}{n} \log \gamma _n (K_{\alpha} ^c) < - \alpha.
	\end{align*}
\end{definition}

If $\gamma _n \to \delta _x$ for some $x \in \mathcal X$, that is if the sequence of underlying random elements has a unique deterministic limit as $n \to \infty$, then $I(x) =0$ and $I(\tilde x) > I(x)$ for all $\tilde x \in \mathcal{X} \setminus \{ x\}$.

A useful result when dealing with large deviations is the so-called contraction principle, a continuous-mapping-type theorem for the large-deviation setting.
\begin{theorem}[Contraction principle for large-deviations \cite{DemboZeitouni98}]
	Let $\mathcal X$ and $\mathcal Y$ be two complete separable metric spaces and $f: \mathcal X \to \mathcal Y$ a continuous mapping. Suppose the sequence $\{\gamma _n\} \subset \mathcal{P} (\mathcal X)$ satisfies a large-deviation principle with good rate function $I : \mathcal{X} \to [0,\infty]$. Then the sequence of push-forward measures $\{  f_{\#} \gamma_n \}$ satisfies a large-deviation principle on $\mathcal Y$ with good rate function $\tilde I$ defined as
	\begin{align*}
		\tilde I (y) = \inf \{ I(x) : x\in \mathcal X, \  y= f(x) \}, \ \ y \in \mathcal Y.
	\end{align*}
\end{theorem}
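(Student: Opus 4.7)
The plan is to transfer both halves of the large-deviation principle through the continuous map $f$ using the basic identity $f_\#\gamma_n(A) = \gamma_n(f^{-1}(A))$, and to verify separately that $\tilde I$ is a good rate function. The core observation is that continuity of $f$ sends open sets to open preimages and closed sets to closed preimages, so both the upper and lower bounds for $\{\gamma_n\}$ on $\mathcal X$ translate directly to the corresponding bounds for $\{f_\#\gamma_n\}$ on $\mathcal Y$.

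First I would verify goodness of $\tilde I$. The key claim is that for each $\alpha \in [0,\infty)$,
\begin{equation*}
\{y \in \mathcal Y : \tilde I(y) \leq \alpha\} = f\bigl(\{x \in \mathcal X : I(x) \leq \alpha\}\bigr).
\end{equation*}
The inclusion $\supseteq$ follows immediately from the definition of $\tilde I$. For $\subseteq$, fix $y$ with $\tilde I(y) \leq \alpha$ and pick a minimising sequence $x_k$ with $f(x_k) = y$ and $I(x_k) \to \tilde I(y)$; since eventually $I(x_k) \leq \alpha + 1$ and $\{I \leq \alpha+1\}$ is compact, we extract a subsequence $x_k \to x^\star$, and by lower semicontinuity of $I$ and continuity of $f$ we get $I(x^\star) \leq \tilde I(y) \leq \alpha$ with $f(x^\star) = y$. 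Thus the infimum in the definition of $\tilde I$ is attained whenever finite, and $\{\tilde I \leq \alpha\}$ is the continuous image of a compact set, hence compact. Compact sublevel sets in particular yield lower semicontinuity of $\tilde I$.

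For the LDP bounds, let $G \subset \mathcal Y$ be open. Then $f^{-1}(G)$ is open in $\mathcal X$, and using $f^{-1}(G) = \bigcup_{y \in G} f^{-1}(\{y\})$ together with the LDP lower bound for $\gamma_n$ gives
\begin{align*}
\liminf_{n\to\infty} \frac{1}{n}\log f_\#\gamma_n(G)
 &= \liminf_{n\to\infty} \frac{1}{n}\log \gamma_n(f^{-1}(G)) \\
 &\geq -\inf_{x \in f^{-1}(G)} I(x)
 = -\inf_{y \in G}\inf_{x \in f^{-1}(\{y\})} I(x)
 = -\inf_{y \in G}\tilde I(y).
\end{align*}
Similarly, for closed $F \subset \mathcal Y$, the set $f^{-1}(F)$ is closed, and the LDP upper bound for $\gamma_n$ yields
\begin{equation*}
\limsup_{n\to\infty} \frac{1}{n}\log f_\#\gamma_n(F)
\leq -\inf_{x \in f^{-1}(F)} I(x) = -\inf_{y \in F}\tilde I(y).
\end{equation*}

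The proof is essentially mechanical; the only step that demands any care is the identification of the sublevel sets of $\tilde I$ with images of the sublevel sets of $I$, which is where the goodness hypothesis on $I$ (compactness of sublevel sets) is used to ensure the infimum in $\tilde I(y) = \inf\{I(x) : f(x) = y\}$ is attained, so that $\tilde I$ inherits goodness from $I$.
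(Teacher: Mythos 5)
Your proof is correct and is the standard argument for the contraction principle (cf.\ Theorem 4.2.1 of Dembo--Zeitouni, which the paper cites without reproving). The paper gives no proof of its own for this statement, so there is nothing to compare against; your verification of goodness via the identity $\{\tilde I\le\alpha\}=f(\{I\le\alpha\})$ and the transfer of upper/lower bounds through open and closed preimages is exactly the textbook route.
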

This result can be extended to  `approximately continuous' maps (see \cite[Section 4.2]{DemboZeitouni98}). To prove the main theorems of this paper we will use both the standard contraction principle above and a version with $n$-dependent maps.

\bigskip

We will also use the following `mean-field' localization result two times.
\begin{lemma}[Simple mean-field large-deviations result]
\label{l:mean-field-LDP}
Let $\mathcal X$ be a metric space. For each~$n$ let $P_n\in \P(\mathcal X)$, and for each $n$ and each $y\in\mathcal X$ let $Q_n^y\in \P(\mathcal X)$. Assume that for each $y$, $Q_n^y$ satisfies a strong large-deviation principle with good rate function $I^y$. Let $f: \mathcal X\to\R$ be lower semi-continuous. 

Assume that 
\begin{enumerate}
\item \label{l:mfldp:1}
If $I^y(y) < \infty$, then 
\[
\lim_{\delta\downarrow 0}\limsup_{n\to\infty}\,
\Bigl| \frac1n \log P_n(B_\delta(y)) - \frac1n \log Q_n^y(B_\delta(y)) + f(y)\Bigr| = 0.
\]
\item \label{l:mfldp:2}
If $I^y(y) = \infty$, 
\[
\sup_{\delta>0}\,\sup_{n\geq 1}\,
\Bigl| \frac1n \log P_n(B_\delta(y)) - \frac1n \log Q_n^y(B_\delta(y)) + f(y)\Bigr| =: C <\infty.
\]
\end{enumerate}
Then $P_n$ satisfies a weak large-deviation principle with good rate function $x\mapsto I^x(x) + f(x)$.
\end{lemma}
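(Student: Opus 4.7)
The plan is to use the standard characterization of a weak LDP through local rate functions: for a complete separable metric space, a sequence $\gamma_n$ satisfies a weak LDP with rate function $J$ if and only if for every $y$,
\[
-J(y) = \lim_{\delta\downarrow 0} \liminf_{n\to\infty} \frac1n \log \gamma_n(B_\delta(y)) = \lim_{\delta\downarrow 0} \limsup_{n\to\infty} \frac1n \log \gamma_n(B_\delta(y))
\]
(see e.g.\ Dembo--Zeitouni Theorem~4.1.11). The first step is to pin down these local rate functions for the auxiliary sequences $Q_n^y$: since $Q_n^y$ satisfies a strong LDP with good rate $I^y$, the LDP upper bound applied to $\overline{B_\delta(y)}$ together with the lower semicontinuity of $I^y$ at $y$ yields $\lim_{\delta\downarrow 0} \limsup_n \frac1n \log Q_n^y(B_\delta(y)) \le -I^y(y)$, while the LDP lower bound on the open ball $B_\delta(y)$ gives the matching inequality from below; hence both outer limits equal $-I^y(y)$.

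Next, fix $y\in\mathcal X$ and transfer this characterization to $P_n$ via the two hypotheses. If $I^y(y)<\infty$, hypothesis~\ref{l:mfldp:1} states that the additive discrepancy between $\frac1n\log P_n(B_\delta(y))$ and $\frac1n\log Q_n^y(B_\delta(y))-f(y)$ vanishes in the iterated limit $\lim_{\delta\downarrow 0}\limsup_{n\to\infty}$. A uniform bound $|a_n-b_n|\le\varepsilon$ holding for all large $n$ forces $|\limsup_n a_n-\limsup_n b_n|\le\varepsilon$ (and the analogous statement for $\liminf$), so letting $\delta\downarrow 0$ transfers the identification to $P_n$: both outer limits of $\frac1n\log P_n(B_\delta(y))$ equal $-I^y(y)-f(y)$. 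If instead $I^y(y)=\infty$, hypothesis~\ref{l:mfldp:2} provides only a uniform bound $C$ on the discrepancy; however, since $\lim_{\delta\downarrow 0}\limsup_n \frac1n\log Q_n^y(B_\delta(y))=-\infty$, the shifted quantity $\lim_{\delta\downarrow 0}\limsup_n \frac1n\log P_n(B_\delta(y))$ also equals $-\infty$, matching $-(I^y(y)+f(y))=-\infty$.

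Having identified the pointwise local rate function as $J(y):=I^y(y)+f(y)$ for every $y$, the standard criterion above yields the claimed weak LDP, and $J$ is automatically lower semicontinuous in $y$ because it coincides with the LSC map $y\mapsto -\lim_{\delta\downarrow 0}\liminf_{n\to\infty}\frac1n\log P_n(B_\delta(y))$. The most delicate point is the second case $I^y(y)=\infty$: here hypothesis~\ref{l:mfldp:2} gives only boundedness of the discrepancy rather than its vanishing, and one must exploit the fact that $-I^y(y)=-\infty$ dominates the constant offset $C$ in order to push the local rate for $P_n$ all the way to $-\infty$ as well. Compactness of sublevel sets of $J$ is not a consequence of these arguments and would have to be supplied separately in the intended application, e.g.\ via independent exponential tightness of the family $\{P_n\}$.
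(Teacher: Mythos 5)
Your proof takes essentially the same route as the paper: both invoke the local characterization of a weak LDP (Dembo--Zeitouni Theorem 4.1.11) and then compare $\frac1n\log P_n(B_\delta(y))$ to $\frac1n\log Q_n^y(B_\delta(y))$ using hypotheses (1) and (2), with the lower semicontinuity of $I^y$ controlling the $\delta\downarrow 0$ limit and the constant $C$ being absorbed when $I^y(y)=\infty$; the only cosmetic difference is that you first pin down the local rate of $Q_n^y$ as an intermediate step rather than chaining the inequalities directly, as the paper does. Your closing remark is also apt: the paper's proof likewise establishes only the identity~\eqref{eq:liminf-limsup-equal} and hence a weak LDP, but does not verify that the sublevel sets of $y\mapsto I^y(y)+f(y)$ are compact, so the word ``good'' in the conclusion is not substantiated by the argument as written; in the paper's applications this gap is harmless because exponential tightness of $\{P_n\}$ is proved separately before the lemma is used (e.g.\ at the end of the proof of Lemma~\ref{l:LDP-W=0-initial-data}), which both upgrades the weak LDP to a strong one and supplies goodness.
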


\begin{proof}
By \cite[Th.~4.1.11]{DemboZeitouni98}, the sequence $P_n$ satisfies a weak large-deviation principle provided that 
for all $y\in \mathcal X$,
\begin{equation}
\label{eq:liminf-limsup-equal}
\lim_{\delta\downarrow 0} \liminf_{n\to\infty} \frac1{n} \log P_n(B_{\delta}(y))
= 
\lim_{\delta\downarrow 0} \limsup_{n\to\infty} \frac1{n} \log P_n(B_{\delta}(y)),
\end{equation}
in which case the common value of the two is the negative of the rate function at $y$. 

If $y$ is such that $I^y(y)<\infty$, then by condition~\ref{l:mfldp:1}, and using the lower semi-continuity of $I^y$, 
\begin{align*}
\lim_{\delta\downarrow 0}\liminf_{n\to\infty} \frac1{n} \log P_n(B_{\delta}(y))
&\geq\lim_{\delta\downarrow 0}\liminf_{n\to\infty}\frac1n \log Q_n^y(B_\delta(y)) - f(y)\\
&\geq\lim_{\delta\downarrow 0}\Bigl(-\inf_{x\in B_\delta(y)} I^y(x) \Bigr)- f(y)\\
&= -I^y(y) -f(y).
\end{align*}
Similarly, 
\begin{align*}
\lim_{\delta\downarrow 0}\limsup_{n\to\infty} \frac1{n} \log P_n(B_{\delta}(y))
&\leq\lim_{\delta\downarrow 0}\limsup_{n\to\infty}\frac1n \log Q_n^y(B_\delta(y)) - f(y)\\
&\leq\lim_{\delta\downarrow 0}\Bigl(-\inf_{x\in \overline{B_\delta(y)}} I^y(x) \Bigr)- f(y)\\
&= -I^y(y) -f(y).
\end{align*}
This proves~\eqref{eq:liminf-limsup-equal} for the case $I^y(y)<\infty$. If $I^y(y) = \infty$, then by condition~\ref{l:mfldp:2}, 
\begin{align*}
\lim_{\delta\downarrow 0}\limsup_{n\to\infty} \frac1{n} \log P_n(B_{\delta}(y))
&\leq\lim_{\delta\downarrow 0}\limsup_{n\to\infty}\frac1n \log Q_n^y(B_\delta(y)) - f(y) + C\\
&\leq\lim_{\delta\downarrow 0}\Bigl(-\inf_{x\in \overline{B_\delta(y)}} I^y(x) \Bigr)- f(y) + C\\
&= -\infty.
\end{align*}
This concludes the proof of the lemma. 
\end{proof}

\section{The particle system \(Y^n\) and the transformed particle system \(X^n\)}
\label{sec:part-systems}
As mentioned in the introduction, the proofs of the results of this paper are based on  a `compression' mapping that is very specific for this system, and which was already illustrated in Figure~\ref{fig:compression-map}. 

\subsection{The `compression' map}
\label{ss:compression_map}
The idea is to consider a collection of rods of length $\alpha/n$ in the one-dimensional domain $\spY$, described by their empirical measure, and map them to a collection of zero-length particles by `collapsing' them to zero length and moving  the rods on the right-hand side up towards the left. For notational reasons we prefer to define the inverse operation, which is to map zero-length particles in $\spX$ to particles of length $\alpha/n$ in $\spY$ by `expanding' each zero-length particle to length $\alpha/n$ and `pushing along' all the particles to the right. 

This mapping comes in two forms, one for the discrete case and one for the continuous case. For convenience we write $\P^n(E)$ for the set of empirical measures of $n$ points, i.e. 
\begin{equation}
\label{def:Pn}
\P^n(E) := \biggl\{ \frac1n \sum_{i=1}^n \delta_{z_i}: z_i\in E, \ i=1,\dots,n\biggr\}.
\end{equation}
\begin{definition}[Expansion maps]
\label{def:AAn}
\noindent
\begin{enumerate}
\item The operator $A_n$ maps empirical measures   of zero-length particles to the corresponding empirical measures of rods by expanding each particle by $\alpha/n$:
\begin{equation}
\label{eq:mapping-particles}
A_n: \P^n(\spX) \to \P^n(\spY), \qquad 
	\frac{1}{n} \sum _{i=1} ^n \delta _{x_i} \mapsto \frac{1}{n} \sum _{i=1} ^n \delta _{y_i}, \qquad  y_i = x_i  + (i-1)\frac\alpha n,
\end{equation}
where we assume that the $x_i$ are ordered ($x_i\leq x_{i+1}$).
\item The operator $A$ maps `particle densities' to `rod densities' in a similar way: if $\rho\in \P(\spX)$, then
\[
A: \P(\spX) \to \P(\spY), \qquad \mu \mapsto \rho,
\]
where $\rho$ is constructed as follows: let\/ $\X$ be the inverse cumulative distribution function  (icdf) of $\mu$, i.e. 
\begin{equation}
\label{def:icdf-X}
\X (m) := \inf \{ x\in \spX: F(x) > m\} , \quad F(x) = \mu((-\infty,x]),
\end{equation}
and set 
\begin{equation}
\label{eq:def:AAn-Y}
\Y(m) := \X(m) + \alpha m, \qquad\text{for all }m\in [0,1].
\end{equation}
Then $\rho\in \P(\spY)$ is defined to be the measure whose icdf is\/ $\Y$, i.e.\ we set $\rho$ to be the distributional derivative of the corresponding cumulative distribution function $G$,
\[
\rho := G', \qquad \text{with }G(y) := \inf\bigl\{ m\in [0,1]: \Y(m) > y\bigr\}
\quad \text{for }y\in \spY.
\]
\end{enumerate}
\end{definition}

\begin{lemma}[Wasserstein properties of the expansion maps]
\label{lemma:isometries}
\noindent
\begin{enumerate}
\item $A$ and $A_n$ are isometries for the Wasserstein-2 distance, i.e.\ $W_2(A\mu_1,A\mu_2) = W_2(\mu_1,\mu_2)$ and $W_2(A_n\mu_{n,1},A_n\mu_{n,2}) = W_2(\mu_{n,1},\mu_{n,2})$;
\item For $\mu \in \P (\spX)$ and  $\mu _n \in \P^n (\spX)$,
\begin{equation}
\label{est:W2-A-An}
\big|W_2(A\mu,A_n\mu_n) - W_2(\mu,\mu_n)\big| \leq \frac \alpha n;
\end{equation}
\item \label{lemma:isometries:smeared}
If $\mu_n = \frac1n \sum _{i=1} ^n \delta _{x_i} \in \P^n(\spX)$, then $A\mu$ is equal to
\begin{align*}
	\frac{1}{n}\sum _{i=1}^n \delta ^{\alpha, n} _{x_i + (i-1)\alpha/n} := \frac{1}{n}\sum _{i=1}^n \frac{n}{\alpha}\Indicator_{[x_i+ (i-1)\alpha/n,\, x_i + i\alpha/n]}
\end{align*}
\end{enumerate}
\end{lemma}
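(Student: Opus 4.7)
The plan is to reduce everything to the one-dimensional representation of $W_2$ in terms of inverse cumulative distribution functions, given by Lemma~\ref{l:props-W2}\ref{lem:W2:icdf}. The central observation is that, expressed in icdf coordinates, both $A$ and $A_n$ act by adding to the icdf a shift that depends only on $m$ (and on $n$), not on the measure being transformed. Since $W_2^2$ is the squared $L^2(0,1)$-distance of icdfs, any such measure-independent shift cancels out, giving the isometry.

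To make this precise for $A$, I would simply quote the definition~\eqref{eq:def:AAn-Y}: if $\X_\mu$ is the icdf of $\mu$, then the icdf of $A\mu$ is $\Y_{A\mu}(m)=\X_\mu(m)+\alpha m$. For $A_n$, I would take $\mu_n=\frac1n\sum_{i=1}^n\delta_{x_i}$ with $x_1\le\cdots\le x_n$, compute from Definition~\ref{def:icdf} that $\X_{\mu_n}(m)=x_i$ on $[(i-1)/n,i/n)$, verify that $y_i=x_i+(i-1)\alpha/n$ are automatically ordered (because $y_{i+1}-y_i\ge\alpha/n>0$), and deduce
\[
\Y_{A_n\mu_n}(m)=\X_{\mu_n}(m)+s_n(m),\qquad s_n(m):=\frac{\alpha}{n}\lfloor nm\rfloor.
\]
Since $s_n$ does not depend on $\mu_n$, part~(1) follows from
\[
W_2^2(A_n\mu_{n,1},A_n\mu_{n,2})=\int_0^1\bigl(\X_{\mu_{n,1}}(m)-\X_{\mu_{n,2}}(m)\bigr)^2\,dm=W_2^2(\mu_{n,1},\mu_{n,2}),
\]
and analogously for $A$.

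For part~(2), the same identities give
\[
\Y_{A\mu}(m)-\Y_{A_n\mu_n}(m)=\bigl(\X_\mu(m)-\X_{\mu_n}(m)\bigr)+\bigl(\alpha m-s_n(m)\bigr),
\]
with $\|\alpha m-s_n\|_{L^\infty(0,1)}\le\alpha/n$. Applying the triangle inequality in $L^2(0,1)$ to both orderings yields
\[
\bigl|W_2(A\mu,A_n\mu_n)-W_2(\mu,\mu_n)\bigr|\le\|\alpha m-s_n\|_{L^2(0,1)}\le\frac\alpha n.
\]

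For part~(3), I would use the icdf once more: on each interval $[(i-1)/n,i/n)$ we have $\Y_{A\mu_n}(m)=x_i+\alpha m$, which is strictly increasing with slope $\alpha$ and runs from $x_i+(i-1)\alpha/n$ to $x_i+i\alpha/n$. Inverting gives a cdf $G$ that is continuous, piecewise affine with slope $1/\alpha$ on $[x_i+(i-1)\alpha/n,x_i+i\alpha/n]$ and constant in between, so that its distributional derivative is
\[
A\mu_n=\sum_{i=1}^n\frac{1}{\alpha}\,\Indicator_{[x_i+(i-1)\alpha/n,\,x_i+i\alpha/n]}=\frac1n\sum_{i=1}^n\frac{n}{\alpha}\Indicator_{[x_i+(i-1)\alpha/n,\,x_i+i\alpha/n]},
\]
as claimed; possible coincidences $x_i=x_{i+1}$ only cause the consecutive rods to be touching, so the sum still makes pointwise sense almost everywhere. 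The only delicate point in the whole argument is the bookkeeping around the convention for icdfs of atomic measures—once one checks that the right-continuous convention in Definition~\ref{def:icdf} puts the shift $s_n$ on the same intervals for any ordered empirical measure, the rest is purely computational.
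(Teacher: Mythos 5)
Your proof is correct and follows essentially the same route as the paper: reduce $W_2$ to the $L^2(0,1)$ distance of inverse cdfs and observe that both $A$ and $A_n$ add a measure-independent shift in these coordinates. The only cosmetic difference is that the paper handles the $A_n$ isometry by directly pairing the ordered points via the monotone transport map rather than through the icdf, and your treatment of part~(3) is a bit more explicit than the paper's one-line remark; in substance the arguments coincide.
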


\begin{proof} 
%
The isometry of $A$ follows from writing the Wasserstein distance in terms of the icdf (see~\eqref{eq:W2-in-icdf}):
\begin{align*}
W_2(\mu_1,\mu_2)^2 &= \int_0^1 |\X_1(m)-\X_2(m)|^2\, dm
= \int_0^1 |\X_1(m)-\alpha m - (\X_2(m)-\alpha m)|^2\, dm\\
&= \int_0^1 |\Y_1(m) - \Y_2(m)|^2\, dm
= W_2(A\mu_1,A\mu_2)^2.
\end{align*}
For $A_n$ the isometry follows from observing that {monotone transport maps} in fact map $x_{1,i}$ to $x_{2,i}$ and   $y_{1,i}$ to $y_{2,i}$ (i.e.\ they preserve the order) and therefore
\begin{align*}
W_2(\mu_{n,1},\mu_{n,2})^2 
&= \frac1n \sum_{i=1}^n |x_{1,i}-x_{2,i}|^2 
= \frac1n \sum_{i=1}^n \left|x_{1,i}-i\frac\alpha m -\Bigl(x_{2,i}-i\frac\alpha m\Bigr)\right|^2 \\
&= \frac1n \sum_{i=1}^n |y_{1,i}-y_{2,i}|^2 
= W_2(A_n\mu_{n,1},A_n\mu_{n,2})^2.
\end{align*}
To estimate the difference $W_2(A\mu,A_n\mu_n) - W_2(\mu,\mu_n)$, the same formulation of the Wasserstein distance in terms of icdf's becomes
\begin{align*}
W_2(\mu,\mu_n) 
&= \left[\int_0^1 |\X(m) - x_{\lceil nm\rceil}|^2 dm\right]^{\frac12}
= \left[\int_0^1 \left|\Y(m) +\alpha m  - \Bigl(y_{\lceil nm\rceil} + \lceil nm\rceil\frac\alpha n\Bigr)\right|^2dm\right]^{\tfrac12}\\
&\leq \left[\int_0^1 \left|\Y(m) - y_{\lceil nm\rceil} \right|^2 \, dm\right]^{\frac12}
+ \left[\int_0^1 \left|\alpha m  -  \lceil nm\rceil\frac\alpha n\right|^2 \, dm\right]^{\frac12}\\
&\leq W_2(A\mu,A_n\mu_n) + \frac\alpha n.
\end{align*}
The opposite inequality follows similarly.

Finally, to prove part~\ref{lemma:isometries:smeared}, the fact that $A$ maps $\frac1n \sum_{i=1}^n \delta_{x_i}$ to $\frac1n \sum_{i=1}^n \delta^{\alpha,n}_{x_i+\alpha (i-1)/n}\in \P(\spY)$ follows from remarking that $\delta_{x_1}$ is mapped by $A$ to the left-most smeared delta function $\delta^{\alpha,n}_{x_1}$; the second one, $\delta_{x_2}$, to $\delta^{\alpha,n}_{x_2+\alpha/n}$; and so forth. 
\end{proof}

\subsection{Mapping particle systems}
\label{subsec:mapping_particle_systems}

The compression and decompression maps $A_n$ and $A_n^{-1}$ create a one-to-one connection between two stochastic particle systems, which is the basis for the proofs of the two main theorems.
We now make this connection explicit.

First, given a measure $\mu$ on $\spX$, the maps $A$ and $A_n$ induce corresponding maps from $\spX$ to $\spY$, made explicit by the following Lemma.

\begin{lemma}
\label{lemma:A-is-pushforward}
Let $\mu\in \P(\spX)$. Define the map 
\begin{equation}
\label{def:Tmu}
T_\mu:\spX\to \spY, \quad x\mapsto  x+ \alpha\, \mu\big((-\infty,x)\big).
\end{equation}
Then
\begin{enumerate}
\item \label{l:A:props:1}
If $\mu=\frac1n \sum_{i=1}^n \delta_{x_i}\in \P^n(\spX)$, and $x_i< x_{i+1}$ for all $i$, then $A_n\mu = (T_\mu)_\#\mu$.
\item If $\X$ is the icdf of an absolutely continuous $\mu$, then 
\begin{equation}
\label{eq:TrhoX}
T_\mu \X(m) = \X(m) + \alpha m.
\end{equation}
\item 
\label{i:l:A-Tmu-3}
If $\mu$ is Lebesgue-absolutely-continuous, and $\tilde \mu \in \P(\spX)$, then $A\mu  = (T_\mu)_\# \mu$  and $(\bm t_{A\mu}^{A\tilde \mu}-\mathrm{id})A\mu = (T_\mu)_\# \bigl[(\bm t_{\mu}^{\tilde \mu}-\mathrm{id})\mu\bigr]$.
\end{enumerate}
\end{lemma}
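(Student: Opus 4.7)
\bigskip

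\noindent\textbf{Proof plan.}

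For part~\ref{l:A:props:1} the plan is direct: with $\mu=\frac1n\sum_i\delta_{x_i}$ and $x_i<x_{i+1}$, one has $\mu((-\infty,x_j)) = (j-1)/n$ because the inequality in the definition of $T_\mu$ is strict and only $x_1,\dots,x_{j-1}$ lie strictly below $x_j$. Hence $T_\mu(x_j)=x_j+\alpha(j-1)/n=y_j$, and therefore $(T_\mu)_\#\mu=\frac1n\sum_i\delta_{y_i}=A_n\mu$ by Definition~\ref{def:AAn}. No subtlety arises here.

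For part~(2), the plan is to apply Lemma~\ref{lemma:icdf-transformation}(2): since $\mu$ is absolutely continuous, its cumulative distribution function $F$ is continuous, so $\mu((-\infty,\X(m))) = F(\X(m)) = m$. Substituting into~\eqref{def:Tmu} gives $T_\mu(\X(m)) = \X(m) + \alpha m$, which is~\eqref{eq:TrhoX}.

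For part~\ref{i:l:A-Tmu-3}, the strategy is to reduce every statement to an integral over $[0,1]$ via the icdf, using the transformation rule~\eqref{eq:transformation-X}. First I would check that $A\mu$ is Lebesgue-absolutely-continuous so that $\bm t_{A\mu}^{A\tilde\mu}$ makes sense: since $\Y'(m)=\X'(m)+\alpha\geq\alpha$, the icdf $\Y$ of $A\mu$ is strictly increasing with bounded-below derivative, which by Lemma~\ref{lemma:icdf-transformation}(2) yields a density $A\mu(\Y(m)) = \mu(\X(m))/(1+\alpha\mu(\X(m)))\leq1/\alpha$. To prove $A\mu = (T_\mu)_\#\mu$, for any $\varphi\in C_b(\spY)$ one computes
\begin{equation*}
\int_{\spY}\varphi\,d(A\mu) = \int_0^1\varphi(\Y(m))\,dm = \int_0^1\varphi(T_\mu\X(m))\,dm = \int_{\spX}\varphi\circ T_\mu\,d\mu,
\end{equation*}
using~\eqref{eq:transformation-X} twice together with part~(2). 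For the vector-valued identity, interpret $(\bm t_{A\mu}^{A\tilde\mu}-\mathrm{id})A\mu$ as the signed measure $(\bm t_{A\mu}^{A\tilde\mu}(y)-y)\,A\mu(dy)$. Pair it against $\varphi\in C_b(\spY)$, pass to icdf coordinates using Lemma~\ref{l:props-W2}\ref{lem:W2:map} (so that $\bm t_{A\mu}^{A\tilde\mu}(\Y(m))=\widetilde\Y(m)$ and $\bm t_\mu^{\tilde\mu}(\X(m))=\widetilde\X(m)$), and use $\Y=\X+\alpha m$ and $\widetilde\Y=\widetilde\X+\alpha m$ so that the $\alpha m$ terms cancel:
\begin{equation*}
\int_{\spY}\varphi(y)\bigl(\bm t_{A\mu}^{A\tilde\mu}(y)-y\bigr)A\mu(dy) = \int_0^1\varphi(T_\mu\X(m))\bigl(\widetilde\X(m)-\X(m)\bigr)\,dm.
\end{equation*}
Converting back from $[0,1]$ to $\spX$ via~\eqref{eq:transformation-X} recognizes the right-hand side as $\int_{\spX}(\varphi\circ T_\mu)(x)(\bm t_\mu^{\tilde\mu}(x)-x)\mu(dx)$, which is by definition $\int_{\spY}\varphi\,d\bigl[(T_\mu)_\#\bigl((\bm t_\mu^{\tilde\mu}-\mathrm{id})\mu\bigr)\bigr]$.

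The main point requiring care is the use of Lemma~\ref{l:props-W2}\ref{lem:W2:map} for the target $A\tilde\mu$: the lemma gives $\bm t_\nu^{\nu'}\circ\X_\nu = \X_{\nu'}$ without requiring $\nu'$ (here $\tilde\mu$ or $A\tilde\mu$) to be absolutely continuous, only the source, which is why establishing the absolute continuity of $A\mu$ at the outset is essential. Everything else is bookkeeping via the icdf transformation and the cancellation of the $\alpha m$ shift.
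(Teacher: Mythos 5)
Your proof is correct and follows essentially the same route as the paper: part~(1) by direct computation, part~(2) via $F(\X(m))=m$ for absolutely continuous $\mu$, and part~(3) by passing to icdf coordinates with the transformation rule~\eqref{eq:transformation-X} and the characterization~\eqref{prop:t-map-icdf} of transport maps. Your explicit verification that $A\mu$ is absolutely continuous (via $\Y'\geq\alpha$, giving a density bounded by $1/\alpha$) is a useful detail that the paper leaves implicit, and it is exactly what is needed to legitimately invoke Lemma~\ref{l:props-W2}\ref{lem:W2:map} with source $A\mu$.
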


\begin{proof}
For $\mu = \frac1n \sum_{i=1}^n \delta_{x_i}$, with $x_i< x_{i+1}$, the claim $A_n\mu =(T_\mu)_\#\mu$ follows from observing that $T_\mu(x_i) = x_i + \alpha(i-1)/n$, and therefore
\[
(T_\mu)_\#\mu = \frac1n \sum_{i=1}^n \delta_{T_\mu x_i} = 
\frac1n \sum_{i=1}^n \delta_{x_i + \alpha(i-1)/n} = A_n\mu.
\]

Next, assume that $\mu\in \P(\spX)$ is Lebesgue-absolutely-continuous; then the cumulative distribution function $F(x) := \mu((-\infty,x])$ is continuous, and consequently the inverse cumulative distribution function $\X$  satisfies $F(\X(m)) = m$ for all $m$ (Lemma~\ref{lemma:icdf-transformation}). The expression~\eqref{eq:TrhoX} then follows from remarking that 
\[
T_\mu \X(m) \stackrel{\eqref{def:Tmu}}= \X(m) + \alpha\mu \big((-\infty,\X(m))\big)
= \X(m) + \alpha\mu \big((-\infty,\X(m)]\big)
= \X(m) + \alpha m.
\]

Turning to part~\ref{i:l:A-Tmu-3}, we have for any $\varphi\in C_b(\spY)$, writing $\Y$ for the icdf of $A\mu$ as in Definition~\ref{def:AAn},
\begin{eqnarray*}
\int_{\spY} \varphi(y) (A\mu)(dy) 
&\stackrel{\eqref{eq:transformation-X}}=& \int_0^1 \varphi(\Y(m))\, dm 
  \stackrel{\eqref{eq:def:AAn-Y}}= \int_0^1 \varphi(\X(m) + \alpha m)\, dm\\
&=& \int_0^1 \varphi(\X(m) + \alpha F(\X(m)))\, dm 
  \stackrel{\eqref{eq:transformation-X}}= \int_{\spX}\varphi(x+\alpha F(x))\, \mu(dx) \\
  &\stackrel{\text{$\mu$ a.c.}}=& \int_{\spX}\varphi\Big(x+\alpha \mu\big((-\infty,x)\big)\Big)\, \mu(dx)
  \\
&=& \int_{\spY} \varphi(y) \, (T_\mu)_\#\mu(dy).
\end{eqnarray*}
This proves $A\mu = (T_\mu)_\#\mu$ for absolutely-continuous $\mu$. 

Finally, to prove that $(\bm t_{A\mu}^{A\tilde \mu}-\mathrm{id})A\mu = (T_\mu)_\# \bigl[(\bm t_{\mu}^{\tilde \mu}-\mathrm{id})\mu\bigr]$, we write similarly, using~\eqref{eq:transformation-X} and~\eqref{prop:t-map-icdf},
\begin{align*}
\int_{\spY} \varphi(y) (\bm t_{A\mu}^{A\tilde \mu}(y)-y) \, (A\mu)(dy)  
&= \int_0^1 \varphi(\Y(m)) (\tilde\Y(m)-\Y(m))\, dm\\
&= \int_0^1 \varphi(\X(m) +\alpha m)  (\tilde\X(m)-\X(m))\, dm\\
&= \int_{\spX} \varphi(x + \alpha\mu((-\infty,x))) (\bm t_\mu^{\tilde \mu}(x)-x)\, \mu(dx)\\
&= \int_{\spX} \varphi(T_\mu x) (\bm t_\mu^{\tilde \mu}(x)-x)\, \mu(dx).
\end{align*}
\end{proof}

\subsection{The particle systems of this paper}

We now state the assumptions on $V$ and $W$ and define precisely the systems of particles that we consider in this paper.

\begin{assumption}[Assumptions on $V$ and $W$.]
\label{ass:VW}
Throughout the paper we make the following assumptions.
\begin{enumerate}
\item[($V$)] The function $V:\R\to\R$ is $C^2 (\R)$, globally Lipschitz, $V'$ is $C^1 _b (\R)$ and there exist constants $c_1>0$, $c_2>0$ such that
\begin{equation}
\label{cond:coercivity}
V(y)\geq c_1|y| - c_2 \qquad\text{for all }y\in \R.
\end{equation}
\item[($W$)] The function $W:\R\to\R$ is $C^2 (\R)$, bounded and even, and $W'$ is $C^1 _b (\R)$.
\end{enumerate}
\end{assumption}

We will use two consequences of these assumptions:
\begin{align}
\label{bound:V-VTmu}
&\sup_{\mu\in \P(\spX)} \sup_{x\in \R} |V(x) - V(T_\mu x)| < \infty;\\
\label{bound:VTmu-from-below}
&\exists C>0: \inf_{\mu\in \P(\spX)} V(T_\mu x) \geq C(|x|-1)
\end{align}

The first set of particles $Y^n_i$ was already informally defined in the introduction; the second set $X^n_i$ is a compressed version of $Y^n_i$. We will often use the notation $\eta_n$ for the empirical measure of a set of particles,
\begin{equation}
\label{def:eta_n}
\eta_n: \R^n \mapsto \P(\R), \qquad
\eta_n(x) := \frac1n \sum_{i=1}^n \delta_{x_i}.
\end{equation}

\begin{definition}
\label{def:particle-systems}
\begin{enumerate}
\item
\label{i:def:particle-systems:Y}
For each $n\in \N$, the system of particles $Y^n = (Y^n_i)_{i=1,\dots,n}\subset C([0,\infty);\Omega_n)$ is defined by the generator
\[
\mathcal L_Y = \frac12 \Delta + \mathrm b\cdot \nabla,\qquad\text{with }
\mathrm b_i(y) = -\nabla V(y_i) -\frac1n \sum_{j=1}^n W'(y_i-y_j),
\]
with domain
\[
D(\mathcal L_Y) = \Big\{f\in C_b^2(\Omega_n): \frac{\partial f}{\partial n} = 0 \text{ on } \partial \Omega_n\Big\}.
\]
\item 
\label{i:def:particle-systems:X}
For each $n\in \N$, the system of particles $X^n = (X^n_i)_{i=1,\dots,n}\subset C([0,\infty);\spX^n)$ is defined by the generator
\[
\mathcal L_X = \frac12 \Delta + \mathrm b\cdot \nabla,
\qquad\text{with domain}\qquad
D(\mathcal L_X) = C_b^2(\spX^n),
\]
where the drift $\mathrm b$ is now given by
\begin{equation}
\label{def:b:first-def}
\mathrm b_i(x) := b(x_i,\eta_n(x)) := -V'(T_{\eta_n(x)} x_i) - \int_{\spX} W'(T_{\eta_n(x)} x_i - T_{\eta_n(x)} x') \, \eta_n(x)(dx').
\end{equation}
\end{enumerate}
\end{definition}

\begin{lemma}
\label{l:ex-un-SDEs}
For these two particle systems, weak solutions exist and are unique, and at each $t>0$ the laws of $X^n(t)$ and $Y^n(t)$ are absolutely continuous with respect to the Lebesgue measure.
\end{lemma}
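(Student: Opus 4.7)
The plan is to treat the two systems separately, using Girsanov's theorem for $X^n$ and classical reflected-SDE theory for $Y^n$ (with the compression map of Section~\ref{sec:part-systems} giving an alternative route).

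For the $X^n$ system, the critical observation is that under Assumption~\ref{ass:VW} the drift $\mathrm b$ in~\eqref{def:b:first-def} is globally bounded: $|\mathrm b_i(x)| \le \|V'\|_\infty + \|W'\|_\infty$ for every $x\in\R^n$, even though it is discontinuous along the hyperplanes $\{x_i = x_j\}$ where $T_{\eta_n(x)}$ jumps by $\alpha/n$. Since the diffusion matrix is the identity, Girsanov's theorem applies unconditionally: starting from an $n$-dimensional Brownian motion $B$ on $(\Omega,\mathcal F, P)$ and defining $dQ = Z_T\,dP$ with
\[
Z_t = \exp\Bigl(\int_0^t \mathrm b(B_s)\cdot dB_s - \tfrac12\int_0^t |\mathrm b(B_s)|^2 \,ds\Bigr),
\]
one obtains, under $Q$, a weak solution of the $X^n$-SDE with initial datum $B_0$. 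Weak uniqueness follows by running Girsanov in reverse for any other weak solution, turning it into a Brownian motion whose law is determined by its initial distribution. Absolute continuity of the law of $X^n(t)$ for $t>0$ follows because $Z_T>0$ $P$-a.s., so $P$ and $Q$ are equivalent on $\mathcal F_T$, and the Gaussian density of $B(t)$ under $P$ therefore transfers to a Lebesgue density of $X^n(t) = B(t)$ under $Q$.

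For the $Y^n$ system the drift is now $C^1_b$, but the state space $\Omega_n$ has reflecting boundaries. The key point is that $\Omega_n$ decomposes into $n!$ disjoint closed chambers, one per ordering of the coordinates; within each chamber $\Omega_n$ is a closed convex polyhedron whose bounding hyperplanes have constant outward normals proportional to $e_i-e_j$. Within a single chamber, the classical theory of reflected SDEs in convex polyhedra with Lipschitz drift (Tanaka, Saisho, Lions--Sznitman) yields pathwise-unique strong solutions with normal reflection, and since paths are continuous they never leave their chamber, giving global well-posedness. Absolute continuity of the law of $Y^n(t)$ at $t>0$ follows from the corresponding density for reflected Brownian motion in a convex polyhedron combined with a Girsanov change of measure to absorb the smooth drift. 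An alternative route, more directly aligned with Section~\ref{sec:part-systems}, is to form the order statistics $X^n_{(\cdot)}$ of $X^n$: on the Weyl chamber $\{x_1 \le \cdots \le x_n\}$ the discontinuities in $\mathrm b$ disappear (the ordering is fixed there), so $X^n_{(\cdot)}$ is a reflected diffusion in a convex polyhedron driven by the \emph{smooth} drift $b(\cdot,\eta_n)$, and the affine bijection $A_n$ applied coordinatewise transfers all three properties to an ordered version of $Y^n$.

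The main subtlety throughout is the discontinuity of the drift in $X^n$, which rules out classical strong-solution theory and direct Yamada--Watanabe arguments; the global boundedness granted by $V',W'\in C^1_b$ is precisely what rescues the situation by making Girsanov's transformation applicable without any Novikov-type conditions to verify.
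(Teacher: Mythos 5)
Your proposal is correct but takes a genuinely different route from the paper's. For $X^n$ the paper invokes the same McKean--Vlasov well-posedness machinery (Lipschitz dependence of the drift on the measure in total variation, citing Djehiche--Hamad\`ene) that it develops in the proof of Lemma~\ref{l:char-RF-path}; you instead notice that because $V'$ and $W'$ are bounded, the drift $\mathrm b$ on $\R^n$ is globally bounded (though discontinuous on the collision hyperplanes), so a direct Girsanov change of measure gives weak existence, uniqueness in law, and equivalence of marginals with those of Brownian motion --- which immediately delivers the Lebesgue density of $X^n(t)$ for $t>0$. This is cleaner and fills in exactly what the paper omits. For $Y^n$ the paper constructs a fundamental solution of the Neumann Fokker--Planck equation $\partial_t p = \mathcal L_Y^* p$ on $\Omega_n$ by an exhaustion-by-truncation argument, then daisy-chains the transition kernels and applies the Stroock--Varadhan/Kolmogorov machinery; you observe instead that $\Omega_n$ splits into $n!$ disjoint closed convex chambers (one per ordering), so that within each chamber the problem is a reflected SDE in a convex polyhedron with Lipschitz drift, for which Tanaka/Lions--Sznitman/Saisho give strong well-posedness, with absolute continuity at $t>0$ inherited from the Neumann heat kernel via a drift-absorbing Girsanov transform. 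Your chamber decomposition (and the alternative via order statistics of $X^n$) is a legitimate probabilistic shortcut around the paper's PDE construction, though it does outsource the existence of a density for reflected Brownian motion in a polyhedral cone to the classical literature, whereas the paper establishes the density directly from the regularity of the constructed fundamental solution. The two approaches are complementary: yours is shorter and more standard-probabilistic; the paper's is more self-contained on the PDE side. One small imprecision to note: the law of $B(t)$ under $P$ is Gaussian only if $B(0)$ is deterministic; for a general initial distribution it is a convolution with a Gaussian, which still has a Lebesgue density for $t>0$, so the conclusion stands.
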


This result is more-or-less standard, and the proof is given in the Appendix.
In fact, throughout this paper, unless explicitly stated otherwise, whenever we speak of existence or uniqueness of a solution of a stochastic differential equation, we are referring to the existence of weak solutions and uniqueness in law \cite[Section 5.3]{KaratzasShreve98}. Henceforth, unless required for the argument at hand, we do not go into details (such as corresponding filtrations, or similar aspects) about the weak solutions under study.

%
%
%
%
%

\medskip

The following lemma makes the relationship between the two particle systems precise.
\begin{lemma}[Equality of distributions]
\label{lemma:mapping-particle-systems}
Let $\rho_n(t)= \frac1n \sum_{i=1}^n \delta_{Y^n_i(t)}$ and $\mu_n(t) = \frac1n \sum_{i=1}^n \delta_{X^n_i(t)}$ be the empirical measures of the particle systems $Y^n$ and $X^n$. The stochastic processes $\rho_n$ and  $A_n\mu_n$ have the same distribution in $C\big([0,\infty);\P(\R_y)\big)$.
\end{lemma}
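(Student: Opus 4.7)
The approach is to pass to the ordered configurations on both sides and apply weak uniqueness from Lemma~\ref{l:ex-un-SDEs}. Write $x_{(1)}\le\cdots\le x_{(n)}$ for the order statistics of $x\in\R^n$, and note that $T_{\eta_n(x)}(x_{(k)}) = x_{(k)} + \alpha(k-1)/n$. By Lemma~\ref{lemma:A-is-pushforward} part~\ref{l:A:props:1}, $A_n\eta_n(x)$ is the empirical measure of $\Phi_n(x) := \bigl(x_{(k)}+\alpha(k-1)/n\bigr)_{k=1}^n$, and $\Phi_n$ is an affine bijection from the Weyl chamber $\overline C = \{x_1\le\cdots\le x_n\}$ onto the ordered state space $\overline{\Omega_n^{\mathrm{ord}}} = \{y_1\le\cdots\le y_n:\, y_{k+1}-y_k\ge\alpha/n\}$.

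Next I would verify the SDE transformation. The sorted process $\widetilde X^n_k(t) := X^n_{(k)}(t)$ is a reflected diffusion on $\overline C$ with normal reflection at each coincidence wall $\{x_k=x_{k+1}\}$; in the interior, its drift on coordinate $k$ equals $-V'(\widetilde X^n_k+\alpha(k-1)/n) - n^{-1}\sum_j W'\bigl(\widetilde X^n_k-\widetilde X^n_j+\alpha(k-j)/n\bigr)$ by the rank formula for $T_{\eta_n}$. Setting $\widetilde Z^n := \Phi_n(\widetilde X^n)$, the translation $\Phi_n$ preserves Brownian increments; the drift becomes $-V'(\widetilde Z^n_k) - n^{-1}\sum_j W'(\widetilde Z^n_k - \widetilde Z^n_j)$, matching the drift of $Y^n$; and each coincidence wall is mapped to the non-overlap wall $\{y_{k+1}-y_k = \alpha/n\}$ with the normal direction $(e_{k+1}-e_k)/\sqrt{2}$ preserved. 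Hence $\widetilde Z^n$ is a weak solution of the ordered reflected SDE associated with $\mathcal L_Y$ from Definition~\ref{def:particle-systems} part~\ref{i:def:particle-systems:Y}, and weak uniqueness (Lemma~\ref{l:ex-un-SDEs}, applied after quotienting by $S_n$) identifies its law with that of the sorted $Y^n$-process $\widetilde Y^n := (Y^n_{(k)})_{k=1}^n$, given matching initial data.

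The conclusion then follows from symmetry: since $\eta_n$ is invariant under coordinate permutations, $\rho_n(t) = \eta_n(\widetilde Y^n(t))$ and $A_n\mu_n(t) = A_n\eta_n(\widetilde X^n(t)) = \eta_n(\Phi_n(\widetilde X^n(t))) = \eta_n(\widetilde Z^n(t))$ pathwise, and continuity of $\eta_n:\R^n\to(\P(\R),d_{BL})$ lifts the equality in law of $\widetilde Z^n$ and $\widetilde Y^n$ to the stated equality of $\rho_n$ and $A_n\mu_n$ as processes in $C([0,\infty);\P(\R))$.

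The main technical obstacle I anticipate is the rigorous treatment of the reflection terms when passing to the ordered processes on each side: one must invoke a Skorokhod decomposition producing local-time pushing at each crossing wall for $\widetilde X^n$, and check that this pushing is mapped by the affine isometry $\Phi_n$ onto the corresponding reflecting boundary condition on $\partial\Omega_n^{\mathrm{ord}}$. A conceptually cleaner alternative, if the reflection bookkeeping becomes cumbersome, is to identify the martingale problems directly at the empirical-measure level by verifying the identity $\mathcal L_X(F\circ\eta_n)(x) = \mathcal L_Y(F\circ\eta_n)(A_n(x_{(\cdot)}))$ on symmetric test functions $F$. A minor side point is that triple or higher-order simultaneous collisions must be shown to have probability zero, which is standard for one-dimensional reflected Brownian motion with bounded drift.
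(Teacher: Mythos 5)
Your proposal is correct in outline but takes a genuinely different route from the paper. The paper's proof never sorts the particles and never introduces Skorokhod/reflection terms: it works at the martingale-problem level, using that for any symmetric $f\in D(\mathcal L_Y)$ (in particular one satisfying the Neumann condition $\partial_n f = 0$), the composition $f\circ T_{\eta_n}$ is in fact globally $C^2$ on $\spX^n$, even across the collision set. The point is that the permutation symmetry of $f$ together with the boundary condition $\partial_n f = 0$ supply exactly the matching of function values and first and second derivatives needed for $C^2$-gluing as two $X$-particles cross. The paper then verifies the generator identity $\mathcal L_X(f\circ T_{\eta_n}) = (\mathcal L_Y f)\circ T_{\eta_n}$ at non-collision points and uses that $X^n$ spends zero time on the collision set to conclude that $\widehat Y^n := T_{\eta_n(X^n)}X^n$ solves the symmetric martingale problem for $\mathcal L_Y$. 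Your primary route --- sort both processes, identify each sorted process as a normally reflected diffusion on the Weyl chamber and the ordered swiss-cheese domain respectively, then transport via the affine bijection $\Phi_n$ and invoke weak uniqueness --- should also work, but the local-time bookkeeping you correctly flag as the main obstacle is precisely what the paper's trick circumvents: the Neumann condition plus symmetry dissolve the reflecting boundary terms before they arise. The ``conceptually cleaner alternative'' you sketch at the end is essentially the paper's argument, except that the identity it verifies is $\mathcal L_X(f\circ T_{\eta_n}) = (\mathcal L_Y f)\circ T_{\eta_n}$ for symmetric $f\in D(\mathcal L_Y)$ rather than an identity for $F\circ\eta_n$ with $F$ on $\P(\R)$; these are equivalent via the standard identification of symmetric functions of $n$ points with functions of the empirical measure, which the paper also uses. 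If you pursue your sorted-diffusion route, do make explicit why the sorted $X^n$-process has the claimed drift at rank $k$ including at walls (left-continuity of the rank count fixes the convention, and the process spends zero time there), and that $\Phi_n$ being a coordinatewise translation preserves both the Brownian increments and the inward normal $(e_{k+1}-e_k)/\sqrt2$ at each wall.
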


\begin{proof}
The idea of this property goes back to Rost~\cite{Rost84}, who used it for the particle system~$Y^n$ without potentials $V$ and $W$. Because of the additional complexity of the two potentials $V$ and $W$ we give an independent proof. 

Since every function of $\eta_n(x)$ maps one-to-one to a symmetric function of $x$ (that is, a function $f:\R^n\to\R$ such that $f(x_1,\dots, x_n) = f(x_{\sigma_1},\dots, x_{\sigma_n})$ for all permutations~$\sigma$), the martingale problem for the random measure-valued process $\rho_n=\eta_n(Y^n)$ can be reformulated as the property that 
\begin{equation}
\label{def:mart-sol-tildeY}
M_t := f(Y^n(t)) - \int_0^t\mathcal (L_Yf)(Y^n(s))\, ds 
\qquad \text{is a martingale for all symmetric }f\in D(\mathcal L_Y).
\end{equation}
Given the process $X^n$, consider the transformed process $\widehat Y^n$ that is given by the expression (at each time $t$)
\[
\widehat Y^n := \bigl( T_{\eta_n(X^n)} X^n_1,\, T_{\eta_n(X^n)} X^n_2,\,\dots,
T_{\eta_n(X^n)} X^n_n\bigr).
\]
Whenever all $X^n_i$ are distinct, we have $\eta_n(\widehat Y^n) = A_n\eta_n(X^n)$ by part~\ref{l:A:props:1} of Lemma~\ref{lemma:A-is-pushforward}. Since the $X^n$ are almost surely distinct at any time, we have proved the lemma if we show that  $\widehat Y^n$ satisfies~\eqref{def:mart-sol-tildeY}.

Note that for any $x\in\R^n$ without collisions, i.e.\ with $x_i\not= x_j$ for $i\not=j$, 
\begin{align*}
\bigl[\partial_{x_k} (f\circ T_{\eta_n})\bigr](x) 
&= \partial_{x_k} \Bigl[f\Bigl( T_{\eta_n(x)}x_1, \dots, T_{\eta_n(x)}x_n\Bigr)\Bigr]\\
&= \partial_{x_k} \Bigl[f\Bigl( x_1 + \frac\alpha n \#\{\ell:x_\ell<x_1\}, \; \dots, \;
x_n + \frac\alpha n \#\{\ell:x_\ell<x_n\}\Bigr)\Bigr]\\
&\leftstackrel{(*)}= (\partial_k f)\Bigl( x_1 + \frac\alpha n \#\{\ell:x_\ell<x_1\}, \; \dots, \;
x_n + \frac\alpha n \#\{\ell:x_\ell<x_n\}\Bigr)\\
&= \bigl[(\partial_k f)\circ T_{\eta_n}\bigr](x).
\end{align*}
The equality $(*)$ holds because each of the terms $\#\{\ell:x_\ell<x_j\}$ is  constant away from the set of collisions. With this expression we find that e.g.\ for each $k$, 
\begin{align*}
-V'(T_{\eta_n(x)}x_k) \partial_{x_k}  \Bigl[f\bigl( T_{\eta_n(x)}x\bigr)\Bigr]
= -V'(y) (\partial_k f)(y)\Big|_{y_j=T_{\eta_n(x)}x_j\, \forall j}
\end{align*}
and by collecting similar arguments we conclude that  
\begin{equation}
\label{eq:trans-generator}
\mathcal L_X(f\circ T_{\eta_n})(x) = \bigl[(\mathcal L_Yf)\circ T_{\eta_n}\bigr](x)
\qquad\text{at any non-collision point $x\in \R^n$}.
\end{equation}

Also note that the function $f\circ T_{\eta_n}$ is an element of $D(\mathcal L_X)$. This follows since at non-collision points $f\circ T_{\eta_n}$ is as smooth as $f$ (by the same constancy argument as above); at the collision set,  $f\circ T_{\eta_n}$ connects with regularity $C^2$  by the $C^2$--regularity of $f$ in $\Omega_n$, the boundary condition $\partial_n f=0$,  and the symmetry of $f$.

To conclude the proof, we show that $\widehat Y^n$ satisfies~\eqref{def:mart-sol-tildeY} by rewriting
\begin{align*}
f\big(\widehat Y^n(t)\big) - \int_0^t (\mathcal L_Yf) \big(\widehat Y^n(s)\big)\, ds
&= \big(f\circ T_{\eta_n}\big)(X^n(t)) - \int_0^t (\mathcal L_Yf)\circ T_{\eta_n}  (X^n(s))\, ds\\
&\leftstackrel{(**)}=\big(f\circ T_{\eta_n}\big)(X^n(t)) - \int_0^t \bigl[\mathcal L_X(f\circ T_{\eta_n})\bigr]  (X^n(s))\, ds,
\end{align*}
and this expression is a martingale by the properties of $X^n$.
Note that although the identity~\eqref{eq:trans-generator} holds only for non-collision points~$x$, the process $X^n$ spends zero time on the set of remaining points. Therefore the identity $(**)$ above  holds almost surely.
\end{proof}

\begin{lemma}[Transformed version of $\InvMeas_n$ and $\cZ_n$]
\label{l:transformed-inv-meas}
The particle system $X^n$ has invariant measure $\bbP_n\in\P(\spX^n)$, given by
\[
\bbP_n(dx) := \frac1{\cZ_n} 
\exp\biggl[\,-2 \sum_{i=1}^n V(T_{\eta_n(x)}x_i) - \frac1{n} 
\sum_{i,j=1}^n  W(T_{\eta_n(x)}x_i-T_{\eta_n(x)}x_j)\biggr] \, dx.
\]
The normalization constant~$\cZ_n$ is the same as in~\eqref{def:norm-const-Zn-intro} and can be written as
\begin{equation}
\label{eq:transformed-Zn}
\cZ_n = \int_{\R^n}  \exp\biggl[\,-2 \sum_{i=1}^n V(T_{\eta_n(x)}x_i) - \frac1{n} 
\sum_{i,j=1}^n  W(T_{\eta_n(x)}x_i-T_{\eta_n(x)}x_j)\biggr] \, dx.
\end{equation}
\end{lemma}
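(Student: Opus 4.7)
The strategy rests on the map $\Phi:\R^n \to \Omega_n$ defined by $\Phi(x)_i := T_{\eta_n(x)}(x_i)$, which already appeared implicitly in Lemma~\ref{lemma:mapping-particle-systems}. As noted there, $T_{\eta_n(x)}(x_i) = x_i + (\alpha/n)\#\{\ell : x_\ell < x_i\}$ depends on $x$ only through the ranks of its components. On each open chamber $\mathcal C_\sigma := \{x : x_{\sigma(1)} < \cdots < x_{\sigma(n)}\}$ ($\sigma\in S_n$), $\Phi$ therefore reduces to an affine translation $x \mapsto x + v_\sigma$ for some fixed $v_\sigma \in \R^n$, so its Jacobian equals $1$, and it is a bijection from $\mathcal C_\sigma$ onto the corresponding chamber in $\Omega_n$. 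The union of the $\mathcal C_\sigma$ exhausts $\R^n$, and the union of their images exhausts $\Omega_n$, up to Lebesgue-null sets; moreover $\Phi$ is $S_n$-equivariant.

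First I would apply this chamber-wise change of variables to~\eqref{def:norm-const-Zn-intro}, using that $V(\Phi(x)_i) = V(T_{\eta_n(x)}x_i)$ and $W(\Phi(x)_i - \Phi(x)_j) = W(T_{\eta_n(x)}x_i - T_{\eta_n(x)}x_j)$; this immediately yields~\eqref{eq:transformed-Zn}. The same chamber-wise computation shows that the density defining $\bbP_n$ integrates to $1$, so $\bbP_n \in \P(\R^n)$, and that $\Phi_\#\bbP_n = \InvMeas_n$ as measures on $\Omega_n$.

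It remains to check invariance of $\bbP_n$ under $X^n$. The reflected Langevin process $Y^n$ is reversible with respect to $\InvMeas_n$ by standard Langevin theory, its drift being minus the gradient of $U(y) := \sum_i V(y_i) + (1/2n)\sum_{i,j}W(y_i-y_j)$, which gives the Gibbs form $\InvMeas_n \propto e^{-2U}\Lebesgue^n|_{\Omega_n}$. Now initialise $X^n(0)\sim\bbP_n$: the density of $\bbP_n$ is a function of $\eta_n(x)$ alone, hence $S_n$-symmetric, and combined with the $S_n$-symmetry of the generator $\mathcal{L}_X$ this makes the law of $X^n(t)$ $S_n$-symmetric at all $t$. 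By equivariance of $\Phi$, so is the law of $\widehat Y^n(t) := \Phi(X^n(t))$. The proof of Lemma~\ref{lemma:mapping-particle-systems} shows that $\widehat Y^n$ solves the $Y^n$ martingale problem against every \emph{symmetric} test function in $D(\mathcal{L}_Y)$. An $S_n$-symmetric probability law on $\R^n$ (resp.\ $\Omega_n$) is determined by its values on symmetric Borel sets, so from $\widehat Y^n(0)\sim\InvMeas_n$ and the invariance of $\InvMeas_n$ under $Y^n$ we conclude $\widehat Y^n(t)\sim\InvMeas_n$ for all $t\geq 0$. Because $\Phi$ is injective off the collision set and $X^n$ avoids collisions almost surely by Lemma~\ref{l:ex-un-SDEs}, this unpacks to $X^n(t)\sim\bbP_n$ for all $t$, proving invariance.

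The delicate point is this final lift from the empirical-measure-level identification in Lemma~\ref{lemma:mapping-particle-systems} to equality in law of the labelled particle vectors. It is not automatic in general, but works here because both $\bbP_n$ and $\InvMeas_n$ are $S_n$-symmetric, which reduces the comparison to the symmetric test-function class supplied by Lemma~\ref{lemma:mapping-particle-systems}. An alternative route would be a direct verification that $\mathcal L_X^*\bbP_n = 0$ against smooth test functions, but this requires circumventing the discontinuity of the drift $b$ through~$T_{\eta_n(x)}$, and ends up duplicating the chamber-wise bookkeeping that is already packaged into the present argument.
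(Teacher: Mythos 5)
Your proof is correct and follows the same route the paper intends: the chamber-wise change of variables via $\Phi(x)_i=T_{\eta_n(x)}x_i$ (which has unit Jacobian on each open chamber $\mathcal C_\sigma$ and maps the chambers bijectively onto the components of $\Int\Omega_n$) gives the identity~\eqref{eq:transformed-Zn} and $\Phi_\#\bbP_n = \InvMeas_n$, and the invariance is then inherited from $\InvMeas_n$ through the generator intertwining $\mathcal L_X(f\circ T_{\eta_n})=(\mathcal L_Yf)\circ T_{\eta_n}$ established in the proof of Lemma~\ref{lemma:mapping-particle-systems}, closed up by the $S_n$-symmetry of both laws. The paper omits this proof and points to exactly those arguments, so there is no meaningful divergence; your detour via the auxiliary process $\widehat Y^n$ and the martingale problem is a slightly longer path to the same endpoint than testing $\int \mathcal L_X(f\circ T_{\eta_n})\,d\bbP_n = \int (\mathcal L_Yf)\,d\InvMeas_n = 0$ directly for symmetric $f$, but both are equivalent and both use the same symmetry reduction.
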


\noindent
This property follows from arguments very similar  to those of Lemma~\ref{lemma:mapping-particle-systems}, and we omit the proof.

\section{The functionals of this paper}
\label{s:functionals}

With the maps $A$ and $T_\rho$ defined in the previous section, we can also define the various functionals that we use in this paper. The functional~$\hFreeEnergy$ as defined in the introduction is one of these; in this section we review this definition and place it in a larger context. 

We define in total six functionals, three functionals $\hFreeEnergy$, $\hEnt_V$, and $\hEnergy_W$, on the set of 
``expanded'' measures $\P(\spY)$, and at the same time three transformed versions $\FreeEnergy$, $\Ent_V$, and $\Energy_{W}$ on the set of ``compressed'' measures $\P(\spX)$. We split the definition of $\hFreeEnergy$ of the introduction up into an entropic part $\hEnt_V$ and an interaction-energy part $\hEnergy_W$: 
\begin{align}
\notag
\hFreeEnergy &:= \hEnt_V + \hEnergy_W + C_\FreeEnergy,\qquad 
\hFreeEnergy,\ \hEnt, \ \hEnergy_{W}: \P(\spY) \to \R\cup\{\infty\},\\
\hEnt_V(\rho) &:= \begin{cases}
\ds \int_{\R} \rho \biggl[\frac12\log \ds\frac{\rho}{1-\alpha \rho} + V\biggr] + C_{\Ent}&\text{if $\rho$ is Lebesgue-a.c. and $\rho(y)<1/\alpha$ a.e.},\\[2\jot]
+\infty &\text{otherwise}
\end{cases}
\label{def:hEnt_V-Section-functionals}\\
\hEnergy_{W}(\rho) &:=  \frac12\int_{\spY}\!\!\int_{\spY} W(y-y')\, \rho(dy)\rho(dy'), \qquad \text{for }\rho\in \P(\spY).
\notag
\end{align}
The constants $C_\FreeEnergy$ and $C_{\Ent}$ are such that $\inf \hEnt_V = \inf \hFreeEnergy=0$. 
The integral in $\hEnergy_W$ is well-defined by the boundedness of $W$, and we show in Lemma~\ref{lemma:properties-of-the-functionals} below that the integral in $\hEnt_V$ is well-defined in $(-\infty,+\infty]$.

We then define the corresponding functionals on ``compressed'' space $\P(\spX)$ through the isometry $A$:
\begin{gather*}
\FreeEnergy, \, \Ent_V, \, \Energy_{W}: \P(\spX) \to \R\cup\{\infty\},\\
\FreeEnergy := \hFreeEnergy \circ A, \qquad
\Ent_V := \hEnt_V \circ A, \qquad \text{and}\qquad
\Energy_{W} := \hEnergy_{W}\circ A.
\end{gather*}
We also need the relative entropy: for two measures $\mu$ and $\nu$ on the same space, 
\begin{equation}
\label{e:def:RelEnt}
\RelEnt(\mu|\nu) := \begin{cases}
\int f \log f \, d\nu & \text{if } \mu \ll \nu, \ \mu = f\nu,\\
+\infty &\text{otherwise}.
\end{cases}
\end{equation}

\begin{lemma}[Properties of the functionals]
\label{lemma:properties-of-the-functionals}
\noindent
\begin{enumerate}
\item \label{i:l:props:F}
(Alternative formula for $\FreeEnergy$.)  We have
\begin{align}
\label{char:FreeEnergy}
&\FreeEnergy(\mu) = 
\begin{cases}
\ds \int_{\spX} \Big[\frac12 \log\mu(x) +  V(T_\mu x)\Big]\, \mu(dx)  \\
  \qquad\qquad \ds+ \frac12\int_{\spX}\!\int_{\spX} W(T_\mu x - T_\mu x')\, \mu(dx)\mu(dx') + C_\FreeEnergy&\text{if $\mu$ is Lebesgue-a.c.},\\
+\infty&\text{otherwise}.
\end{cases}
\end{align}

\item (Alternative formula for $\Ent_V$)
For given $\nu\in \P(\spX)$, define the measure $\bbQ^\nu\in \P(\spX)$,
\begin{equation}
\label{def:Qnu}
\bbQ^\nu(dx) := \frac{1}{\cZ^{\bbQ,\nu}} 
\exp\bigl[\,-2 V(T_{\nu}x)\bigr] \, dx, 
\quad\text{with}\quad
\cZ^{\bbQ,\nu} := \int_{\spX} \exp\bigl[\,-2  V(T_{\nu}x)\bigr] \, dx.
\end{equation}

\label{i:l:props:F2}
We then have
\begin{align}
\label{char:FE-RelEnt}
&\Ent_V(\mu) := \frac12 \RelEnt(\mu|\bbQ^\mu)
+\frac12 \gamma(\mu), \qquad \text{for }\mu\in \P(\spX),\\
&\gamma(\mu) := C_\gamma - \log \cZ^{\bbQ,\mu},\label{def:gamma}\\[\jot]
&\text{where $C_\gamma$ is determined  by the property $\ds
\inf_{\mu\in \P(\spX)} \RelEnt(\mu|\bbQ^{\mu}) + \gamma(\mu) = 0.$}
\notag
\end{align}


\item 
\label{i:l:basic-props-lambda-convexity}
The functionals $\FreeEnergy$ and $\hFreeEnergy$ are  lower semicontinuous and $\lambda$-convex for some $\lambda\in \R$.


\item (Subdifferential of $\FreeEnergy$.)
If $\mu$ is Lebesgue-a.c. and $\int_{\spX} |\partial_x\mu|^2/\mu < \infty$, then
\begin{equation}
\label{char:b-minimal-element-subdiff}
\frac{\partial_x \mu}{2\mu} -b(\cdot,\mu) \quad \text{is the element of minimal norm of }\partial\FreeEnergy(\mu),
\end{equation}
where $b$ was already given in~\eqref{def:b:first-def}:
\[
b(x,\mu) := -V'(T_\mu x) - \int_{\spX} W'(T_\mu x - T_\mu x') \, \mu(dx').
\]

\item (Subdifferential  of $\hFreeEnergy$.)
\label{i:l:properties:subdiff-hFE}
If $|\partial\hFreeEnergy(\rho)|<\infty$, then 
\[
\frac{\partial_y \rho}{2\rho(1-\alpha\rho)^2} + V' + W'*\rho
\]
is the element of minimal norm of $\partial\hFreeEnergy(\rho)$.
\end{enumerate}
\end{lemma}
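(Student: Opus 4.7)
The plan is to establish the five parts in order, using the pushforward representation $A\mu = (T_\mu)_\# \mu$ from Lemma~\ref{lemma:A-is-pushforward}\ref{i:l:A-Tmu-3} and the icdf formulas as the main computational tools.

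For parts \ref{i:l:props:F} and \ref{i:l:props:F2}, I would start from $\FreeEnergy = \hFreeEnergy\circ A$ and insert $A\mu = (T_\mu)_\#\mu$. The potential and interaction terms transform by direct pushforward, giving $\int V(T_\mu x)\,\mu(dx)$ and the double integral against $W$. The internal-energy term requires the icdf identity $\Y'(m) = \X'(m) + \alpha$, from which Lemma~\ref{lemma:icdf-transformation} gives the pointwise relation $\rho(\Y(m)) = \mu(\X(m))/(1+\alpha\mu(\X(m)))$, equivalently $\rho/(1-\alpha\rho) = \mu$ along the transport, so $\log(\rho/(1-\alpha\rho))$ at $y = T_\mu x$ equals $\log\mu(x)$. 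This yields~\eqref{char:FreeEnergy}. Part~\ref{i:l:props:F2} is then pure bookkeeping: substitute the density of $\bbQ^\mu$ into $\RelEnt(\mu|\bbQ^\mu) = \int\log[\mu(x)\cZ^{\bbQ,\mu}e^{2V(T_\mu x)}]\,\mu(dx)$ and compare with~\eqref{char:FreeEnergy}; the constant $C_\gamma$ absorbs all normalization discrepancies.

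For the $\lambda$-convexity and lower semicontinuity in part~\ref{i:l:basic-props-lambda-convexity}, I would work first with $\hFreeEnergy$. The function $f(s) = \tfrac12 s\log(s/(1-\alpha s))$, extended by $+\infty$ for $s\geq 1/\alpha$, is convex, lower semicontinuous on $[0,\infty)$, and superlinear at infinity, so $\rho\mapsto\int f(\rho)$ is narrowly lower semicontinuous on $\P(\R)$; together with the lower bound on $V$ from Assumption~\ref{ass:VW} and the boundedness of $W$, this gives narrow lsc of $\hFreeEnergy$, and $W_2$-lsc follows since $W_2$-convergence implies narrow convergence. For displacement convexity, I verify McCann's criterion in dimension one: $s\mapsto sf(1/s) = -\tfrac12\log(s-\alpha)$ (plus $+\infty$ for $s\leq\alpha$) is convex and non-increasing on $(0,\infty)$. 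Combined with $\lambda_V$-convexity of $\rho\mapsto\int V\rho$ (from the $C^2_b$ bound on $V'$, giving a bound on $V''$) and $\lambda_W$-convexity of $\hEnergy_W$ (from the $C^2_b$ bound on $W'$), this gives $\lambda$-convexity of $\hFreeEnergy$ for suitable $\lambda\in\R$. For $\FreeEnergy$, since $A$ is a $W_2$-isometry whose action on icdfs is the affine map $\X\mapsto \X+\alpha\cdot\mathrm{id}$, it carries $W_2$-geodesics $\X_t = (1-t)\X_0 + t\X_1$ to $W_2$-geodesics $\Y_t = \X_t + \alpha\cdot\mathrm{id}$; hence both properties transfer.

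For parts \ref{char:b-minimal-element-subdiff} and \ref{i:l:properties:subdiff-hFE}, I would compute the subdifferential of $\hFreeEnergy$ directly and then transport back via $A$. Writing $\hFreeEnergy = \int f(\rho) + \int V\rho + \hEnergy_W$, the standard formulas for internal, potential, and regular interaction energies (see~\cite[Ch.~10]{AmbrosioGigliSavare08}) identify $\partial^\circ\hFreeEnergy(\rho) = \partial_y\bigl[f'(\rho) + V + W*\rho\bigr]$ whenever the right-hand side lies in $L^2(\rho)$. A direct calculation gives $f'(s) = \tfrac12\log(s/(1-\alpha s)) + \tfrac{1}{2(1-\alpha s)}$ and then $\partial_y f'(\rho) = \partial_y\rho/[2\rho(1-\alpha\rho)^2]$, proving part~\ref{i:l:properties:subdiff-hFE}. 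For part~\ref{char:b-minimal-element-subdiff}, a chain-rule argument using Lemma~\ref{lemma:A-is-pushforward}\ref{i:l:A-Tmu-3} shows that $\xi\in\partial\FreeEnergy(\mu)$ iff $\xi = \tilde\xi\circ T_\mu$ for some $\tilde\xi\in\partial\hFreeEnergy(A\mu)$, with equal $L^2$-norms since $T_\mu$ pushes $\mu$ to $A\mu$; applied to the minimal element and combined with the pointwise identity $\mu(x) = \rho(y)/(1-\alpha\rho(y))$ together with $\partial_y\rho = (1-\alpha\rho)^3\partial_x\mu$, one obtains $[\partial_y\rho/(2\rho(1-\alpha\rho)^2)]\circ T_\mu = \partial_x\mu/(2\mu)$, while $V'(T_\mu\cdot) + \int W'(T_\mu\cdot - T_\mu x')\,\mu(dx') = -b(\cdot,\mu)$ by pushforward.

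The main obstacle I anticipate is making part~\ref{i:l:basic-props-lambda-convexity} rigorous along $W_2$-geodesics in the constrained space $\{\rho\leq 1/\alpha\}$: one must check that the $W_2$-geodesic $\mu_t^{1\to2}$ between two admissible measures remains admissible (so that the entropy does not jump to $+\infty$ in the interior of the geodesic), which is exactly where the one-dimensional icdf structure is essential---linear interpolation of icdfs preserves the bound $\Y'\geq\alpha$, equivalently $\rho\leq 1/\alpha$. A secondary technical point is verifying that the regularity assumed in part~\ref{char:b-minimal-element-subdiff} ($\int|\partial_x\mu|^2/\mu<\infty$) is the right condition to guarantee $\partial_y f'(\rho)\in L^2(\rho)$ after transport through $A$; the calculation $\int|\partial_y f'(\rho)|^2\rho\,dy = \int|\partial_x\mu/(2\mu)|^2(1-\alpha\rho)^{-1}\mu\,dx$ shows this uses the density bound $\rho<1/\alpha$ in an essential way.
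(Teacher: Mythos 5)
The proposal is essentially correct. For parts (1)–(2) you use the same computational core as the paper: the icdf identity $\Y'(m)=\X'(m)+\alpha$ (equivalently $\rho/(1-\alpha\rho)=\mu$ along the transport) and the pushforward formulas from Lemma~\ref{lemma:A-is-pushforward}. For part (3) you supply the details (McCann's one-dimensional criterion $s\mapsto sf(1/s)$ convex non-increasing, linear interpolation of icdfs preserving $\Y'\geq\alpha$) that the paper only invokes by citing~\cite{CarrilloMcCannVillani06}; this is a valid expansion of the same idea.

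The genuinely different route is in parts (4)–(5). The paper computes $\partial^\circ\FreeEnergy$ directly on the compressed side by a first-variation calculation (perturbing $\mu\mapsto(r_\e)_\#\mu$ with $r_\e=\mathrm{id}+\e\phi$ and citing~\cite[Thm.~10.4.4, 10.4.6, 10.4.13]{AmbrosioGigliSavare08}), then declares part (5) analogous and omits it. You reverse the order: compute $\partial^\circ\hFreeEnergy$ first from the standard internal/potential/interaction decomposition and $f'(s)=\frac12\log(s/(1-\alpha s))+\frac1{2(1-\alpha s)}$, $f''(s)=\frac1{2s(1-\alpha s)^2}$, then transport through the isometry $A$ using Lemma~\ref{lemma:A-is-pushforward}\ref{i:l:A-Tmu-3} and the pointwise relations $\partial_y\rho=(1-\alpha\rho)^3\partial_x\mu$, $\mu=\rho/(1-\alpha\rho)$. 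This works: $T_\mu$ is a strictly increasing homeomorphism when $\mu$ is Lebesgue-a.c., so composition with $T_\mu$ is an isometric isomorphism $L^2(A\mu)\to L^2(\mu)$ that maps the subdifferentials into one another with preserved norm. Your direction of transport (expanded $\to$ compressed) is arguably cleaner because $\hFreeEnergy$ has the standard Wasserstein internal-energy form.

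The one genuine omission: you do not verify that the integrals defining $\FreeEnergy(\mu)$ and $\hEnt_V(\rho)$ are actually well-defined in $(-\infty,+\infty]$. The paper opens its proof with exactly this estimate, because $\mu\mapsto\int\mu\log\mu$ is unbounded below on $\P(\R)$; the argument uses the coercivity of $V$ via the reference measure $\mu_V(dx)=e^{-2V(T_\mu x)}\,dx$ and the elementary inequality $s_-\leq(s+t)_-+|t|$ to bound the negative part of the integrand by $\frac12\int[\mu-\mu_V]$. Without this step the case distinction in~\eqref{char:FreeEnergy} is not meaningful. Add it before you begin the icdf computation.
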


\begin{remark}[Mean-field structure of the compressed rate functions]
\label{rem:mean-field}
The invariant-measure rate function $\hFreeEnergy$ and the dynamic rate function $\hat I$ that we introduce below both have a particular form. This is best observed in~\eqref{char:FE-RelEnt} and in~\eqref{def:I-time-dependent} below: the argument of the functional appears twice, first  as the first argument in the relative entropy,  and secondly as a parameter in the reference measure. This is a common structure in mean-field interacting particle systems (see e.g.~\cite{Leonard95} or~\cite[Ch.~X]{DenHollander00}). It reflects the fact that once the system has been `compressed' (i.e., transformed to $X^n$) the interaction between the particles has a `nearly-weakly-continuous' dependence on the empirical measure. The estimate~\eqref{est:RN-deriv} below illustrates this: while $\frac1n \log d\bbP_n/d\bbQ_n^\nu$ is not completely continuous in the empirical measure (the right-hand side does not vanish as $\delta\to0$ for finite $n$), the discontinuity does vanish in the limit $n\to\infty$. 

This structure is reflected in the fact that the entropic part of the free energy $\FreeEnergy$ is of the Gibbs-Boltzmann type $\int \mu\log\mu$. By contrast, the \emph{expanded} system has  a  different entropic term $\int \rho\log (\rho/(1-\alpha\rho)$, which reflects the fact that in the expanded system the particles have a strong interaction with each other.
\end{remark}

%
%

\begin{proof}
We first show that the integrals in~\eqref{char:FreeEnergy} and~\eqref{def:hEnt_V-Section-functionals} are well defined; since $\mu\mapsto\int_\R  \mu\log\mu $ is unbounded from below on the space of probability measures, this is not immediate. 
For the first integral in~\eqref{char:FreeEnergy}, we write $\mu_V(dx) := e^{-2V(T_\mu x)}dx$, and use the inequality $s_- \leq (s+t)_- + |t|$ for the negative part $s_- := \max\{-s,0\}$ to estimate
\begin{align*}
\int_{\spX} \Big[\frac12 \log\mu(x) +  V(T_\mu x)\Big]_-\, \mu(dx) 
&=\frac12 \int _{\spX} \Big[\frac \mu{\mu_V} \log\frac \mu{\mu_V} \Big]_- \mu_V\\
&\leq \frac 12 \int_{\spX} \Big[\frac \mu{\mu_V} \log\frac \mu{\mu_V} - \frac \mu{\mu_V} +1  \Big]_- \mu_V + \frac12 \int_{\spX} \bigg| \frac \mu{\mu_V} - 1\bigg| \,\mu_V\\
&= \frac12 \int_{\spX} \bigl[\mu - \mu_V] < \infty.
\end{align*}
It follows that the first integral in~\eqref{char:FreeEnergy} is well defined in $(-\infty,\infty]$, and a similar  calculation shows the same for the first integral in~\eqref{def:hEnt_V-Section-functionals}. 

We next prove the formula~\eqref{char:FreeEnergy} for the functional $\FreeEnergy$. 
Since $\FreeEnergy(\mu)$ is defined as $\hEnt_V(A\mu) + \hEnergy_{W}(A\mu) + C_\FreeEnergy$, finiteness of $\FreeEnergy(\mu)$ implies that $\mu$ is absolutely continuous. The  second integral in~\eqref{char:FreeEnergy} then follows by Lemma~\ref{lemma:A-is-pushforward}:
\[
\int_{\spY}\int_{\spY} W(y-y')(A^{-1}\mu)(dy)(A^{-1}\mu)(dy') =
\int_{\spX}\int_{\spX} W(T_\mu x - T_\mu x') \mu(dx)\mu(dx').
\]
We turn to the first integral in~\eqref{char:FreeEnergy}. 
Again let $\FreeEnergy(\mu)$ be finite, which implies that there exists $\rho$  such that $\rho = A\mu$ and $\hEnt_V(\rho) = \Ent_V(\mu)<\infty$. This  implies that $\rho$ is Lebesgue-absolutely-continuous and satisfies $\rho(y)< 1/\alpha$ for almost all $y$.  
By Lemma~\ref{lemma:icdf-transformation} the icdf $\Y$ of $\rho$ is monotonic, and its derivative $\Y'(m)$ exists at almost all $m \in (0,1)$ and is equal to $1/ \rho (\Y(m))$. We then calculate
\begin{align*}
\hEnt_V(\rho) &= \int_{\spY} \rho(y) \bigg[\frac12\log\frac{\rho(y)}{1-\alpha\rho(y)}  + V(y)\bigg] \, dy
\stackrel{\eqref{eq:transformation-X}}= \int_0^1 \bigg[\frac12\log\frac{\rho(\Y(m))}{1-\alpha\rho(\Y(m))}  + V(\Y(m))\bigg] \, dm\\
&= \int _0 ^1 \left[\frac12 \log \Big( \frac{1}{\Y' (m) - \alpha } \Big)
    + V(\Y(m))\right ] dm.
\end{align*}
Since this integral is assumed to be finite, $\Y'(m)>\alpha$ for Lebesgue-a.e.\ $m$, and therefore  $\X'(m) = \Y'(m) - \alpha>0$ for almost all $m$. Inserting this into the expression above yields
\begin{align*}
\hEnt_V(\rho) &= \int _0 ^1 \left[\frac12\log \Big( \frac{1}{\X' (m)} \Big )  + V(\X(m) + \alpha m)\right] dm\\
&\leftstackrel{\eqref{eq:TrhoX}}=\int_0^1 \bigg[ \frac12 \log\mu(\X(m))  + V(T_\mu \X(m))\bigg]\, dm\\
&\leftstackrel{\eqref{eq:transformation-X}}= \int_{\spX} \mu(x) \bigg[\frac12\log\mu(x)  + V(T_\mu x)\bigg] \, dx.
\end{align*}

Writing
\[
\widetilde \Ent_V(\mu)
:= \begin{cases}
\ds \int_{\spX} \mu(x) \left[ \frac{1}{2}\log \mu(x) + V(T_\mu x) \right] dy 
    &\text{if $\mu$ is Lebesgue-a.c. },\\
\infty&\text{otherwise},
\end{cases}
\]
we therefore have proved 
\[
\Ent_V(\mu) < \infty \quad\Longrightarrow\quad  \widetilde\Ent_V(\mu)<\infty \text{ and } 
\hEnt_V(\mu) = \widetilde\Ent_V(\mu) .
\]
By reversing the argument we similarly show that
\begin{equation}
\label{prf:tidleEnt-to-hEnt}
\widetilde\Ent_V(\mu) < \infty \quad\Longrightarrow\quad  \Ent_V(\mu)<\infty \text{ and } 
\hEnt_V(\mu) = \widetilde\Ent_V(\mu) ,
\end{equation}
which concludes the proof of~\eqref{char:FreeEnergy}.

\medskip

Turning to the characterization~\eqref{char:FE-RelEnt}, assume that $\Ent_V(\mu)<\infty$, which implies by part~\ref{i:l:props:F} that $\mu$ is absolutely continuous and $x\mapsto \tfrac12 \log \mu(x) + V(T_\mu x) = \tfrac12\log \bigl[\mu(x)/\bbQ^\mu(x)\bigr] -\tfrac12 \log \mathcal Z^{\bbQ,\mu}$ is an element of $L^1(\mu)$. Therefore
\[
\frac12 \RelEnt(\mu|\bbQ^\mu) = \frac12\int_{\spX} \mu(dx) \log \frac{\mu(x) }{\bbQ^\mu(x)} 
= \int_{\spX} \mu(dx) \bigg[\frac12 \log \mu(x) + V(T_\mu x) \biggr] - \frac12 \log \mathcal Z^{\bbQ,\mu},
\]
which proves the formula~\eqref{char:FE-RelEnt} for the case $\Ent_V(\mu)<\infty$. On the other hand, if $\RelEnt(\mu|\bbQ^\mu)<\infty$, then we can reverse the argument above and obtain $\Ent_V(\mu)<\infty$ and equality. This proves part~\ref{i:l:props:F2}.

\medskip

To prove part~\ref{i:l:basic-props-lambda-convexity}, note that the lower semicontinuity in $\P_2(\R)$ of $\hEnt_V$ is a consequence of its convexity, and the lower semicontinuity of $\hEnergy_{W}$ follows from the  boundedness and continuity of~$W$. The isometries $A$ and $A^{-1}$ then transfer the same properties to $\Ent_V$ and $\Energy_{W}$. The $\lambda$-convexity of $\hFreeEnergy$ also is a standard result for functionals of this type; see e.g.~\cite[Sec.~5]{CarrilloMcCannVillani06}.

\bigskip

We next turn to the calculation of the element of minimal norm in the subdifferential of~$\FreeEnergy$, evaluated  at a  $\mu\in \P(\spX)$ with the properties that $\mu$ is Lebesgue absolutely continuous and $\partial_x \mu  = w \mu $ with $w\in L^2(\mu)$.

Take $\phi\in  C_c^\infty(\R)$ and set $r_\e(x) := x + \e\phi(x)$; note that for small $\e$, $r_\e$ is strictly increasing. Set $\mu_\e := (r_\e)_\#\mu$, and note that $\mu_\e$ also is absolutely continuous. From~\cite[Theorems~10.4.4 and~10.4.6]{AmbrosioGigliSavare08} (or an explicit calculation) we deduce that
\begin{equation}
\label{eq:directional-derivative-Ent}
\frac d{d\e} \frac12\int_{\R}\mu_\e(x)\log \mu_\e(x)\, dx \Big|_{\e=0} = 
  -\frac12 \int_{\R} \mu\partial_x \phi .
\end{equation}
Setting 
\[
\Energy_V (\nu) := \int_\R V(T_\nu x)\, \nu(dx),
\]
 we calculate 
\begin{align*}
\Energy_{V}((r_\e)_\#\mu) 
&=  \int_{\R} V(T_{(r_\e)_\#\mu} (x)) \,((r_\e)_\#\mu) (dx)
  = \int_{\R} V\Bigl(T_{(r_\e)_\#\mu} \bigl(r_\e (x)\bigr)\Bigr) \,\mu(dx)\\
&= \int_{\R} V\Bigl(r_\e(x) +\alpha \,(r_\e)_\#\mu\bigl((-\infty,r_\e(x)\bigr)\Bigr) \,\mu(dx) \\
&= \int_{\R} V\Bigl(r_\e(x) +\alpha\mu\bigl((-\infty,x)\bigr)\Bigr) \,\mu(dx).
\end{align*}
Therefore 
\[
\frac d{d\e} \Energy_{V}(\mu_\e) \Big|_{\e=0} = 
\int_{\R} V'\Bigl(x +\alpha\mu\bigl([0,x)\bigr)\Bigr) \, \phi(x) \,\mu(dx)
= \int_{\R} V'(T_\mu x) \, \phi(x) \,\mu(dx).
\]
Combining these expressions with a similar one for $\Energy_W$, and using $\int_{\spX} |\partial_x\mu|^2/\mu < \infty$, we find
\begin{align*}
\frac d{d\e} \FreeEnergy(\mu_\e)\Big|_{\e=0} &= 
- \int_{\R} \Bigl[\frac12 \partial_x \phi(x)  + b(x,\mu)\phi(x)\Bigr]\mu(dx)\\
&= \int_{\R} \Bigl[\frac{\partial_x\mu(x)}{2\mu(x)}  - b(x,\mu)\Bigr]\phi(x)\, \mu(dx).
\end{align*}
By an argument as in the proof of \cite[Th.~10.4.13]{AmbrosioGigliSavare08} it follows that $\partial_x\mu/2\mu - b(\cdot,\mu)$ is the element of minimal norm in the subdifferential $\partial\FreeEnergy(\mu)$. 

Finally, the proof of part~\ref{i:l:properties:subdiff-hFE} follows along very similar lines as the previous part, and we omit it.
\end{proof}

\newpage

\section{Pathwise large deviations for \texorpdfstring{$X^n$}{Xn} with i.i.d. initial data}
\label{s:ldp-iid}

The aim of this section is to prove the following large-deviations principle for the compressed particle system $X^n$. In this theorem  we start the evolution with i.i.d.\ initial data, which is different from the situation of Theorem~\ref{th:LDP-path};  we use the  name $Z^n$ in order to distinguish this case from the case we study in the proof of Theorem~\ref{th:LDP-path}.

\begin{theorem}
\label{th:LDP-dynamic-X}
Let $\xi\in \P(\R)$, and for each $n\in \N$ let $Z^n$ be the particle system defined in~Definition~\ref{def:particle-systems}(\ref{i:def:particle-systems:X}), with initial data $Z^n(0)$ drawn from $\xi^{\otimes n}$. 

The random variable $t\mapsto  \eta_n(Z^n(t))$ satisfies a large-deviation principle on $C([0,T];\P(\spX))$ with good rate function
\begin{equation}
\label{def:I-time-dependent}
\mathfrak I_{\xi}(\mu) := \inf\Bigl\{ \RelEnt(P|\mathbb W_{\xi}^P): P_t = \mu_t \text{ for all }t\text{ and } \RelEnt(P|\mathbb W_{\xi})<\infty \Bigr\}.
\end{equation}

Here $\mathbb W_{\xi}\in \P\big(C([0,T];\R)\big)$ is the law of a Brownian particle with initial position drawn from~$\xi$, and for any $P\in \P(C([0,T];\R))$, the measure $\mathbb W_{\xi}^P\in \P\big(C([0,T];\R)\big)$ is the law of the process $Z^P$ satisfying the SDE in $\R$,
\begin{equation}
\label{def:W^rho}
dZ^P(t) = b\bigl(Z^P(t),P_t\bigr) \,dt + dB_t, \quad Z^P(0) \sim \xi.
\end{equation}
The notation  $P_t\in\P(\R)$ represents the time-slice marginal of the measure~$P$ at time $t$.

\end{theorem}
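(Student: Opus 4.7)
The strategy is the classical Sanov--Girsanov--Varadhan route, but adapted to the fact that the mean-field drift $b(x,\mu) = -V'(T_\mu x) - \int W'(T_\mu x - T_\mu x')\,\mu(dx')$ is not continuous in $\mu$ for the narrow topology (because $T_\mu x = x + \alpha\mu((-\infty,x))$ depends on a point evaluation of the cumulative distribution function). I will lift the LDP to path space first, and then contract.

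First, let $\widetilde Z^n = (\widetilde Z_1^n,\dots,\widetilde Z_n^n)$ be $n$ independent Brownian motions on $[0,T]$ whose initial positions are i.i.d.\ from $\xi$, and let
\[
L_n := \frac1n\sum_{i=1}^n \delta_{\widetilde Z_i^n} \in \P\bigl(C([0,T];\R)\bigr).
\]
Sanov's theorem for path space (see e.g.\ \cite{DemboZeitouni98}) gives that $L_n$ satisfies a large-deviation principle on $\P(C([0,T];\R))$ with good rate function $P\mapsto \RelEnt(P|\mathbb W_\xi)$. By the contraction principle along the continuous evaluation map $P\mapsto (P_t)_{t\in[0,T]}$ into $C([0,T];\P(\R))$, the trajectory of time marginals satisfies an LDP with the infimum-form rate function of the same shape as~\eqref{def:I-time-dependent} but with $\mathbb W_\xi$ replacing $\mathbb W_\xi^P$.

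Second, I compare the laws of $Z^n$ and $\widetilde Z^n$ via Girsanov's theorem. Under the drift~\eqref{def:b:first-def}, the Radon--Nikodym derivative can be written in empirical-measure form as
\[
\frac{d\,\mathrm{Law}(Z^n)}{d\,\mathrm{Law}(\widetilde Z^n)}(\widetilde Z^n) = \exp\bigl[\, n\, G(L_n)\bigr],
\]
where $G:\P(C([0,T];\R))\to\R\cup\{-\infty\}$ is the natural functional
\[
G(P) = \int_0^T\!\!\int b(z_t,P_t)\, dz_t\, P(dz) - \frac12\int_0^T\!\!\int |b(z_t,P_t)|^2\, dt\, P(dz),
\]
with the stochastic integral interpreted via Itô. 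The sum of drift dot $dB$ minus half the square is exactly $-\RelEnt(P|\mathbb W_\xi^P) + \RelEnt(P|\mathbb W_\xi)$ when $P\ll \mathbb W_\xi$, by the explicit form of relative entropy between diffusions.

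Third, I apply a tilted-measure argument. Combining Sanov's theorem for $L_n$ with the Girsanov density $e^{nG(L_n)}$ gives, at least formally via Varadhan's lemma, an LDP for $L_n$ under $\mathrm{Law}(Z^n)$ with rate
\[
\RelEnt(P|\mathbb W_\xi) - G(P) \;=\; \RelEnt(P|\mathbb W_\xi^P).
\]
The genuine obstacle here is that $\mu\mapsto b(\cdot,\mu)$, and therefore $P\mapsto G(P)$, is not narrowly continuous because of the indicator in $T_\mu x$. This is precisely the setting covered by the extension of Varadhan's lemma due to Hoeksema, Maurelli, Holding, and Tse~\cite{HoeksemaMaurelliHoldingTse20TR} (Theorem~\ref{th:JasperMarioOliver} in this paper). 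To apply it I will verify the exponential integrability bounds on $G(L_n)$ and the quantitative continuity on level sets of the rate function, using boundedness of $W'$, the Lipschitz and growth bounds on $V$ from Assumption~\ref{ass:VW}, the uniform estimate $|V(x)-V(T_\mu x)|\le C$ from~\eqref{bound:V-VTmu}, and the fact that for configurations $P$ with $\RelEnt(P|\mathbb W_\xi)<\infty$ the marginals $P_t$ are absolutely continuous, so the discontinuity of $T_{P_t}$ is irrelevant on a set of full measure. This is the principal technical step.

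Fourth, once the pathwise LDP for $L_n$ under $\mathrm{Law}(Z^n)$ with rate $P\mapsto \RelEnt(P|\mathbb W_\xi^P)$ is established, I apply the contraction principle along the continuous map $P\mapsto (P_t)_{t\in[0,T]}$ from $\P(C([0,T];\R))$ to $C([0,T];\P(\R))$. Since the trajectory of empirical measures is $\eta_n(Z^n(t)) = (L_n)_t$, the resulting rate function is precisely the infimum~\eqref{def:I-time-dependent}, and goodness is inherited from the goodness of $P\mapsto\RelEnt(P|\mathbb W_\xi^P)$ on path space. Exponential tightness on $C([0,T];\P(\R))$ follows from the corresponding tightness of $L_n$, which in turn follows from Sanov and the moment control provided by the coercivity of $V$ in~\eqref{cond:coercivity} together with~\eqref{bound:VTmu-from-below}.
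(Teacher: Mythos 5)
You have the right framework (Sanov, Girsanov tilting, contraction along $P\mapsto(P_t)_t$) and you cite the right reference, but your plan does not actually carry out the step that the paper's proof consists of, and the characterization of \cite{HoeksemaMaurelliHoldingTse20TR} as ``an extension of Varadhan's lemma'' leads you to plan the wrong verification. Theorem~\ref{th:JasperMarioOliver}, as invoked here, is not a Varadhan-type transfer of rate functions under a tilting functional $G$; it is a complete LDP for the empirical process of interacting SDEs whose drift has the specific compositional form $\Psi\bigl(Z_i,Z_j,\frac1n\sum_\ell\varphi(Z_i,Z_\ell),\frac1n\sum_\ell\varphi(Z_j,Z_\ell)\bigr)$ with $\Psi$ and $\varphi_1$ bounded Lipschitz and $\varphi_2\in L^p$. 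Consequently the entire content of the paper's proof of Theorem~\ref{th:LDP-dynamic-X} is a structural verification: one has to express the drift
\[
b(x_i,\eta_n(x)) = -V'\Bigl(x_i+\tfrac\alpha n\#\{x_\ell<x_i\}\Bigr) - \frac1n\sum_j W'\Bigl(x_i-x_j+\tfrac\alpha n\#\{x_\ell<x_i\}-\tfrac\alpha n\#\{x_\ell<x_j\}\Bigr)
\]
in exactly that form. The nontrivial observation that makes this possible is that the Heaviside kernel $H(x_1-x_2)=\chi_{(0,\infty)}(x_1-x_2)$ --- which is neither Lipschitz nor in $L^p$ --- can be split as $H = \varphi_1 + \varphi_2$ with $\varphi_1$ bounded Lipschitz and $\varphi_2$ a compactly supported $L^p$ function, after which one sets $\Psi(x,y,s,t):=-V'(x+\alpha s)-W'(x-y+\alpha(s-t))$ and checks $\Psi$ is bounded Lipschitz under Assumption~\ref{ass:VW}. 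Your plan contains none of this. Verifying ``exponential integrability of $G(L_n)$'' and ``quantitative continuity on level sets'' is not what the cited theorem asks for, and attempting to establish such continuity directly would re-prove a substantial portion of \cite{HoeksemaMaurelliHoldingTse20TR}.

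A second, more localized gap: the remark that ``for $P$ with $\RelEnt(P|\mathbb W_\xi)<\infty$ the marginals $P_t$ are absolutely continuous, so the discontinuity of $T_{P_t}$ is irrelevant on a set of full measure'' does not resolve the difficulty you are trying to wave away. The problem in proving the upper bound in a tilted Sanov/Varadhan scheme is the behavior of $G$ along the \emph{atomic} empirical measures $L_n$ converging narrowly to $P$; the map $\mu\mapsto\mu((-\infty,x))$ is not narrowly continuous, and absolute continuity of the limit says nothing about the difference $|G(L_n)-G(P)|$. It is precisely to control this that the $L^p$-part of the kernel decomposition, together with the sub-Gaussian moment estimates inside \cite{HoeksemaMaurelliHoldingTse20TR}, is needed. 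The final contraction step in your plan is correct and matches the paper.
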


\begin{proof}
The assertion is a direct translation of the following theorem from~\cite{HoeksemaMaurelliHoldingTse20TR}:

\begin{theorem}[{\cite[Prop.~4.15 and Rem.~4.16]{HoeksemaMaurelliHoldingTse20TR}}]
\label{th:JasperMarioOliver}
Let $\Psi:\R^4\to\R$ and $\varphi_1:\R^2\to\R$ be bounded and globally Lipschitz continuous, and let $\varphi_2\in L^p(\R)$. Set $\varphi(x_1,x_2) := \varphi_1(x_1,x_2) + \varphi_2(x_1-x_2)$. Let $\xi\in \P(\R)$.

Let $Z^n = (Z^n_1,\dots,Z^n_n)$ solve the system of interacting SDEs in $\R^n$
\begin{multline}
\label{sde:JasperMarioOliver}
dZ^n_i(t) = \frac1n \sum_{j=1}^n \Psi\biggl(Z^n_i(t),Z^n_j(t), 
  \frac1n \sum_{\substack{\ell=1\\\ell\not=i}}^n \varphi(Z^n_i(t),Z^n_\ell(t)),
  \frac1n \sum_{\substack{\ell=1\\\ell\not=j}}^n \varphi(Z^n_j(t),Z^n_\ell(t))
  \biggr) \, dt  + dB_i,\\
\qquad Z^n_i(0)\sim \xi \text{ i.i.d.,}
\end{multline}
where $B_i$ are independent standard Brownian motions. Let $\widehat P^n$ be the corresponding  \emph{empirical process} 
\[
\widehat P^n := \frac1n \sum_{i=1}^n \delta_{Z^n_i} \in \P(C([0,T];\R)).
\]
Then $\widehat P^n$ satisfies a large-deviation principle on $\P(C([0,T];\R))$, with {good} rate function
\begin{equation}
\label{def:Rf-J}
J(P) := \begin{cases}
\RelEnt(P|\mathbb W_{\xi}^P) & \text{if }\RelEnt(P|\mathbb W_{\xi})<\infty,\\
+\infty &\text{otherwise.}
\end{cases}
\end{equation}
\end{theorem}

To prove Theorem~\ref{th:LDP-dynamic-X}, we apply Theorem~\ref{th:JasperMarioOliver} to the particle system $Z^n$ of  Theorem~\ref{th:LDP-dynamic-X}. Let $H := \chi^{}_{(0,\infty)}$ be the lower semicontinuous Heaviside function; define $\varphi_1$ and $\varphi_2$ by
\[
\varphi_1(x_1,x_2) := H(x_1-x_2) - \varphi_2(x_1-x_2),
\qquad 
\varphi_2(s) := \begin{cases}
0 & s\leq 0,\\
1 & 0<s\leq 1,\\
\text{smooth interpolation} & 1\leq s\leq 2,\\
0 & s\geq 2,
\end{cases}.
\]
We also set 
\[
\Psi(x,y,s,t) := -V'(x+\alpha s) - W'(x-y + \alpha(s-t)).
\]
Then the functions $\varphi(x_1,x_2) := \varphi_1(x_1,x_2) + \varphi_2(x_1-x_2) = H(x_1-x_2)$ and $\Psi$ satisfy the conditions of Theorem~\ref{th:JasperMarioOliver}. For these choices of $\varphi$ and $\Psi$, we have 
\begin{align*}
\Psi&\biggl(Z^n_i,Z^n_j, 
  \frac1n \sum_{\substack{\ell=1\\\ell\not=i}}^n \varphi(Z^n_i,Z^n_\ell),
  \frac1n \sum_{\substack{\ell=1\\\ell\not=j}}^n \varphi(Z^n_j,Z^n_\ell)
  \biggr)\\
&= \Psi\biggl(Z^n_i,Z^n_j, 
  \frac1n \sum_{\substack{\ell=1\\\ell\not=i}}^n H(Z^n_i - Z^n_\ell),
  \frac1n \sum_{\substack{\ell=1\\\ell\not=j}}^n H(Z^n_j-Z^n_\ell)
  \biggr)\\
&= \Psi\biggl(Z^n_i,Z^n_j, 
  \frac1 n \#\{Z^n_\ell < Z^n_i),
  \frac1 n \#\{Z^n_\ell < Z^n_j)\biggr)\\
&= -V'\Bigl(Z^n_i + \frac\alpha n \#\{Z^n_\ell < Z^n_i\}\Bigr)
  - W'\Bigl(Z^n_i -Z^n_j + \frac\alpha n \#\{Z^n_\ell < Z^n_i\} - \frac\alpha n \#\{Z^n_\ell < Z^n_j\}\Bigr),
\end{align*}
which equals $\mathrm b$ in~\eqref{def:b:first-def}. Therefore 
the particle system~$Z_i$  is a weak solution of~\eqref{sde:JasperMarioOliver}. Theorem~\ref{th:JasperMarioOliver} then implies that the empirical process $\hat P^n = n^{-1} \sum_{i=1}^n \delta_{Z^n_i}$ satisfies a large-deviation principle with rate function~\eqref{def:Rf-J}.

Since the mapping $\mathcal T: \P(C([0,T];\R))\to C([0,T];\P(\R))$ given by 
\[
\langle (\mathcal TP)_t,\phi\rangle  := \int_{C([0,T];\R)} \phi(x(t))\, P(dx),
\qquad\text{for $P\in \P(C([0,T];\R))$ and $\phi\in C_b(\R)$,}
\]
is continuous, the contraction principle (e.g.~\cite[Sec.~4.2.1]{DemboZeitouni98}) implies that $\mu_n = \mathcal T\hat P^n$ satisfies a large-deviation principle on $C([0,T];\P(\spX))$ with good rate function~\eqref{def:I-time-dependent}. 
\end{proof}

\newpage

\section{Preliminary estimates for the pathwise large deviations}
\label{s:ldp-estimates}
In the previous section we established a large-deviation principle for the particle system~$Z^n$, which starts at initial positions $Z^n(0)$ drawn i.i.d.\ from some distribution $\nu\in \P(\spX)$. The particle system $Z^n$ is situated in the compressed (`$X$') setup. After mapping to the expanded setup, the evolutions $t \mapsto Z^n(t)$ are solutions of the `correct' SDE~\eqref{eq:SDE} (or Definition~\ref{def:particle-systems}\eqref{i:def:particle-systems:Y}). However, the expansion causes the initial data to be distributed in a  convoluted and unnatural way. 

In this section and the following two ones we therefore adapt the large-deviation principle for $Z^n$ of Theorem~\ref{th:LDP-dynamic-X} to  the more natural initial distribution of Theorem~\ref{th:LDP-path}. In this section we establish a number of estimates.

\bigskip

In the proof of Theorem~\ref{th:LDP-pathwise-weakversion} below of the large-deviation principle for $Y^n$, the initial data  $Y^n(0)$ will be distributed according to a version of the invariant measure $\InvMeas_n$ with $W=0$: 
\[
\InvMeasWiszero_n(dy) := \frac1{\cZ_n^{W=0}} \exp \biggl(\,-2 \sum_{i=1}^n V(y_i) \biggr) \, \Lebesgue^n\Big|_{\Omega_n}(dy).
\]
The compressed version of this measure is  (see Lemma~\ref{l:transformed-inv-meas})
\begin{equation}
\label{def:tildePn}
\widetilde \bbP_n (dx) := \frac1{\cZ_n^{W=0}} \exp \biggl(\,-2 \sum_{i=1}^n V(T_{\eta_n(x)}x_i) \biggr) \, \Lebesgue^n(dx).
\end{equation}

On the other hand, in the auxiliary particle system~$Z^n$ the initial positions $Z^n_i(0)$ will be i.i.d.\ distributed with common law $\xi := \bbQ^\nu$; recall from Lemma~\ref{lemma:properties-of-the-functionals} that for given $\nu\in \P(\spX)$ the single-particle measure $\bbQ^\nu\in \P(\spX)$ is defined as
\[
\bbQ^\nu(dx) := \frac{1}{\cZ^{\bbQ,\nu}} 
\exp\bigl[\,-2 V(T_{\nu}x)\bigr] \, dx, 
\qquad\text{where}\qquad
\cZ^{\bbQ,\nu} := \int_{\spX} \exp\bigl[\,-2  V(T_{\nu}x)\bigr] \, dx.
\]
Therefore the vector $Z^n(0)$ has as law the $n$-particle tensor product $\bbQ_n^\nu\in \P(\spX^n)$,
\begin{equation}
\label{def:Qnu_n}
\bbQ_n^{\nu}(dx_1\cdots dx_n)  := (\bbQ^\nu)^{\otimes n} = \frac1{(\cZ^{\bbQ,\nu})^n}
 \int_{\spX^n} \exp\biggl[\,-2 \sum_{i=1}^n V(T_{\nu}x_i)\biggr] \, dx_1\cdots dx_n
.
\end{equation}

The following lemma is the main result of this section, and establishes some estimates that connect these two particle systems; recall that the metric $d_{BL}$ on $\P(\R)$, appearing in part (3), is defined in duality with bounded Lipschitz functions (see Section~\ref{ss:narrow-convergence}).

\begin{lemma}[Basic estimates]
For $\nu\in \P(\spX)$, let $\bbQ_n^\nu\in \P(\spX^n)$ and $\widetilde \bbP_n$ be defined as above in~\eqref{def:tildePn} and~\eqref{def:Qnu_n}. Recall that the function $\gamma$ and the constant $C_\gamma$ were defined in Lemma~\ref{lemma:properties-of-the-functionals}.
\begin{enumerate}
\item \label{l:basic-est:Z}
We have
\begin{equation}
\label{eq:basic-est:Z:conv_rn}
r_n :=  \left|\frac1n\log {\cZ^{W=0}_n} - C_\gamma\,\right| \qquad\text{satisfies} \qquad r_{n}\stackrel{n\to\infty} \longrightarrow 0.
\end{equation}

\item\label{l:basic-est:PQ1}
For any $\nu\in \P(\R)$, 
\begin{equation}
\label{est:PQ1}
\sup_{n\geq 1} \left\|\frac1n \log \frac{d\widetilde \bbP_n}{d\bbQ_n^\nu}\right\|_{L^\infty(\spX^n)} <\infty.
\end{equation}

\item \label{l:basic-est:PQ2}
There exists a function $R:[0,\infty)\times \P(\spX)\to [0,\infty)$ such that 
for any $\delta>0$ and $\nu\in \P(\spX)$, 
\begin{equation}
\label{est:RN-deriv}
\left\| \frac1n \log \frac{d\widetilde \bbP_n}{d\bbQ_n^\nu} + \gamma(\nu)\right\|_{L^\infty(B_{n,\delta}(\nu))} \leq r_n + R(\delta,\nu).
\end{equation}
Here
\[
B_{n,\delta}(\nu) := \Bigl\{ x\in \R^n: d_{BL}(\eta_n(x),\nu)< \delta\Bigr\},
\]
and for all $\nu\in \P(\spX)$, $R(\cdot,\nu):[0,\infty)\to[0,\infty)$ is non-decreasing. If $\nu$ is Lebesgue-absolutely-continuous, then $\lim_{\delta\downarrow0} R(\delta,\nu) = 0$. 
\end{enumerate}
\end{lemma}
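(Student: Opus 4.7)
The starting point is a direct computation of the Radon--Nikodym derivative. Combining the definitions~\eqref{def:tildePn} and~\eqref{def:Qnu_n},
\begin{equation*}
\frac{d\widetilde\bbP_n}{d\bbQ_n^\nu}(x) = \frac{(\cZ^{\bbQ,\nu})^n}{\cZ_n^{W=0}} \exp\Bigl[-2\sum_{i=1}^n\bigl(V(T_{\eta_n(x)}x_i)-V(T_\nu x_i)\bigr)\Bigr];
\end{equation*}
taking $\tfrac1n\log$ and adding $\gamma(\nu)=C_\gamma-\log\cZ^{\bbQ,\nu}$ yields the master identity
\begin{equation*}
\frac{1}{n}\log\frac{d\widetilde\bbP_n}{d\bbQ_n^\nu}(x)+\gamma(\nu) = \Bigl[C_\gamma-\frac{1}{n}\log\cZ_n^{W=0}\Bigr] - \frac{2}{n}\sum_{i=1}^n\bigl[V(T_{\eta_n(x)}x_i)-V(T_\nu x_i)\bigr].
\end{equation*}
The first bracket is exactly $\pm r_n$, and for the second term I would use the pointwise estimate $|V(T_\mu y)-V(T_\nu y)| \leq \alpha\Lip(V)\sup_z|F_\mu(z)-F_\nu(z)|$, which follows from $T_\mu y-T_\nu y = \alpha(F_\mu(y^-)-F_\nu(y^-))$ and the global Lipschitz continuity of $V$.

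Parts~\ref{l:basic-est:PQ1} and~\ref{l:basic-est:PQ2} both reduce to controlling these two ingredients. For part~\ref{l:basic-est:PQ1}, the CDFs lie in $[0,1]$ so the last sum is uniformly bounded by $2\alpha\Lip(V)$; moreover an elementary sandwich using $|V(T_\mu y)-V(y)| \leq \alpha\Lip(V)$ together with the integrability of $e^{-2V}$ from the coercivity~\eqref{cond:coercivity} shows that both $|\log\cZ^{\bbQ,\nu}|$ and $n^{-1}|\log\cZ_n^{W=0}|$ are bounded uniformly in $\nu$ and $n$. For part~\ref{l:basic-est:PQ2}, I would set
\begin{equation*}
R(\delta,\nu) := 2\alpha\Lip(V)\cdot\sup\bigl\{\sup_z|F_\mu(z)-F_\nu(z)| : \mu\in\P(\R),\; d_{BL}(\mu,\nu)<\delta\bigr\},
\end{equation*}
which is non-decreasing in $\delta$ and gives the required inequality directly. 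That $R(\delta,\nu)\to 0$ as $\delta\downarrow 0$ for absolutely continuous $\nu$ is a Polya-type uniformisation: sandwiching $\chi_{(-\infty,z]}$ between Lipschitz functions of $BL$-norm $\leq 1+1/h$ yields $F_\nu(z-h)-(1+1/h)\delta\leq F_\mu(z)\leq F_\nu(z+h)+(1+1/h)\delta$, and one concludes using the uniform continuity of the continuous, monotone, bounded function $F_\nu$ (choosing, say, $h=\sqrt\delta$).

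The substantive part is~\ref{l:basic-est:Z}, the convergence $r_n\to 0$. My plan is a tilting argument combined with Sanov's theorem and Varadhan's lemma. Let $\nu^\star$ be a minimizer of $\Ent_V$, which exists and is Lebesgue absolutely continuous by the lower semicontinuity, $\lambda$-convexity, and coercivity properties collected in Lemma~\ref{lemma:properties-of-the-functionals}. Tilting the identity $1=\int d\widetilde\bbP_n$ against $\bbQ_n^{\nu^\star}$ gives
\begin{equation*}
\frac{1}{n}\log\cZ_n^{W=0} = \log\cZ^{\bbQ,\nu^\star} + \frac{1}{n}\log\Expectation_{\bbQ_n^{\nu^\star}}\bigl[e^{-2n\Phi_{\nu^\star}(\eta_n)}\bigr],\qquad \Phi_{\nu^\star}(\mu) := \int\bigl[V(T_\mu y)-V(T_{\nu^\star}y)\bigr]\,\mu(dy).
\end{equation*}
Under $\bbQ_n^{\nu^\star}$ the particles are i.i.d., so Sanov's theorem provides an LDP for $\eta_n$ in the narrow topology with good rate function $\RelEnt(\cdot\mid\bbQ^{\nu^\star})$. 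The functional $\Phi_{\nu^\star}$ is globally bounded by $\alpha\Lip(V)$ and, although not globally narrow-continuous (its definition involves the CDF via $T_\mu$), it is continuous on the effective domain of the rate function, which consists only of absolutely continuous measures---exactly the setting where the uniform-CDF argument of part~\ref{l:basic-est:PQ2} applies. Varadhan's lemma in its restricted-continuity form (e.g.\ \cite[Lemma~4.3.6]{DemboZeitouni98}) then yields
\begin{equation*}
\lim_{n\to\infty}\frac{1}{n}\log\Expectation_{\bbQ_n^{\nu^\star}}\bigl[e^{-2n\Phi_{\nu^\star}(\eta_n)}\bigr] = -\inf_\mu\bigl[2\Phi_{\nu^\star}(\mu)+\RelEnt(\mu\mid\bbQ^{\nu^\star})\bigr].
\end{equation*}
A short algebraic manipulation using $\RelEnt(\mu\mid\bbQ^{\nu^\star})=\int\mu\log\mu+2\int V(T_{\nu^\star}y)\mu(dy)+\log\cZ^{\bbQ,\nu^\star}$ and formula~\eqref{char:FreeEnergy} rewrites the bracket as $2\Ent_V(\mu)+\log\cZ^{\bbQ,\nu^\star}-C_\gamma$, whose infimum equals $\log\cZ^{\bbQ,\nu^\star}-C_\gamma$ (since $\inf\Ent_V=0$). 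Substituting back gives $n^{-1}\log\cZ_n^{W=0}\to C_\gamma$, i.e.\ $r_n\to 0$. The main obstacle is precisely the failure of global narrow-continuity of $\Phi_{\nu^\star}$, which is circumvented by restricting to the absolutely-continuous effective domain of $\RelEnt(\cdot\mid\bbQ^{\nu^\star})$ where the CDFs converge uniformly.
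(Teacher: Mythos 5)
Your proofs of parts~\ref{l:basic-est:PQ1} and~\ref{l:basic-est:PQ2} are essentially the paper's: compute the Radon--Nikodym derivative, use $T_\mu y-T_\nu y=\alpha\bigl(F_\mu(y^-)-F_\nu(y^-)\bigr)$ and the global Lipschitz bound on $V$, and then uniformise the CDF comparison via a Polya-type lemma (the paper's Lemma~\ref{l:Polya}).

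For part~\ref{l:basic-est:Z}, however, your argument has a genuine gap. You tilt $\cZ_n^{W=0}$ against $\bbQ^{\nu^\star}_n$ and invoke ``Varadhan's lemma in its restricted-continuity form'' to compute $\lim_n\frac1n\log\Expectation\bigl[e^{-2n\Phi_{\nu^\star}(\eta_n)}\bigr]$. But $\Phi_{\nu^\star}$ is neither upper nor lower semicontinuous in the narrow topology: at a measure $\mu$ with an atom, $T_{\mu_n}y$ can fail to converge to $T_\mu y$ for sequences $\mu_n\weakto\mu$, and the failure can occur in either direction. The standard Varadhan upper bound needs upper semicontinuity of $\psi=-2\Phi_{\nu^\star}$; restricting attention to the effective domain of $\RelEnt(\cdot|\bbQ^{\nu^\star})$ (the a.c.\ measures) fixes the lower bound but not the upper: when you cover the exponentially-tight compact set by small balls $B_\delta(\mu)$, the term $\sup_{B_\delta(\mu)}\psi$ does not contract to $\psi(\mu)$ as $\delta\downarrow 0$ if $\mu$ is non-a.c., and such $\mu$ cannot be excluded from the cover. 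Dembo--Zeitouni \cite[Lem.~4.3.6]{DemboZeitouni98} (exponential approximations) is not the relevant tool here; you would need a quasi-continuity argument that you do not supply.

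The paper circumvents exactly this obstruction, and the resulting argument is notably different in structure. It first uses the elementary bound~\eqref{bound:V-VTmu} to show that $\frac1n\log\cZ^{W=0}_n$ is a bounded sequence and therefore has a subsequence converging to some constant $c$. It then proves parts~\ref{l:basic-est:PQ1} and~\ref{l:basic-est:PQ2} with $c$ in place of $C_\gamma$. The key trick is that the resulting ball estimate~\eqref{est:RN-deriv} (with its precisely quantified dependence on $\delta$ and on whether $\nu$ is a.c.) is exactly what Lemma~\ref{l:mean-field-LDP} needs, and that lemma trades the missing topological regularity of $\Phi_{\nu^\star}$ for the explicit $\delta$-uniform control. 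This yields a weak LDP for $(\eta_n)_\#\widetilde\bbP_n$ with rate function $\nu\mapsto\RelEnt(\nu|\bbQ^\nu)+\gamma(\nu)-C_\gamma+c$; since a rate function coming from a sequence of probability measures has zero infimum and $C_\gamma$ is defined precisely so that $\inf\bigl[\RelEnt(\cdot|\bbQ^\cdot)+\gamma(\cdot)\bigr]=0$, one concludes $c=C_\gamma$ and hence $r_n\to 0$. The paper's route thus identifies the constant from the normalisation of a rate function rather than by a direct Varadhan computation of a Laplace integral, and in doing so never needs $\Phi_{\nu^\star}$ to have any semicontinuity. If you want to salvage your approach, notice that your own parts~\ref{l:basic-est:PQ1} and~\ref{l:basic-est:PQ2} already provide the ball-level estimates that replace Varadhan's upper bound; that is precisely the observation the paper exploits.
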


\begin{proof}
We first show that $r_n$ is bounded as $n\to\infty$ for any $\nu$.
Fix $\nu\in \P(\spX)$. By the estimate~\eqref{bound:V-VTmu} there exists $C>0$ such that 
\begin{align*}
\frac1n\log \big(\cZ^{W=0}_n\big) &= \frac1n \log \int_{\R^n}  \exp\biggl[\,-2 \sum_{i=1}^n V(T_{\eta_n(x)}x_i) \biggr]\, dx\\
&\leq \frac1n \log \int_{\R^n}  \exp\biggl[\,-2 \sum_{i=1}^n V(x_i) \biggr]\, dx + 2C
\leq \log \cZ^{\bbQ,\nu} + 4C.
\end{align*}
In combination with the analogous estimate from the other side,
\[
\frac1n\log \big(\cZ^{W=0}_n\big) \geq  \log \cZ^{\bbQ,\nu} - 4C,
\]
this proves that $r_n$ is bounded.  It also follows  that there exists a subsequence $(n_k)_k$ and a constant $c\in \R$ such that
\begin{equation}
\label{conv:rnk-c}
\tilde r_{n_k} := \left|\frac1{n_k} \log \big(\cZ^{W=0}_{n_k}\big)-c\ \right| \;\stackrel{k\to\infty} \longrightarrow \;0.
\end{equation}
At the end of this proof we will show that $c=C_\gamma$, and therefore $r_n=\tilde r_n$,  and that the whole sequence converges. 

Part~\ref{l:basic-est:PQ1} will follow from part~\ref{l:basic-est:PQ2}, since we are able to take the function $R$ to be bounded. 
To show part~\ref{l:basic-est:PQ2}, take $\nu\in \P(\R)$ and estimate for $x\in \R^n$
\begin{align}
\notag
\left|\frac1{n_k} \log \frac{d\widetilde \bbP_{n_k}}{d\bbQ_{n_k}^\nu}(x) + c-\log \cZ^{\bbQ,\nu}\right|
&= \left| -\frac1{n_k} \log \cZ^{W=0}_{n_k} + c
 + \frac2{n_k}\sum_{i=1}^{n_k} \Bigl( V(T_\nu x_i) - V(T_{\eta_{n_k}(x)}x_i)\Bigr)\right|\\
\notag
&\leq \tilde r_{n_k} + \frac2{n_k} \Lip(V) \sum_{i=1}^{n_k} |T_\nu x_i - T_{\eta_{n_k}(x)}x_i|\\
&\leq \tilde r_{n_k} + 2\Lip(V) \,\sup_{\xi\in \R} \big| \nu((-\infty,\xi)) - (\eta_{n_k}(x))((-\infty,\xi))\big|.
\label{ineq:basic-est-RND-gamma}
\end{align}
If $\nu$ is not absolutely continuous, then we simply take $R(\delta,\nu) := 2\Lip(V)$, by which we satisfy the assertion of the Lemma. If $\nu$ is absolutely continuous, then by Lemma~\ref{l:Polya} below we can further estimate the right-hand side above by
\[
\leq \tilde r_{n_k} + 2\Lip(V) \,\omega_\nu(d_{BL}(\nu,\eta_{n_k}(x))).
\]
Setting $R(\delta,\nu) := 2\Lip(V)\, \omega_\nu(\delta)$, we now have proved the slightly modified version of~\eqref{est:RN-deriv},
\begin{equation}
\label{est:PQ-Linfty}
\left\| \frac1{n_k} \log \frac{d\widetilde \bbP_{n_k}}{d\bbQ_{n_k}^\nu} + c-\log \cZ^{\bbQ,\nu}\right\|_{L^\infty(B_{n,\delta}(\nu))} \leq \tilde r_{n_k} + R(\delta,\nu).
\end{equation}
 
\medskip
We now come back to the property $c=C_\gamma$, which we prove using Lemma~\ref{l:mean-field-LDP}. We set $\mathcal X := \P(\spX)$ with the bounded-Lipschitz metric and  $P_k := (\eta_{n_k})_\#\widetilde \bbP_{n_k}$. For $\nu\in \P(\spX)$ we set $Q_k^\nu := (\eta_{n_k})_\#\bbQ_{n_k}^\nu$; by Sanov's theorem $Q_k^\nu$ satisfies a (strong) large-deviation principle with good rate function $\mu\mapsto \RelEnt(\mu|\bbQ^\nu)$.

From~\eqref{est:PQ-Linfty} we deduce that for any $\nu \in \P(\spX)$, writing $B_\delta(\mu)$ for the $d_{BL}$-ball in $\P(\mathcal X)$, 
\begin{align}
\bigg|\frac1{n_k} &\log P_k(B_{\delta}(\nu))
- 
\frac1{n_k} \log Q_k^\nu(B_{\delta}(\nu))+\big[\gamma(\nu)-C_\gamma + c\big]
\bigg| = \notag\\
\label{est:PQ-ball}
&= \bigg|\frac1{n_k} \log \widetilde \bbP_{n_k}(B_{n_k,\delta}(\nu))
- 
\frac1{n_k} \log \bbQ_{n_k}^\nu(B_{n_k,\delta}(\nu))+\big[\gamma(\nu)-C_\gamma + c\big]
\bigg|\leq  \tilde r_{n_k} + R(\delta,\nu).
\end{align}
Note that if $\nu$ is such that $\RelEnt(\nu|\bbQ^\nu)<\infty$, then $\nu$ is Lebesgue absolutely continuous, and the right-hand side of~\eqref{est:PQ-ball} vanishes as $k\to\infty$ and then $\delta\to0$; and if $\RelEnt(\nu|\bbQ^\nu)=\infty$, then the right-hand side of~\eqref{est:PQ-ball} is bounded. Therefore the two conditions of Lemma~\ref{l:mean-field-LDP} are satisfied, and it follows that $P_k$ satisfies a weak large-deviation principle with rate function $\nu \mapsto \RelEnt(\nu|\bbQ^\nu) +\gamma(\nu)-C_\gamma + c$. Since the infimum of this is zero, we find $c=C_\gamma$. 
\end{proof}

The lemma below gives a quantitative version of a well-known result attributed to Poly\=a (see e.g.~\cite[Th.~9.1.4]{AthreyaLahiri06}).

\begin{lemma}[Quantitative Poly\=a Lemma]
\label{l:Polya}
Let $\nu\in \P(\R)$ be Lebesgue-absolutely continuous. There exists a non-decreasing function $\omega_\nu :[0,\infty)\to[0,\infty)$ with $\lim_{s\downarrow 0} \omega_\nu(s)=0$ such that 
\[
\forall \overline \nu\in \P(\R): \qquad
\sup_{x\in \R} |\nu\big((-\infty,x)\big) -\overline\nu\big((-\infty,x)\big)|\leq \omega_\nu\big(d_{BL}(\nu,\overline\nu)\big),
\]
where $d_{BL}$ is again the dual bounded-Lipschitz metric.
\end{lemma}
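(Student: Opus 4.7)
The plan is to approximate the indicator $\mathbf{1}_{(-\infty,x)}$ from above and below by bounded Lipschitz functions, use the duality definition of $d_{BL}$ to bound $F_{\bar\nu}(x)$ in a small window around $F_\nu(x)$, and then control that window by the modulus of continuity of $F_\nu$, which exists globally because $\nu$ is Lebesgue-absolutely-continuous.

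Concretely, for $x\in\R$ and $\e>0$ I will introduce the two piecewise-linear functions $\varphi^{-}_{x,\e},\varphi^{+}_{x,\e}:\R\to[0,1]$ defined by
\[
\varphi^{-}_{x,\e}(t) := \begin{cases} 1 & t\leq x-\e,\\ (x-t)/\e & x-\e\leq t\leq x,\\ 0 & t\geq x,\end{cases}
\qquad
\varphi^{+}_{x,\e}(t) := \varphi^{-}_{x+\e,\e}(t).
\]
They satisfy $\varphi^{-}_{x,\e}\leq \mathbf{1}_{(-\infty,x)}\leq \varphi^{+}_{x,\e}$ and $\|\varphi^{\pm}_{x,\e}\|_{BL}\leq 1+1/\e$. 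Writing $F_\nu(x):=\nu((-\infty,x))$ and $\delta:=d_{BL}(\nu,\bar\nu)$, duality gives
\[
F_{\bar\nu}(x)\leq \int \varphi^{+}_{x,\e}\,d\bar\nu \leq \int\varphi^{+}_{x,\e}\,d\nu + (1+1/\e)\delta \leq F_\nu(x+\e) + (1+1/\e)\delta,
\]
and symmetrically $F_{\bar\nu}(x)\geq F_\nu(x-\e)-(1+1/\e)\delta$. Hence, writing $\omega^{*}_\nu(\e):=\sup_{x\in\R}\bigl(F_\nu(x+\e)-F_\nu(x-\e)\bigr)$,
\[
\sup_{x\in\R}|F_{\bar\nu}(x)-F_\nu(x)| \;\leq\; \omega^{*}_\nu(\e) + (1+1/\e)\,\delta \qquad \text{for every }\e>0.
\]

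Next I observe that the assumption that $\nu$ is Lebesgue-absolutely-continuous implies $F_\nu\in C(\R)$, and together with $\lim_{x\to-\infty}F_\nu(x)=0$, $\lim_{x\to+\infty}F_\nu(x)=1$ this forces $F_\nu$ to be uniformly continuous on all of $\R$ (compactify the real line to $[-\infty,+\infty]$ and apply Heine--Cantor). Consequently $\omega^{*}_\nu$ is a well-defined, non-decreasing function with $\omega^{*}_\nu(\e)\downarrow 0$ as $\e\downarrow 0$.

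I then define
\[
\omega_\nu(\delta) := \inf_{\e>0}\Bigl[\omega^{*}_\nu(\e) + (1+1/\e)\delta\Bigr],
\]
which is evidently non-decreasing in $\delta$ and satisfies the required inequality. To see $\omega_\nu(\delta)\to 0$ as $\delta\downarrow 0$, fix $\eta>0$, pick $\e_\eta$ small enough that $\omega^{*}_\nu(\e_\eta)<\eta/2$, and then $\omega_\nu(\delta)\leq \eta/2+(1+1/\e_\eta)\delta<\eta$ for all sufficiently small $\delta$. This finishes the proof; no step is genuinely hard, the only point that needs care is the global (not merely local) uniform continuity of $F_\nu$, which is where the absolute-continuity hypothesis enters.
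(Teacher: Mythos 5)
Your proof is correct, and at a high level both you and the paper use the same tools: test bounded-Lipschitz duality against piecewise-linear functions, and exploit the global uniform continuity of $F_\nu$ (which you justify carefully via compactification to $[-\infty,+\infty]$; the paper's phrase ``since $F$ is bounded and non-decreasing it is uniformly continuous'' is terse and actually also needs the continuity of $F$ supplied by the absolute-continuity hypothesis, so your remark is a genuine improvement in clarity). The concrete constructions differ, however. The paper fixes a point $x_0$, sets $\e := \overline F(x_0)-F(x_0)$ to be the discrepancy it wants to bound, and then builds a single piecewise-linear test function $\varphi_\e$ whose derivative is an indicator of width $\delta_\e$ tuned so that the integral $\int \varphi_\e'(\overline F - F)$ produces an explicit lower bound $\hat\alpha(\e)$ on $d_{BL}$; inverting $\hat\alpha$ gives the modulus. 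You instead sandwich $\mathbf 1_{(-\infty,x)}$ between two Lipschitz ramps $\varphi^\pm_{x,\e}$ with a free width $\e$, obtain a two-parameter bound $\omega^*_\nu(\e)+(1+1/\e)\delta$, and define $\omega_\nu(\delta)$ by optimizing over $\e$ at the very end. The sandwich-then-optimize route is a bit more standard and avoids the paper's slightly fiddly inversion of $\hat\alpha$; the paper's route produces a somewhat more explicit expression for $\omega_\nu$ in terms of the modulus of continuity of $F_\nu$. Both are complete and correct.
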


\begin{proof}
Write $F(x) := \nu\big((-\infty,x)\big)$ and $\overline F(x) := \overline\nu\big((-\infty,x)\big)$. Since $F$ is bounded and non-decreasing, it is uniformly continuous on $\R$; we write $\alpha$ for the modulus of continuity of~$F$, and we assume without loss of generality that $\alpha$ is non-decreasing. 

Fix $x_0$ and set $\e := \overline F(x_0) - F(x_0)$; for definiteness we assume that $\e>0$.   Since $\overline F$ is non-decreasing and $F$ has modulus of continuity $\alpha$, we estimate for $x\geq x_0$ that  
\[
\overline F(x)-F(x)\geq \overline F(x_0) - (F(x_0)+\alpha(x-x_0)) = \e-\alpha(x-x_0).
\]

Let $\delta_\e := \sup\{0< \delta\leq 1: \alpha(\delta) \leq  \e\}$, and define $\varphi_\e:\R\to\R$ by 
\[
\varphi_\e(x) := \begin{cases}
0 & x\leq x_0\\
x-x_0 & x_0\leq x \leq x_0+\delta_\e\\
\delta_\e & x_0+ \delta_\e \leq x.
\end{cases}
\]
We then calculate
\begin{align*}
-\int_\R \varphi_\e(x)(\overline\nu-\nu)(dx) 
&= \int_\R \varphi_\e'(x) (\overline F(x)-F(x))\, dx
= \int_{x_0}^{x_0+{\delta_\e}} (\overline F(x)-F(x))\, dx\\
&\geq \int_0^{\delta_\e} (\e-\alpha(y))\, dy =: \hat \alpha(\e),
\end{align*}
from which we deduce that 
\[
d_{BL}(\nu,\overline\nu) = \sup_{\varphi\in \BL(\R)} \|\varphi\|_{\BL}^{-1} \int_\R \varphi (\overline\nu-\nu)
\geq \frac{\hat\alpha(\e)}{\delta_\e + 1}
\geq \frac12 \hat\alpha(\e).
\] 
Taking the supremum over $x_0\in\R$ and inverting the relationship above we find
\[
\|\overline F-F\|_{L^\infty(\R)} \leq \omega_\nu(d_{BL}(\nu,\overline\nu)) 
\qquad\text{for}\qquad \omega_\nu(d) := \sup\{\e: \hat \alpha(\e)\leq 2d\}.
\]
Since $\hat \alpha(\e)$ is strictly positive for $\e>0$, $\lim_{d\downarrow 0} \omega_\nu(d) = 0$, implying that $\omega_\nu$ is a \emph{bona fides} modulus of continuity.
\end{proof}

\newpage

\section{Pathwise large deviations for \texorpdfstring{$Y^n$}{Yn} with special initial data}
\label{s:ldp-Y-special}

In this section we prove Theorem~\ref{th:LDP-pathwise-weakversion} below, which is a slightly weaker version of Theorem~\ref{th:LDP-path}. The difference  lies in the initial data $Y^n(0)$, which are not distributed by the measure $\InvMeasf_n$ as in Theorem~\ref{th:LDP-path}, but according to the ``$(W=0)$--version'' of the invariant measure $\InvMeas_n$ that we introduced in the previous section:
\[
\InvMeasWiszero_n := \frac1{\cZ^{W=0}_n} \exp \biggl(\,-2 \sum_{i=1}^n V(y_i) \biggr) \, \Lebesgue^n\Big|_{\Omega_n}.
\]

\begin{theorem}
\label{th:LDP-pathwise-weakversion}
Assume that $V,W$ satisfy Assumption~\ref{ass:VW}. 

For each $n$, let the particle system $t\mapsto Y^n(t)$ be given by Definition~\ref{def:particle-systems}(\ref{i:def:particle-systems:Y}), with initial positions $Y^n(0)$ drawn from the $W=0$ invariant measure $\InvMeasWiszero_n$. 

The random evolutions $t\mapsto \rho_n(t)=\eta_n(Y^n(t))$ then satisfy a large-deviation principle in $C\bigl([0,T];\P(\R)\bigr)$ with good rate function $\hat I$. If in addition $\rho(0)\in\P_2(\R)$ and $\rho$ satisfies $\hFreeEnergy(\rho(0)) + \hat I(\rho) < \infty$, then $\rho\in AC^2([0,T];\P_2(\R))$ and the functional $\hat I(\rho)$ can be characterized as 
\begin{equation}
\label{def:RF-ind-part}
\hat I(\rho) := 2\,\hEnt_V(\rho(0)) +  \hFreeEnergy( \rho(T))-  \hFreeEnergy(\rho(0))
+\frac12\int_0^T |\dot {\rho}|^2(t)\, dt  
+ \frac12\int_0^T |\partial \hFreeEnergy|^2(\rho(t))\, dt.
\end{equation}
\end{theorem}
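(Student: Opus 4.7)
The proof has three natural stages: first, establish a pathwise LDP for the compressed empirical process $\mu_n=\eta_n(X^n)$ starting from $\widetilde\bbP_n$; second, transfer it to $\rho_n=\eta_n(Y^n)$ via the expansion map; third, recast the rate function into the gradient-flow form~\eqref{def:RF-ind-part}. The input ingredients are already assembled: the LDP for i.i.d.\ initial data from Theorem~\ref{th:LDP-dynamic-X}, the Radon--Nikodym control of $\widetilde\bbP_n$ against $\bbQ_n^\nu$ from Section~\ref{s:ldp-estimates}, the compression identity of Lemma~\ref{lemma:mapping-particle-systems}, and the isometry property of~$A$.

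For Stage I, I would fix $\nu\in\P(\spX)$, set $\xi=\bbQ^\nu$, and use Theorem~\ref{th:LDP-dynamic-X} to obtain an LDP for the $X$-process started i.i.d.\ from $\bbQ^\nu$, with rate function $\mathfrak I_{\bbQ^\nu}$. Because changing the initial law from $\bbQ_n^\nu$ to $\widetilde\bbP_n$ leaves the dynamics unchanged, the Radon--Nikodym derivative between the two path laws is determined entirely by the initial state; estimate~\eqref{est:RN-deriv} then gives, uniformly on the event that the initial empirical measure lies in a small $d_{BL}$-ball around $\nu$, that $n^{-1}\log\bigl(d\widetilde\bbP_n/d\bbQ_n^\nu\bigr)$ equals $-\gamma(\nu)$ to leading order. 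I would invoke Lemma~\ref{l:mean-field-LDP} in the path space $\mathcal X=C([0,T];\P(\spX))$ with $f(\mu)=\gamma(\mu_0)$ and $Q_n^\mu$ the law of $\mu_n$ started from $\bbQ_n^{\mu_0}$; this yields a weak LDP for the law of $\mu_n$ under $\widetilde\bbP_n$ with rate function $I_X(\mu):=\mathfrak I_{\bbQ^{\mu_0}}(\mu)+\gamma(\mu_0)$. A full LDP then follows from exponential tightness, which the coercivity bounds~\eqref{cond:coercivity} and~\eqref{bound:VTmu-from-below} supply through uniform exponential moments of the single-particle laws $\bbQ^\nu$.

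Stage II uses Lemma~\ref{lemma:mapping-particle-systems}, giving $\rho_n\stackrel{\mathrm d}{=} A_n\mu_n$, together with the uniform bound $|W_2(A_n\mu_n,A\mu)-W_2(\mu_n,\mu)|\le\alpha/n$ from~\eqref{est:W2-A-An}: this lets an $n$-dependent contraction principle (cf.~\cite[Sec.~4.2.2]{DemboZeitouni98}) transfer the LDP to $\rho_n$ with rate $\hat I(\rho)=I_X(A^{-1}\rho)$. For Stage III, I would rewrite $\mathfrak I_{\bbQ^{\mu_0}}$ using Girsanov: for any $P$ whose marginals $\mu_t$ are absolutely continuous and satisfy a continuity equation $\partial_t\mu+\partial_x(\mu v)=0$, the $P$-canonical SDE has drift $b(\cdot,\mu)+h$ with $h = v + \partial^\circ\FreeEnergy(\mu)$ by~\eqref{char:b-minimal-element-subdiff}, giving
\[
\RelEnt(P|\mathbb W_{\bbQ^{\mu_0}}^P)=\RelEnt(\mu_0|\bbQ^{\mu_0})+\tfrac12\int_0^T\!\!\int_\spX\bigl|v_t+\partial^\circ\FreeEnergy(\mu_t)\bigr|^2\mu_t(dx)\,dt.
\]
Minimising over $P$'s compatible with $(\mu_t)$ selects the minimal-norm velocity of Lemma~\ref{lemma:char-AC2}, expanding the square and applying the chain rule (part~(\ref{lem:i:chain-rule}) of the lemma following Definition~\ref{def:subdiff}) converts the cross-term into $\FreeEnergy(\mu_T)-\FreeEnergy(\mu_0)$, and the identity $\RelEnt(\mu_0|\bbQ^{\mu_0})+\gamma(\mu_0)=2\Ent_V(\mu_0)$ from~\eqref{char:FE-RelEnt} combines the initial contributions. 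Transporting the result through $A$, which is a Wasserstein isometry and so preserves both metric derivatives and metric slopes, delivers exactly~\eqref{def:RF-ind-part}.

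The hard part will be Stage III: the Girsanov rewriting and the chain rule require regularity of $\mu$ --- membership in $AC^2([0,T];\P_2(\spX))$, absolute continuity of $\mu_t$ for a.e.~$t$, non-emptiness of $\partial\FreeEnergy(\mu_t)$, and $L^1$-integrability of $|\dot\mu|\,|\partial\FreeEnergy|(\mu_\cdot)$ --- none of which is immediate from $\hat I(\rho)<\infty$. I would extract these by combining the $\lambda$-convexity and lower semicontinuity of $\FreeEnergy$ (Lemma~\ref{lemma:properties-of-the-functionals}, part~(\ref{i:l:basic-props-lambda-convexity})) with the $AC^2$ characterisation of Lemma~\ref{lemma:char-AC2} and the abstract chain-rule apparatus of~\cite[Ch.~10]{AmbrosioGigliSavare08}; this also explains the hypothesis $\hFreeEnergy(\rho(0))<\infty$ in the theorem statement. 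A secondary technicality is the $n$-dependence of the compression map in Stage II, but the uniform bound from~\eqref{est:W2-A-An} controls it.
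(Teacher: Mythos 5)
Your proposal is correct and follows essentially the same route as the paper: Stage~I is exactly the paper's Lemma~\ref{l:LDP-W=0-initial-data} (mean-field localization via Lemma~\ref{l:mean-field-LDP}, the RN estimate~\eqref{est:RN-deriv}, and Theorem~\ref{th:LDP-dynamic-X}); Stage~II is the paper's use of Lemma~\ref{lemma:mapping-particle-systems}, the bound~\eqref{est:W2-A-An}, and~\cite[Cor.~4.2.21]{DemboZeitouni98}; and Stage~III matches the chain Lemma~\ref{l:char-RF-path}\,$\to$\,Lemma~\ref{lemma:I-to-v-formulation}\,$\to$\,Lemma~\ref{lemma:char-I-rho-hrho}. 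The only slight imprecision is in the exponential-tightness justification: the paper deduces it from the goodness of the rate function in Theorem~\ref{th:LDP-dynamic-X} combined with the uniform RN bound~\eqref{est:PQ1}, whereas appealing only to exponential moments of $\bbQ^\nu$ controls the time-slices but not, by itself, the pathwise modulus of continuity on $C([0,T];\P(\spX))$.
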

Note that although the initial data $Y^n(0)$ are drawn from $\InvMeasWiszero_n$, the processes $Y^n$ evolve under the dynamics that includes $W$. 

\subsection{First part of the proof of Theorem~\ref{th:LDP-pathwise-weakversion}.}
As in the statement of the theorem, consider for each $n$ initial data $Y^n(0)$ drawn from $\InvMeasWiszero_n$. We transform these positions to initial data for the $X^n$-process, through 
\begin{equation}
\label{eq:th-ldp-path-initial-data-weak}
\eta_n(X^n(0)) = A_n^{-1} \eta_n(Y^n(0)).
\end{equation}
This identity only fixes the positions $X^n_i(0)$ up to permutation of $i$; this arbitrariness has no impact, however,  since all our results only depend on $\eta_n(X^n)$, which is invariant under such permutations.

By Lemma~\ref{l:transformed-inv-meas} the transformed initial data $X^n(0)$ have law
\[
\bbP_n(dx) := \frac1{\cZ^{W=0}_n} 
\exp\biggl[\,-2 \sum_{i=1}^n V(T_{\eta_n(x)}x_i)\biggr]\, dx.
\]
Let $t\mapsto X^n(t)$ solve the system of Definition~\ref{def:particle-systems}(\ref{i:def:particle-systems:X}) with initial data $X^n(0)$. 
\medskip

The following lemma uses the result of Section~\ref{s:ldp-iid} to give a large-deviation principle for this  particle system $X^n$.

\begin{lemma}
\label{l:LDP-W=0-initial-data}
Define the random time-dependent measures $\mu_n\in C([0,T];\P(\spX))$ by $\mu_n(t) = \eta_n(X^n(t))$. The sequence $\mu_n$  satisfies a large-deviation principle in the space $C([0,T];\P(\spX))$ with good rate function 
\begin{equation}
\label{def:LDP-RF-weak-X}
I(\mu) := \mathfrak I_{\bbQ^{\mu_0}}(\mu) + \gamma(\mu_0),
\end{equation}
where $\mathfrak I_\nu(\cdot)$ is defined in~\eqref{def:I-time-dependent}, $\bbQ^\nu$ in~\eqref{def:Qnu},  and $\gamma$  in~\eqref{def:gamma}. 
\end{lemma}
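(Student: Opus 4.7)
The plan is to deduce the LDP for $\mu_n$ from Theorem~\ref{th:LDP-dynamic-X} (the i.i.d.\ initial-data case) by applying the mean-field localization Lemma~\ref{l:mean-field-LDP}, with the bridge provided by the Radon--Nikodym estimates of Section~\ref{s:ldp-estimates}. For each $\nu\in\P(\spX)$, let $Z^{n,\nu}$ denote the auxiliary particle system of Theorem~\ref{th:LDP-dynamic-X} with i.i.d.\ $\bbQ^\nu$-distributed initial data, and let $Q_n^\nu$ be the law on $\mathcal X := C([0,T];\P(\spX))$ of its empirical-measure process. By Theorem~\ref{th:LDP-dynamic-X}, $Q_n^\nu$ satisfies an LDP with good rate function $\mathfrak I_{\bbQ^\nu}$. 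I will apply Lemma~\ref{l:mean-field-LDP} with $P_n := \mathrm{Law}(\mu_n)$, $Q_n^\mu := Q_n^{\mu_0}$, $I^\mu := \mathfrak I_{\bbQ^{\mu_0}}$, and $f(\mu) := \gamma(\mu_0)$.

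The central step is a ball comparison. Fix $\mu\in\mathcal X$ with $\nu := \mu_0$ and $\delta>0$, and let $B_\delta(\mu)\subset\mathcal X$ denote the open ball in the uniform-in-$t$ $d_{BL}$ metric. Since any path in $B_\delta(\mu)$ has its time-zero marginal in $B_\delta(\nu)\subset\P(\spX)$, conditioning on initial data yields
\begin{align*}
P_n(B_\delta(\mu)) &= \int_{B_{n,\delta}(\nu)} \mathbb W^n_x(B_\delta(\mu))\,\bbP_n(dx),\\
Q_n^\nu(B_\delta(\mu)) &= \int_{B_{n,\delta}(\nu)} \mathbb W^n_x(B_\delta(\mu))\,\bbQ_n^\nu(dx),
\end{align*}
where $\mathbb W^n_x$ denotes the conditional law of the empirical path given $X^n(0)=x$, which is common to both systems since they share the generator $\mathcal L_X$. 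Combining with the Radon--Nikodym bound on $B_{n,\delta}(\nu)$ of the basic estimates produces
\[
\left|\tfrac1n\log P_n(B_\delta(\mu)) - \tfrac1n\log Q_n^\nu(B_\delta(\mu)) + \gamma(\nu)\right|\le r_n + R(\delta,\nu).
\]

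It then remains to verify the two hypotheses of Lemma~\ref{l:mean-field-LDP}. The uniform-boundedness hypothesis follows at once from the uniform $L^\infty$ bound on the log density proved in Section~\ref{s:ldp-estimates}. For the vanishing hypothesis, suppose $\mathfrak I_{\bbQ^\nu}(\mu)<\infty$; then there exists $P\in\P(C([0,T];\R))$ with $P_t=\mu_t$ for all $t$ and $\RelEnt(P|\mathbb W_{\bbQ^\nu})<\infty$, so $P_0\ll\bbQ^\nu$, forcing $\nu=\mu_0=P_0$ to be Lebesgue-absolutely-continuous, whence $R(\delta,\nu)\downarrow 0$ as $\delta\downarrow 0$ while $r_n\to 0$. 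Lemma~\ref{l:mean-field-LDP} now delivers a weak LDP for $P_n$ with the desired rate function $I$. To promote this to a full LDP I will establish exponential tightness of $\{P_n\}$: fixing any reference $\nu_*$, the uniform bound $d\bbP_n/d\bbQ_n^{\nu_*}\le e^{nC}$ from Section~\ref{s:ldp-estimates} gives $P_n(K^c)\le e^{nC}Q_n^{\nu_*}(K^c)$ for every measurable $K$, and exponential tightness descends from the goodness of $\mathfrak I_{\bbQ^{\nu_*}}$. The main obstacle is the link between finiteness of $\mathfrak I_{\bbQ^\nu}(\mu)$ and absolute continuity of $\nu$, which is precisely what makes the error $R(\delta,\nu)$ collapse in the correct regime; the more singular part of the argument (singular interactions, lack of weak continuity of $b$) is already absorbed in Theorem~\ref{th:LDP-dynamic-X}.
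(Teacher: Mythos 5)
Your proof is correct and follows essentially the same route as the paper's: you apply Lemma~\ref{l:mean-field-LDP} with the same identifications, obtain the ball comparison from the conditional-law identity and the Radon--Nikodym estimates of Section~\ref{s:ldp-estimates}, and conclude full-LDP from exponential tightness derived via the uniform bound~\eqref{est:PQ1}. The one place where you go slightly beyond the paper's wording is in spelling out why $\mathfrak I_{\bbQ^{\mu_0}}(\mu)<\infty$ forces $\mu_0$ to be Lebesgue-absolutely-continuous (so that $R(\delta,\mu_0)\downarrow 0$), which the paper leaves implicit; that addition is correct and worthwhile.
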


\begin{proof}[Proof of Lemma~\ref{l:LDP-W=0-initial-data}]
We use Lemma~\ref{l:mean-field-LDP} to prove that the random measures $\mu_n$ satisfy a weak large-deviation principle; subsequently we upgrade this into a strong large-deviation principle by showing exponential tightness. 

\medskip

We write $\underline \bbP_n$ for the law of $X^n$ on $C([0,T];\spX^n)$. Set $\mathcal X = C([0,T];\P(\spX))$ and define the modified push-forward 
\[
P_n := (\eta_n)_\% \underline \bbP_n\in \P(\mathcal X)
\qquad
\text{by}\quad
P_n(A) := \underline \bbP_n \Big\{ \big(\eta_n(X^n(t))\big)_{t\in[0,T]}  \in A\Big\}.
\]

We construct the measure $Q_n$ as follows. Fix some $\nu\in \mathcal X$. Let the curves $Z^n$ solve the same equation in Definition~\ref{def:particle-systems}(\ref{i:def:particle-systems:X}) as $X^n$, but with initial data $Z^n(0)$ drawn from the independent measure $\bbQ^{\nu_0}_n = (\bbQ^{\nu_0})^{\otimes n}$ (see~\eqref{def:Qnu_n}). We write $\underline \bbQ^{\nu_0}_n$ for the law of $Z^n$ on $C([0,T];\spX^n)$, and we set $Q_k^\nu := (\eta_n)_\%\underline \bbQ^{\nu_0}_n\in\P(\mathcal X)$. 
By Theorem~\ref{th:LDP-dynamic-X}, the sequence of measures $Q_n^\nu$   satisfies a strong large-deviation principle with the good rate function $\mathfrak I_{\bbQ^{\nu_0}}$.


Since the laws of $X^n$ and $Z^n$ are the same after  conditioning on the initial positions, we have for each $x\in C([0,T];\spX^n)$ that 
\begin{equation}
\label{id:RN-deriv-pathwise-initial}
\frac{d\underline \bbP_n}{d\underline \bbQ_n^{\nu_0}}(x) = 
\frac{d\bbP_n}{d\bbQ^{\nu_0}}(x(0))
.
\end{equation}
We now estimate 
\begin{align}
\frac1n  \log &P_n(B_\delta(\mu))
= \frac1n \log \Expectation_{\underline \bbP_n} 
  \Bigl[\Indicator \Bigl\{\bigl(\eta_n(x(t))\bigr)_{t\in[0,T]} \in B_\delta(\mu) \Bigr\}\Bigr]\notag\\
&= \frac1n \log \Expectation_{\underline \bbQ_n^{\mu_0}}
  \Bigl[\frac{d\underline \bbP_n}{d\underline \bbQ_n^{\mu_0}} \Indicator \Bigl\{\bigl(\eta_n(x(t))\bigr)_{t\in[0,T]} \in B_\delta(\mu) \Bigr\}\Bigr]\notag\\
&\leftstackrel{\eqref{id:RN-deriv-pathwise-initial}}\leq \frac1n \log \Expectation_{\underline \bbQ_n^{\mu_0}} 
  \Bigl[ \Indicator \Bigl\{\bigl(\eta_n(x(t))\bigr)_{t\in[0,T]} \in B_\delta(\mu) \Bigr\}\Bigr] - \gamma(\mu_0)
  +\sup_{\eta_n^{-1}(B_\delta(\mu_0))}
    \biggl[ \frac1n \log \frac{d\bbP_n}{d\bbQ^{\mu_0}_n} + \gamma(\mu_0)\biggr]
    \label{ineq:LDP-pathwise-weak-DZ}\\
&\leftstackrel{\eqref{est:RN-deriv}}\leq \frac1n \log Q_n^{\mu} (B_\delta(\mu)) - \gamma(\mu_0)
  + r_n + R(\delta,\mu_0).
  \notag
\end{align}
By combining with the opposite inequality we find
\[
\biggl|\frac1n  \log P_n(B_\delta(\mu)) - \frac1n \log Q_n^{\mu} (B_\delta(\mu)) + \gamma(\mu_0)\biggr| 
\leq r_n + R(\delta,\mu_0).
\]
The properties of $r_n$ and $R$ now imply that the conditions of Lemma~\ref{l:mean-field-LDP} are satisfied, and it follows that $P_n$ satisfies a weak large-deviation principle with good rate function $\mu\mapsto \mathfrak I_{\bbQ^{\mu_0}}(\mu) + \gamma(\mu_0)$.

\medskip
Finally, to show that the weak large-deviation principle is in fact a strong principle, take an arbitrary $\nu\in C([0,T];\P(\spX))$ and construct the particle system $Z^n$ as above. By Theorem~\ref{th:LDP-dynamic-X} the random variables $t\mapsto \eta_n(Z^n(t))$ are exponentially tight in $C([0,T];\P(\spX))$; by~\eqref{id:RN-deriv-pathwise-initial} and the bound~\eqref{est:PQ1} the same follows for $t\mapsto \eta_n(X^n(t))$. 
\end{proof}

\subsection{Characterization of the rate function}

\begin{lemma}
\label{l:char-RF-path}
Let $\nu\in \P(\spX)$ and let $\mu\in C([0,T];\P(\spX))$ satisfy $\mathfrak I_{\nu}(\mu)<\infty$. 
Then there exists a measurable function $u:[0,T]\times \spX\to\R$, such that 
\begin{equation}
\label{eq:I-in-terms-of-u}
\mathfrak I_\nu(\mu) = \RelEnt(\mu_0|\nu) + \frac12 \int_0^T \int_{\spX} u^2(t,x)\, \mu_t(dx) dt,
\end{equation}
and $\mu$ is a distributional solution of 
\begin{equation}
\label{eq:FP-u}
\partial_t\mu_t = \frac12 \partial_{xx}\mu_t - \partial_x \bigl[ (b(\cdot,\mu_t) + u(t,\cdot))\mu_t\bigr].
\end{equation} 
The function $u$ is unique in $L^2(0,T;L^2_{\mu_t}(\spX))$. 
%
\end{lemma}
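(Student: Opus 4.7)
The plan is to start from the variational definition of $\mathfrak I_\nu$, use Girsanov's theorem to rewrite competitors $P$ as laws of SDEs, and then project the (random) drift onto a deterministic function of $(t,x)$ via conditional expectation.

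\emph{Step 1: Representation of competitors via Girsanov.} Take any $P \in \P(C([0,T];\R))$ with $P_t = \mu_t$ and $\RelEnt(P|\mathbb W_\nu) < \infty$. Since $\mathbb W_\nu^P$ is equivalent to $\mathbb W_\nu$ on each time slice (the drift $b(\cdot,\mu_t)$ is bounded by Assumption~\ref{ass:VW}), one also has $\RelEnt(P|\mathbb W_\nu^P) < \infty$, and by a standard Girsanov/Donsker--Varadhan argument (e.g.\ \cite[Section~10.2.2]{DawsonGartner87}-style computation, or \cite[Thm.~2.1]{Leonard95}), there exists a progressively measurable process $\beta$ on the path space, with $\Expectation^P \int_0^T \beta_t^2\,dt < \infty$, such that the canonical process $X_t$ under $P$ solves
\begin{equation*}
    dX_t = \bigl[b(X_t,\mu_t) + \beta_t\bigr]\,dt + dB_t, \qquad X_0 \sim \mu_0,
\end{equation*}
and
\begin{equation*}
    \RelEnt(P|\mathbb W_\nu^P) = \RelEnt(\mu_0|\nu) + \frac12 \Expectation^P \int_0^T \beta_t^2\,dt.
\end{equation*}

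\emph{Step 2: Projection onto deterministic drifts.} Define $u(t,x) := \Expectation^P[\beta_t \mid X_t = x]$, a version of the conditional expectation (the $L^2_{\mu_t}$-projection of $\beta_t$ onto $\sigma(X_t)$). Conditional Jensen gives
\begin{equation*}
    \int_0^T \!\!\int_{\spX} u(t,x)^2\, \mu_t(dx)\, dt \;\leq\; \Expectation^P \int_0^T \beta_t^2\,dt,
\end{equation*}
hence $u \in L^2(0,T;L^2_{\mu_t})$. Applying Itô's formula to $\varphi(t,X_t)$ for $\varphi \in C_c^\infty((0,T)\times\R)$ and taking expectation, all $\beta_t$-terms reduce to $u(t,X_t)$ by the tower property, so $\mu$ solves~\eqref{eq:FP-u} in the sense of distributions. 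This yields the inequality $\mathfrak I_\nu(\mu) \geq \RelEnt(\mu_0|\nu) + \frac12 \int_0^T \!\int u^2\,d\mu_t dt$ when taking the infimum over competitors.

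\emph{Step 3: Matching construction for the opposite inequality.} Given a solution $u$ of~\eqref{eq:FP-u} in $L^2(0,T;L^2_{\mu_t})$, construct $\tilde P$ as the law on path space of the SDE $dX_t = [b(X_t,\mu_t) + u(t,X_t)]dt + dB_t$ with $X_0 \sim \mu_0$ (existence of a weak solution follows, since the drift $b(\cdot,\mu_t) + u(t,\cdot)$ lies in $L^2_{\mu_t}$ with bounded $b$, via a standard Girsanov construction on the reference measure $\mathbb W_\nu^P$). By uniqueness of the Fokker--Planck equation with this drift (standard under $L^2$ integrability), the marginals of $\tilde P$ equal $\mu_t$, and a direct Girsanov computation gives $\RelEnt(\tilde P|\mathbb W_\nu^P) = \RelEnt(\mu_0|\nu) + \frac12\int_0^T \!\int u^2\,d\mu_t dt$. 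So $\tilde P$ is an admissible competitor achieving the bound of Step 2.

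\emph{Step 4: Uniqueness of $u$.} If $u_1,u_2 \in L^2(0,T;L^2_{\mu_t})$ both satisfy~\eqref{eq:FP-u}, subtract to get $\partial_x[(u_1-u_2)\mu_t] = 0$ in $\mathcal D'((0,T)\times\R)$. Thus $(u_1-u_2)\mu_t$ is independent of $x$ for a.e.\ $t$; since $(u_1-u_2)\mu_t \in L^1(\R)$ for a.e.\ $t$ (by Cauchy--Schwarz with $\mu_t \in \P(\R)$) it must vanish, so $u_1 = u_2$ in $L^2_{\mu_t}$ for a.e.\ $t$.

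The main obstacle is the Girsanov representation in Step 1: one must justify that a finite-entropy measure $P$ with the stated properties admits a semimartingale drift decomposition with the claimed entropy identity despite the drift $b(\cdot,\mu_t)$ depending (discontinuously) on the frozen marginals $\mu_t$. The boundedness of $b$ (inherited from the bounds on $V',W'$ in Assumption~\ref{ass:VW}) reduces this to a classical statement for SDEs with bounded drift, which is what makes the argument go through.
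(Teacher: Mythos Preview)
Your proof is correct and reaches the same conclusion, but it takes a genuinely different technical route from the paper's.

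The paper first establishes well-posedness of the reference SDE $dX_t = b(X_t,\mu_t)\,dt + dB_t$ by checking that $b$ is Lipschitz in total variation and has linear growth, then invokes the Markov-reduction result of Cattiaux--L\'eonard (their Theorem~3.1) to restrict the infimum in $\mathfrak I_\nu$ to Markovian competitors $P$ from the outset. For such $P$, results of L\'eonard and Cattiaux--L\'eonard directly produce an adapted drift $\beta_t$ that is already a function $u(t,X_t)$ of the current state, together with the entropy identity. So the paper outsources the ``Markovianization'' entirely to the literature and never needs a two-sided variational argument.

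Your approach is more hands-on: you allow an arbitrary finite-entropy competitor $P$, use the F\"ollmer/Girsanov drift representation to get a possibly path-dependent $\beta$, and then \emph{project} onto $\sigma(X_t)$ via conditional expectation. Jensen gives the lower bound, and your Step~3 builds an explicit Markovian competitor $\tilde P$ to match it. This is more self-contained and makes the variational mechanism transparent, at the cost of having to justify Fokker--Planck uniqueness for drifts in $L^2(\mu_t\,dt)$ (which you flag) and the admissibility condition $\RelEnt(\tilde P|\mathbb W_\nu)<\infty$ for $\tilde P$ (this follows, as you note, from boundedness of $b$). Your uniqueness argument in Step~4 spells out what the paper summarizes in one line (``strict convexity plus linear constraint''), and in fact coincides with the computation in the paper's Lemma~\ref{lemma:char-AC2}.
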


\begin{proof}
We begin by showing existence and uniqueness in law of weak solutions to the SDE 
\begin{align}
\label{eq:SDE-single-particle}
	dX_t = b(X_t, \mu _t) dt + dB_t,
\end{align}
with initial positions distributed according to $\nu$. That is, we establish existence and uniqueness of the measure $\mathbb W_\nu ^{\mu}$ such that under $\mathbb W_\nu ^{\mu}$, and with respect to some filtration $\{ \mathcal F _t \} _{t \in [0,T]}$, $\{ B_T \} _{t\in [0,T]}$ is a Brownian motion and $\mathbb W_\nu ^{\mu}$ is the probability law of the solution $\{ X_t \} _{t \in [0,T]}$ of the SDE. To show this we first prove that $b$ satisfies, for any $x \in \R$ and $\rho, \eta \in \mathcal P (\R)$, 
\begin{align*}
	|b(x, \rho) - b(x, \eta)| \leq C _d{TV}(\rho, \eta),
\end{align*}
where $d_{TV}(\cdot, \cdot)$ is the total variation distance, and 
\begin{align*}
	|b(x, \rho)| \leq \tilde C (1 + |x|).
\end{align*}
With these estimates the assumptions of \cite{DjehicheHamadene18} are fulfilled and existence and uniqueness of weak solutions to the SDE hold.

From the inequality 
\[
|T_\rho (x) - T_\eta(x)| = \bigl|x + \alpha\rho((-\infty, x)) -x - \alpha\eta ((-\infty, x))\bigr|
\leq \alpha d_{TV}(\rho,\eta) \qquad\text{for all }x,
\]
we obtain with   the  assumption that $V'$ is Lipschitz the estimate
\begin{align*}
|V' (T_{\rho} (x)) - V'(T_{\eta} (x))| &\leq \alpha C_{V'} d_{TV} (\rho, \eta).
\end{align*}
For $W$ we split the difference as follows:
	\begin{align*}
	& \int _{\R _x} W' \left( T_{\rho} (x) - T_{\rho} (\tilde x) \right) \rho(d\tilde x)  - \int _{\R _x} W' \left( T_{\eta} (x) - T_{\eta} (\tilde x) \right) \eta (d\tilde x) \\
	& \quad = \int _{\R _x} \left[ W' \left( T_{\rho} (x) - T_{\rho} (\tilde x) \right) - W' \left( T_{\eta} (x) - T_{\eta} (\tilde x) \right) \right]\rho (d\tilde x) \\
	& \qquad + \int _{\R _x}  W' \left( T_{\eta} (x) - T_{\eta} (\tilde x) \right) \left( \rho (d \tilde x) - \eta (d \tilde x) \right).
\end{align*}
Because $W'$ is Lipschitz, the first term on the right-hand side can similarly be bounded:
\begin{align*}
	& \int _{\R _x} | W' \left( T_{\rho} (x) - T_{\rho} (\tilde x) \right) - W' \left( T_{\eta} (x) - T_{\eta} (\tilde x) \right) | \rho (d\tilde x) \\
	&  \quad \leq \int _{\R _x} C_{W'} | T_{\rho} (x) - T_{\rho} (\tilde x)  - T_{\eta} (x) + T_{\eta} (\tilde x)| \rho (d\tilde x) \\
	& \quad \leq \int _{\R _x} C_{W'} \left (| T_{\rho} (x) - T_{\eta} ( x) | + |T_{\rho} (\tilde x) - T_{\eta} (\tilde x)| \right) \rho(d\tilde x) \\
	& \quad \leq \alpha C_{W'} 2 d _{TV} (\rho, \eta).
\end{align*}
The second term, involving the integral with respect to the difference $\rho - \eta$, can be bounded from above using the characterization of the total variation distance in terms of Borel measurable functions $f\in \mathcal B(\spX)$:
	\begin{align*}
		& \Big| \int _{\R _x}  W' \left( T_{\rho} (x) - T_{\eta} (\tilde x) \right) \left( \rho (d \tilde x) - \eta (d \tilde x) \right) \Big| \\
		& \quad =  \lVert W' \rVert _{\infty} \Big| \int _{\R _x}  \frac{1}{\lVert W' \rVert _{\infty}} W' \left( T_{\eta} (x) - T_{\eta} (\tilde x) \right) \left( \rho (d \tilde x) - \eta (d \tilde x) \right) \Big|\\
		& \quad \leq \lVert W' \rVert _{\infty} \sup _{f \in \mathcal B (\spX), \lVert f \rVert _{\infty} \leq 1} \Big| \int _{\R} f(\tilde x) \left( \rho (d\tilde x) - \eta(d\tilde x) \right) \Big| \\
		& \quad = \lVert W' \rVert _{\infty} \,d_{TV} (\rho, \eta).
	\end{align*}
	Together the two upper bounds yield
	\begin{align*}
		& \Big|\int _{\R _x} W' \left( T_{\rho} (x) - T_{\rho} (\tilde x) \right) \rho (d\tilde x) - \int _{\R _x} W' \left( T_{\eta} (x) - T_{\eta} (\tilde x) \right) \eta (d\tilde x) \Big| \leq \tilde C _{W'} d_{TV} (\rho, \eta).
	\end{align*} 
	Combining this with the upper bound for the difference of $V'$-terms, we have
	\begin{align*}
		| b(x, \rho) - b(x, \eta) | &= \biggl|  V' (T_{\rho} (x)) + \int _{\R _x} W' \left( T_{\rho} (x) - T_{\rho} (\tilde x) \right) \rho (d\tilde x) \\
		&\qquad - V'(T_{\eta} (x)) - \int _{\R _x} W' \left( T_{\eta} (x) - T_{\eta} (\tilde x) \right) \eta (d\tilde x)\biggr| \\
		& \leq C d_{TV} (\rho, \eta),
	\end{align*}
	for some constant $C$.
	
	The linear growth condition follows from the assumption that $V'$ and $W'$ are Lipschitz and bounded, respectively. 
	
	With these estimates, the conditions of \cite{DjehicheHamadene18} are satisfied, implying that weak existence and uniqueness of solutions holds for the SDE~\eqref{eq:SDE-single-particle}; therefore the measure $\mathbb W_\nu^\mu$ is well-defined.

	\medskip
	Define the set
	\begin{align*}
		\mathcal A_\nu^\mu = \{  P \in \mathcal P (\R): \RelEnt (P | \mathbb W_\nu ^{\mu}) < \infty, \ P_t = \mu _t \ \forall t\},
	\end{align*}
	so that 
	\[
	\mathfrak I_\nu (\mu) = \inf_{A_\nu^\mu} \RelEnt(P|\mathbb W_\nu^\mu).
	\]
	Since by assumption $\mathfrak I_\nu(\mu)<\infty$, the set $\mathcal A_\nu^{\mu}$ is non-empty. By Theorem 3.1 of \cite{CattiauxLeonard94}, in the definition of $\mathfrak J _{\nu}(\mu)$ it is sufficient to minimize over (strongly) Markovian $P$ such that $P \in \mathcal A _\nu^{\mu}$. 

Uniqueness in law corresponds to the uniqueness condition `U' in \cite{Leonard12} and by Theorems~1 and~2 therein, for each Markovian $P \in \mathcal A _\nu^{\mu}$ there exists a process $\{ \beta _t \} _{t \in [0,T]}$ adapted to the (augmented version of the) filtration $\{ \mathcal F_t \} _{t \in [0,T]}$ of the weak solution such that $\int _0 ^T \beta_t dt$ and $\int _0 ^T \beta _t ^2 dt$ are both finite $P$-a.s.\ 
	 and under $P$ there is a $P$-Brownian motion $B^P$ such that the process $\{ X_t \} _{t\in [0,T]}$ solves, under $P$,
	  \begin{align*}
	 	dX_t = (b(X_t, \mu _t) + \beta _t)dt + dB ^P _t.
	 \end{align*} 
	
	 By \cite[Thm 3.60]{CattiauxLeonard94} there is a $u: [0,T] \times \R _x \to \R$ such that $\int _0 ^T \int _{\R _x} u^2 (t,x) \mu _ t (dx) dt < \infty$ and the process $\beta$ can be expressed as $\beta _t = u(t, X_t)$; the function $u$ can be obtained via the Riesz representation theorem, see \cite{CattiauxLeonard94} for details. 
	 The Radon-Nikodym derivative between $P$ and $\mathbb W_\nu ^{\mu}$ satisfies
	 \begin{align*}
	 	\frac{dP}{d\mathbb W_\nu ^{\mu}} = \frac{d\mu_0}{d\nu} \exp \left\{ \int _0 ^T u(t, X_t) dB^P _t + \frac{1}{2} \int _0 ^T | u(t, X_t) | ^2 dt \right\},
	 \end{align*}
	 and it follows that 
	 \begin{align*}
	 	\RelEnt (P | \mathbb W_\nu ^{\mu}) = \RelEnt (\mu_0 | \nu) + \frac{1}{2} \Expectation _P \int _0 ^T |u(t, X_t) | ^2 dt. 
	 \end{align*}
	 This is precisely~\eqref{eq:I-in-terms-of-u}. 
	 
	 Replacing $\beta$ with $u$ in the SDE, we find that under $P$ the process $X$ solves
	 \begin{align*}
	 	dX_t = (b(X_t, \mu _t) + u(t, X_t))dt + dB^P _t.
	 \end{align*}
	 The Forward-Kolmogorov equation of this SDE for the single-time marginals $P_t=\mu_t$ is equal to~\eqref{eq:FP-u}. The uniqueness of $u$ is a direct consequence of the strict convexity of $\iint u^2\mu$ and the linear constraint~\eqref{eq:FP-u}.
\end{proof}

\begin{lemma}
\label{lemma:I-to-v-formulation}
Let $\nu\in \P(\spX)$  and let $\mu\in C([0,T];\P(\spX))$ satisfy $\mu_0\in \P_2(\spX)$ and {$\mathcal F(\mu_0) + \mathfrak I_{\nu}(\mu)<\infty$}. Then
\begin{enumerate}
\item\label{i:l:i-to-v:AC2}$\mu\in AC^2([0,T];\P_2(\spX))$. 
\item\label{i:l:i-to-v:RF} For almost all $t\in [0,T]$, $\mu_t$ is Lebesgue-absolutely-continuous, $\partial_x \mu_t \in L^1(\spX)$, and 
\[
\int_0^T \int_{\spX} \frac{|\partial_x \mu_t(x)|^2}{\mu_t(x)}\, dx dt < 
\infty
.
\]
\item \label{i:l:i-to-v:char-I} The  functional $\mathfrak I_{\nu}$ can be written as 
\begin{equation}
\label{eq:I-quadratic-form}
\mathfrak I_{\nu}(\mu) = \RelEnt(\mu_0|\nu) + \frac12 \int_0^T \int_{\spX} 
  \mu_t(x) \Bigl(v_t(x) + \frac{\partial_x \mu_t}{2\mu_t}(x) - b(x,\mu_t)\Bigr)^2\, dxdt,
\end{equation}
where $v_t$ is the velocity field associated with $\partial_t \mu_t$ (see Lemma~\ref{lemma:char-AC2}).
\end{enumerate}
\end{lemma}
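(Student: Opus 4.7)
By Lemma~\ref{l:char-RF-path}, there exists $u\in L^2(0,T;L^2_{\mu_t}(\spX))$ such that~\eqref{eq:I-in-terms-of-u} holds and $\mu$ is a distributional solution of~\eqref{eq:FP-u}. Assumption~\ref{ass:VW} (in particular $V',W'\in C^1_b$) ensures that $b(\cdot,\mu_t)$ is uniformly bounded in $(t,x,\mu_t)$. A second-moment estimate applied to~\eqref{eq:FP-u} (testing formally against $x^2$ and using Grönwall with $\|b\|_\infty<\infty$ and $u\in L^2(dt\otimes\mu_t)$) propagates $\P_2$-integrability from $\mu_0$ to all $\mu_t$, and shows that $t\mapsto\int x^2\mu_t$ is continuous, which together with narrow continuity upgrades $\mu$ to an element of $C([0,T];\P_2(\spX))$.

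The crux is the Fisher-information bound of part (2):
\[
\int_0^T\!\int_{\spX}\frac{|\partial_x\mu_t(x)|^2}{\mu_t(x)}\,dx\,dt \;<\; \infty.
\]
Formally, differentiating $t\mapsto\int\mu_t\log\mu_t$ along~\eqref{eq:FP-u} and integrating by parts gives the entropy-dissipation identity
\[
\frac{d}{dt}\int\mu_t\log\mu_t \;=\; -\frac{1}{2}\int\frac{|\partial_x\mu_t|^2}{\mu_t} \;+\; \int(b+u)\,\partial_x\mu_t,
\]
and Young's inequality $(b+u)\,\partial_x\mu \le \tfrac{1}{4}|\partial_x\mu|^2/\mu + (b+u)^2\mu$ yields, after integration in time,
\[
\frac{1}{4}\int_0^T\!\int\frac{|\partial_x\mu_t|^2}{\mu_t} \;\le\; \int\mu_0\log\mu_0 \;-\; \int\mu_T\log\mu_T \;+\; \int_0^T\!\int(b+u)^2\mu_t.
\]
The right-hand side is finite: $\FreeEnergy(\mu_0)<\infty$ combined with the lower bound~\eqref{bound:VTmu-from-below} on $V(T_{\mu_0}\cdot)$ and boundedness of $W$ yields $\int\mu_0\log\mu_0<\infty$; the Gaussian maximum-entropy inequality applied with $\mu_T\in\P_2(\spX)$ yields $\int\mu_T\log\mu_T>-\infty$; and $\int(b+u)^2\mu_t\le 2\|b\|_\infty^2+2\int u^2\mu_t\in L^1(0,T)$. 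Once the Fisher information is finite, Cauchy-Schwarz gives $\int|\partial_x\mu_t| \le (\int|\partial_x\mu_t|^2/\mu_t)^{1/2}<\infty$, so $\partial_x\mu_t\in L^1(\spX)$ for a.e.\ $t$.

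For parts (1) and (3), define $v_t(x):=b(x,\mu_t)+u(t,x)-\partial_x\mu_t(x)/(2\mu_t(x))$. By the Fisher-information bound and the $L^2$-control on $u$, $v_t\in L^2(\mu_t)$ and $t\mapsto\|v_t\|_{L^2(\mu_t)}\in L^2(0,T)$. Equation~\eqref{eq:FP-u} rewrites as the continuity equation $\partial_t\mu_t+\partial_x(\mu_t v_t)=0$, so Lemma~\ref{lemma:char-AC2} gives $\mu\in AC^2([0,T];\P_2(\spX))$ with velocity field $v_t$; this is part (1). Substituting the identity $u=v_t+\partial_x\mu_t/(2\mu_t)-b$ into the formula~\eqref{eq:I-in-terms-of-u} for $\mathfrak I_\nu$ produces~\eqref{eq:I-quadratic-form}, which is part (3).

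The principal obstacle is the rigorous justification of the entropy-dissipation identity underlying the Fisher-information bound: the formal calculation requires testing~\eqref{eq:FP-u} against $1+\log\mu_t$, which is not admissible a priori, since $\mu_t$ is only a probability measure and $b+u$ only lies in $L^2_{\mu_t}$. The standard remedy is to convolve $\mu_t$ with a mollifier $\eta_\e$, derive the dissipation inequality at the regularized level with the admissible test function $\log(\mu_t^\e+\delta)$, and pass to the limits $\e\downarrow 0$ and $\delta\downarrow 0$ using the lower semicontinuity of the Fisher information and of the entropy, together with the weak continuity of the nonlinear drift $b(\cdot,\mu_t)$ in $\mu_t$ on sets of bounded second moment.
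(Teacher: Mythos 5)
Your proof of parts (1), (3), and the second-moment/Gr\"onwall estimate essentially coincides with the paper's. The genuine difference is in part (2), the Fisher-information bound, where you and the paper take distinct routes. The paper proceeds dually: for each test function $\varphi$ it constructs a bounded solution $g$ of a linear backward parabolic equation with inhomogeneity built from $\varphi$, sets $f=2\log g$, pairs $f$ against the Fokker--Planck equation, and invokes the dual characterization of Fisher information (Feng--Kurtz, Lemma~D.44) to conclude. You instead work primally, differentiating the Boltzmann entropy $t\mapsto\int\mu_t\log\mu_t$ along the flow to get the entropy-dissipation inequality, then applying Young's inequality. Your route is more elementary (no auxiliary parabolic PDE, no Feng--Kurtz), and the estimate $\int\mu_0\log\mu_0<\infty$ from $\FreeEnergy(\mu_0)<\infty$ together with~\eqref{bound:VTmu-from-below} and the Gaussian lower bound on $\int\mu_T\log\mu_T$ are both correct; the paper's route is heavier but has the advantage that the entire computation is performed on smooth auxiliary functions $g$ rather than on $\mu_t$ itself, sidestepping the regularization you must then carry out.

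That regularization is where your sketch leaves a real gap. You propose mollifying $\mu_t$ and testing against $\log(\mu_t^\e+\delta)$, invoking ``weak continuity of the nonlinear drift $b(\cdot,\mu_t)$ in $\mu_t$''. But the drift in~\eqref{eq:FP-u} is $c_t:=b(\cdot,\mu_t)+u_t$ with $u_t$ only in $L^2(\mu_t)$, and continuity properties of $b$ are beside the point: the mollified measure $\mu_t^\e=\mu_t*\eta_\e$ solves $\partial_t\mu^\e=\tfrac12\partial_{xx}\mu^\e-\partial_x\bigl((c_t\mu_t)*\eta_\e\bigr)$, and the mollified flux $(c_t\mu_t)*\eta_\e$ is \emph{not} of the form $c_t^\e\mu_t^\e$ for any smoothing $c_t^\e$ of $c_t$, so the regularized entropy dissipation does not follow by substitution. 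The commutator must be controlled. The observation that saves the argument --- and which your sketch does not mention --- is that one can define $\hat c_t^\e := \bigl((c_t\mu_t)*\eta_\e\bigr)/\mu_t^\e$, for which Jensen's inequality gives $\int(\hat c_t^\e)^2\mu_t^\e\le\int c_t^2\mu_t$; the mollified drift is then again $L^2$-controlled uniformly in $\e$, and the dissipation inequality for $\mu^\e$ passes to the limit by lower semicontinuity of Fisher information. Without this step your argument does not close, so you should either supply it or adopt the paper's dual route.
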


\begin{proof}
We first show that $\int_{\R} x^2\mu_t(dx)$ is bounded uniformly in $t$. Formally this follows from multiplying equation~\eqref{eq:FP-u} by $x^2$ and integrating, giving
\begin{align*}
\frac d{dt} \frac12 \int_\R x^2 \mu_t(dx) &= \int_\R \mu_t\Bigl[ 1 + x(b_t+u_t)\Bigr]\\
&\leq 1 + \int_\R x^2\mu_t(dx) + \frac12 \int_\R b_t^2 \mu_t(dx) + \frac12 \int_\R u_t^2\mu_t(dx)\\
&\leq 1 + \int_\R x^2\mu_t(dx) + \frac12 \int_\R C(1+x^2)\mu_t(dx) + \mathfrak I_{\nu}(\mu).
\end{align*}
The second inequality follows from the assumptions~\eqref{cond:coercivity} on $V$. Gronwall's Lemma then yields boundedness of $\int x^2\mu_t(dx)$  for finite time. This argument can be made rigorous by approximating $x^2$ by smooth compactly supported functions. 

We next prove part~\ref{i:l:i-to-v:RF}.  Fix a function $\varphi\in C_c^{0,1}([0,T]\times \R)$,  and let $g\in C^{1,2}([0,T]\times \R)$ be a solution of the linear backward-parabolic equation
\begin{alignat*}2
&\partial_t g_t +\frac12 \partial_{xx} g_t + b_t\partial_x g_t = \frac18 g_t \Bigl(\frac12\varphi_t^2  - \partial_{x}\varphi_t\Bigr), & \qquad &(x,t)\in \R\times (0,T],\\
& g_T = 1, && x\in \R,\\
&g \text{ is bounded,}
\end{alignat*}
where we use the shorthand notation $b_t(x) := b(x,\mu_t)$. Existence of such a solution follows from standard linear parabolic theory; see e.g.~\cite[Th.~1.12]{Friedman64}. The constant initial datum and the compact support of the right-hand side imply that $g(x,t)\to 1$ for $x\to\pm\infty$ and for all $t$, and that all derivatives converge to zero as $x\to\pm\infty$; this can be recognized in the representation formula~\cite[(1.7.6)]{Friedman64}. 

Note that by the coerciveness bound~\eqref{cond:coercivity} on $V$ we have $\int_{\spX} g_T^2(x) e^{-2V(x)}\, dx = \int_{\spX}e^{-2V(x)}\, dx < \infty$. To obtain bounds on the same expression at earlier times $t$ we calculate, briefly suppressing the subscript $t$, 
\begin{align*}
 \frac d{dt} \int_{\spX} g_t^2(x) e^{-2V(x)}\, dx
&= \int ge^{-2V} \biggl(- \partial_{xx}g - 2b \partial_x g +\frac14g\Bigl( \frac12\varphi^2  -  \partial_x\varphi\Bigr)\biggr)\\
&= \int e^{-2V} \biggl(|\partial_{x}g|^2  -2g\partial_x g V'- 2b g \partial_x g +\frac18  g^2 \varphi^2 +\frac12  g\partial_x g \varphi - \frac12g^2\varphi V'\biggr)\\
&=\int e^{-2V}\biggl(\frac12 |\partial_{x}g - gV' +\frac12 g \varphi |^2  
  + \frac12  \bigl|\partial_x g - {2gb} - gV'\bigr|^2  
  - g^2(2b^2 +2bV' + {V'}^2) \biggr)\\
&\geq - 3(\|b\|_\infty^2 + \|V'\|_\infty^2) \int g_t^2 e^{-2V},
\end{align*}
after which Gronwall's Lemma  implies that 
$
\int g_0^2 e^{-2V}\leq C,
$
with a constant $C>0$ that does not depend on $\varphi$. 

The function $f := 2\log g$ then satisfies
\[
\partial_t f_t + \frac12 \partial_{xx} f_t + b_t \partial_x f_t + \frac14 |\partial_x f_t|^2 = \frac14  \Bigl(\frac12 \varphi^2_t-\partial_{x}\varphi_t \Bigr),
\]
with  final value $f_T = 0$.
Multiplying~\eqref{eq:FP-u} with  $f$ and integrating we find
\begin{align*}
0 &= \int_{\spX} \Bigl[ f_T\mu_T - f_0\mu_0\Bigr] 
- \int_0^T \int_{\spX} \mu_t \Bigl[\partial_t f_t + \frac12\partial_{xx} f_t + b_t \partial_x f_t \Bigr]\\
&= - \int_{\spX} f_0\mu_0 +  \int_0^T \int_{\spX} \mu_t \Bigl(u_t\partial_x f_t -\frac14 |\partial_x f_t|^2 - \frac14 \partial_{x}\varphi_t +\frac1{8}\varphi_t^2 \Bigr).
\end{align*}
Briefly writing $\mu_V(dx) := Z_V^{-1}e^{-2V}dx$, for which we have the estimate $\RelEnt(\mu_0|\mu_V)\leq 2\FreeEnergy(\mu_0) + C$, we then apply the entropy inequality to find
\begin{align*}
\frac14 \int_0^T \int_{\spX} \mu_t \Bigl(\partial_{x}\varphi_t -\frac12\varphi_t^2\Bigr)
&\leq  \int_0^T \int_{\spX} \mu_t \Bigl(u_t\partial_x f_t -\frac14 |\partial_x f_t|^2 \Bigr) - \int_{\spX} f_0\mu_0\\
&\leq \int_0^T \int_{\spX}\mu_t |u_t|^2 
+ \RelEnt(\mu_0|\mu_V) + \frac1{Z_V}\int_{\spX} e^{f_0-2V}\\
&\leq 2I(\mu) + 2\FreeEnergy(\mu_0) + C + \frac1{Z_V}\int_{\spX} g_0^2e^{-2V}.
\end{align*}
The right-hand side of this expression is bounded from above independently of $\varphi$, and by the dual characterization of Fisher Information (see e.g.~\cite[Lemma~D.44]{FengKurtz06}) it follows that 
\begin{align*}
\frac12 \int_0^T \int_{\spX} \frac{|\partial_x \mu_t(x)|^2}{\mu_t(x)}\, dx dt
&= \int_0^T \sup_{\psi\in C_c^1(\R)} \int_{\spX} \mu_t \Bigl(\partial_{x}\psi -\frac12\psi^2\Bigr)\\
&= \sup_{\varphi\in C_c^{0,1}([0,T]\times \R)}\int_0^T\int_{\spX} \mu_t \Bigl(\partial_{x}\varphi_t -\frac12\varphi_t^2\Bigr)\\
&<\infty,
\end{align*}
where the second identity follows from a standard argument involving the separability in $C_b(\R)$ of the subspace $C_c^1(\R)$. This proves part~\ref{i:l:i-to-v:RF}.

\medskip
To prove part~\ref{i:l:i-to-v:AC2}, i.e.\ to show that $\mu\in AC^2([0,T];\P_2)$, we write equation~\eqref{eq:FP-u} as
\begin{equation}
\label{eq:FP-u-with-v}
\partial_t \mu_t = -\partial_x (\mu_tv_t) \quad\text{with}\quad
v_t = -\frac12 \frac{\partial_x \mu_t}{\mu_t} + b_t + u_t.
\end{equation}
The function $v_t$ satisfies $\int_0^T\int_{\spX}\mu_t v_t^2 <\infty$ because each of the three terms in~\eqref{eq:FP-u-with-v} satisfies a similar bound: $u$ satisfies this property by~\eqref{eq:I-in-terms-of-u}, $b$ is uniformly bounded by Assumption~\ref{ass:VW}, and for $\partial_x\mu/\mu$ this follows from part~\ref{i:l:i-to-v:RF}. By the characterization of Lemma~\ref{lemma:char-AC2} it follows that $\mu\in AC^2([0,T];\P_2)$.

Finally, to show part~\ref{i:l:i-to-v:char-I} it suffices to substitute~\eqref{eq:FP-u-with-v} in~\eqref{eq:I-in-terms-of-u}.
\end{proof}

\subsection{End of proof of Theorem~\ref{th:LDP-pathwise-weakversion}.}
We now have constructed two particle systems as follows:
\begin{itemize}
\item The  particle system $Y^n$ of Definition~\ref{def:particle-systems}\eqref{i:def:particle-systems:Y} is  started at  initial positions drawn from $\InvMeasWiszero_n$;
\item The particle system $X^n$ of Definition~\ref{def:particle-systems}\eqref{i:def:particle-systems:X} is started from the transformed initial positions $X^n_i(0)$,  as described in~\eqref{eq:th-ldp-path-initial-data-weak}.
\end{itemize}
By Lemma~\ref{l:LDP-W=0-initial-data}, the  random time-dependent measures $\mu_n(t) = \eta_n(X^n(t))$ satisfy a large-deviation principle in $C([0,T];\P(\spX))$ with rate function~$I(\mu) =  \mathfrak I_{\bbQ^{\mu_0}}(\mu) + \gamma(\mu_0)$.

By Lemma~\ref{lemma:mapping-particle-systems}  the  random  measures $\rho_n(t) = \eta_n(Y^n(t))$ have the same distribution as $A_n\mu_n(t)$. We deduce the LDP for $\rho_n$ by applying a generalization of the contraction principle to $n$-dependent maps in the form of~\cite[Corollary~4.2.21]{DemboZeitouni98}.  In the case at hand  these maps will be the maps $\mathcal A^{-1}_n$ and $\mathcal A^{-1}$, which extend $A$ and $A_n$ to time-dependent measures:
\begin{align*}
&\mathcal A_n: C([0,T];\P_n(\spX)) \to C([0,T];\P_n(\spY)), \quad & (\mathcal A_n\mu)(t) &:= A_n(\mu(t)),\\
&\mathcal A: C([0,T];\P(\spX)) \to C([0,T];\P(\spY)), \quad & (\mathcal A\mu)(t) &:= A(\mu(t)).\\
\end{align*}

The only non-trivial condition to check for~\cite[Corollary~4.2.21]{DemboZeitouni98} is a convergence property of the maps $\mathcal A_n$ to $\mathcal A$. We temporarily write  $\bbP_{\mu_n}$ for the law of $\mu_n$ on $ \P(C([0,T];\spX))$.  Define for $n\in\N$ and $\delta>0$ 
\[
\Gamma_{n,\delta} := \{\nu\in \supp \bbP_{\mu_n}: d_{BL}(\mathcal A\nu,\mathcal A_n\nu) > \delta\},
\]
where we also write $d_{BL}$ for the metric on $C([0,T];\P(\spY))$ generated by the metric $d_{BL}$ on $\spY$. 
Corollary~4.2.21 of \cite{DemboZeitouni98} requires that this set has super-exponentially small $\bbP_{\mu_n}$-probability. In fact, for every $\delta>0$  the set is empty for sufficiently large $n$, since by part~\ref{lemma:isometries:smeared} of Lemma~\ref{lemma:isometries} we have for any $\nu_n = \frac1n \sum_{i=1}^n \delta_{x_i}\in \spX$ that 
\begin{align*}
W_2(A\nu_n, A_n\nu_n)^2  &= W_2\biggl(\sum_{i=1}^n \delta^{\alpha,n}_{x_i+\alpha (i-1)/n},\frac1n \sum_{i=1}^n \delta_{x_i+\alpha (i-1)/n} \biggr)^2 \\
&= \sum_{i=1}^n W_2\Bigl(\delta^{\alpha,n}_{x_i+\alpha (i-1)/n}, \frac1n \delta_{x_i+\alpha (i-1)/n}\Bigr)^2\\
&= \sum_{i=1}^n \frac12 \frac{\alpha^2}{n^2} =  \frac12 \frac{\alpha^2}{n}.
\end{align*}
Since $d_{BL}(\mu,\nu) \leq  W_2(\mu,\nu)$ by Lemma~\ref{l:props-W2}, the set $\Gamma_{n,\delta}$ is empty for $n\geq \alpha^2/2\delta^2$.
Applying \cite[Corollary~4.2.21]{DemboZeitouni98} we find that $\mathcal A_n\mu_n$ satisfies a large-deviation principle in $C([0,T];\P(\spY))$ with good rate function 
\begin{equation}
\label{def:hatI}
\hat I(\rho) := \begin{cases}
I(\mu), & \text{if $A\mu(t) = \rho(t)$ for all $t\in [0,T]$,} \\
+\infty & \text{otherwise},
\end{cases}
\qquad\text{for }\rho\in C\big([0,T];\P(\spY)\big).
\end{equation}
It remains to prove the form~\eqref{def:RF-ind-part} of $\hat I$; {this is the content of the next lemma. }

\begin{lemma}
\label{lemma:char-I-rho-hrho}
Let $\rho\in C([0,T];\P(\spY))$ satisfy $\rho_0\in \P_2(\spY)$ and $\hFreeEnergy(\rho_0)+ \hat I(\rho)<\infty$. Define $\mu\in C([0,T];\P(\spX))$ by $\rho_t := A\mu_t$ for all $t$ (such $\mu$ exists by~\eqref{def:hatI}). We then have $\rho\in AC^2([0,T];\P_2(\spY))$ and $\mu\in AC^2([0,T];\P_2(\spX))$, and 
\begin{align*}
\hat I(\rho) \stackrel{(a)}= I(\mu) &\leftstackrel{(b)}= 2\, \Ent_V(\mu_0) + \frac12\int_0^T |\dot \mu|^2(t)\, dt  
+ \frac12\int_0^T |\partial \FreeEnergy|^2(\mu_t)\, dt 
+ \FreeEnergy(\mu_T)- \FreeEnergy(\mu_0)\\
&\leftstackrel{(c)}= 2\, \hEnt_V(\rho_0) + \frac12\int_0^T |\dot {\rho}|^2(t)\, dt  
+ \frac12\int_0^T |\partial \hFreeEnergy|^2(\rho_t)\, dt 
+  \hFreeEnergy( \rho_T)-  \hFreeEnergy(\rho_0).
\end{align*}
\end{lemma}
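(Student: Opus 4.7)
\medskip
\noindent\textbf{Proof plan for Lemma~\ref{lemma:char-I-rho-hrho}.}

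The equality (a) is immediate: by the characterization~\eqref{def:hatI} of $\hat I$, finiteness of $\hat I(\rho)$ forces the value $I(\mu)$ for any $\mu$ with $A\mu_t=\rho_t$, so $\hat I(\rho)=I(\mu)$. My plan for (b) is to first feed the hypothesis $I(\mu)<\infty$ into Lemma~\ref{lemma:I-to-v-formulation}: this gives $\mu\in AC^2([0,T];\P_2(\spX))$, Lebesgue absolute continuity of $\mu_t$ for a.e.\ $t$, the Fisher-type bound $\int_0^T\!\int|\partial_x\mu_t|^2/\mu_t\,dx\,dt<\infty$, and the explicit formula
\[
I(\mu)=\gamma(\mu_0)+\RelEnt(\mu_0|\bbQ^{\mu_0})+\tfrac12\int_0^T\!\!\int\mu_t\Bigl(v_t+\tfrac{\partial_x\mu_t}{2\mu_t}-b(\cdot,\mu_t)\Bigr)^{\!2}\,dx\,dt,
\]
where $v_t$ is the velocity field from Lemma~\ref{lemma:char-AC2}. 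Combining with the identity $\RelEnt(\mu_0|\bbQ^{\mu_0})+\gamma(\mu_0)=2\Ent_V(\mu_0)$ coming from~\eqref{char:FE-RelEnt} already accounts for the initial entropic term.

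The next step is to expand the square into three pieces. The pure $v_t^2$ contribution integrates to $\frac12\int_0^T|\dot\mu|^2(t)\,dt$ by~\eqref{eq:metric-derivative-v}. The pure squared drift term $\int\mu_t(\partial_x\mu_t/(2\mu_t)-b(\cdot,\mu_t))^2\,dx$ is exactly the squared $L^2(\mu_t)$-norm of the minimal element of $\partial\FreeEnergy(\mu_t)$, identified in Lemma~\ref{lemma:properties-of-the-functionals}\eqref{char:b-minimal-element-subdiff}; by~\eqref{char:metric-slope-subdifferential} it equals $|\partial\FreeEnergy|^2(\mu_t)$, so this contribution integrates to $\tfrac12\int_0^T|\partial\FreeEnergy|^2(\mu_t)\,dt$. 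The cross term $\int_0^T\!\int v_t(\partial_x\mu_t/(2\mu_t)-b(\cdot,\mu_t))\mu_t\,dx\,dt$ is the content of the chain rule~\eqref{eq:chain-rule}, applied with $\xi_\sigma=\partial^\circ\FreeEnergy(\mu_\sigma)$; this converts the cross term into $\FreeEnergy(\mu_T)-\FreeEnergy(\mu_0)$. Summing the three contributions yields formula (b).

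For (c) I plan to transfer everything through the map $A$ using its isometric character (Lemma~\ref{lemma:isometries}) together with the definitional identities $\Ent_V=\hEnt_V\circ A$, $\Energy_W=\hEnergy_W\circ A$, and hence $\FreeEnergy=\hFreeEnergy\circ A$. The identity $W_2(\mu_s,\mu_t)=W_2(\rho_s,\rho_t)$ gives both $\rho\in AC^2([0,T];\P_2(\spY))$ and $|\dot\mu|(t)=|\dot\rho|(t)$ via~\eqref{def:metric-derivative}. The same isometry, combined with the identity $\FreeEnergy(\mu)-\FreeEnergy(\nu)=\hFreeEnergy(A\mu)-\hFreeEnergy(A\nu)$, transfers the local slope verbatim through definition~\eqref{def:local-slope-abstract}: $|\partial\FreeEnergy|(\mu_t)=|\partial\hFreeEnergy|(\rho_t)$ because balls around $\mu_t$ in $\P_2(\spX)$ correspond bijectively, under the isometry, to balls around $\rho_t$ in $\P_2(\spY)$. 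Collecting these identities turns (b) into (c).

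The only delicate point is the verification of the chain rule hypotheses in step two: one needs $\FreeEnergy$ to be $\lambda$-convex (supplied by Lemma~\ref{lemma:properties-of-the-functionals}\eqref{i:l:basic-props-lambda-convexity}), non-emptiness of $\partial\FreeEnergy(\mu_t)$ for a.e.\ $t$ (supplied by the Fisher bound above and Lemma~\ref{lemma:properties-of-the-functionals}\eqref{char:b-minimal-element-subdiff}), and the integrability $\int_0^T|\dot\mu|(t)|\partial\FreeEnergy|(\mu_t)\,dt<\infty$. The last condition will follow by Cauchy--Schwarz from the already established $L^2(0,T)$-integrability of $|\dot\mu|$ (part of $\mu\in AC^2$) and of $|\partial\FreeEnergy|(\mu_t)$ (which is part of the finite right-hand side in~\eqref{eq:I-quadratic-form}). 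The finiteness of $\FreeEnergy(\mu_0)$, required so that the chain rule reads as a genuine difference and not $\infty-\infty$, is exactly the assumption $\hFreeEnergy(\rho_0)<\infty$ after transporting through $A$.
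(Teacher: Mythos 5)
Your proposal is correct and follows essentially the same route as the paper's own proof: Lemma~\ref{lemma:I-to-v-formulation} for $\mu\in AC^2$ and the formula~\eqref{eq:I-quadratic-form}, expansion of the square into $|\dot\mu|^2$, $|\partial\FreeEnergy|^2$, and a cross term handled by the chain rule~\eqref{eq:chain-rule}, then transfer through the isometry $A$. Your explicit verification of the chain-rule hypotheses (the $\lambda$-convexity, non-emptiness of $\partial\FreeEnergy(\mu_t)$, Cauchy--Schwarz integrability, and finiteness of $\FreeEnergy(\mu_0)$) is in fact more careful than the paper's terse treatment and is a welcome addition.
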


\begin{proof} 
The $W_2$ absolute continuity of $\mu$ is given by Lemma~\ref{lemma:I-to-v-formulation}, and the corresponding property of $\rho$ follows from the $W_2$-isometry of $A$.

Identity~(a) above follows from the definition~\eqref{def:hatI}. 
Identity~(c) is a direct consequence of the isometry of the mapping~$A$: under this isometry, all metric-space objects on $\P_2(\spX)$ are mapped one-to-one to corresponding objects on $\P_2(\spY)$, and this holds in particular for $\Ent_V$ and $\hEnt_V$, $\FreeEnergy$ and $\hFreeEnergy$, $|\partial \FreeEnergy|$ and $|\partial \hFreeEnergy|$,  and $|\dot\mu|$ and $|\dot \rho|$. 

To show identity~(b), note that 
since $\FreeEnergy(\mu_0)+  \mathfrak I_{\bbQ^{\mu_0}}(\mu) + \gamma(\mu_0) = \hFreeEnergy(\rho_0)+ \hat I(\rho)<\infty$ we have by Lemma~\ref{lemma:I-to-v-formulation}
\begin{align*}
I(\mu) &= \mathfrak I_{\bbQ^{\mu_0}}(\mu) +\gamma(\mu_0)\\
&= \RelEnt(\mu_0|\bbQ^{\mu_0}) +\gamma(\mu_0)  +  \frac12 \int_0^T \int_{\spX} 
  \mu_t(x) \Bigl(v_t(x) + \frac{\partial_x \mu_t}{2\mu_t}(x) - b(x,\mu_t)\Bigr)^2\, dxdt.
\end{align*}
The first two terms are equal to $2\, \Ent_V(\mu_0)$; we  rewrite the remainder as
\[
I(\mu) = 2\,\Ent_V(\mu_0) + \frac12\int_0^T \!\!\int_{\spX} v_t^2 \mu_t 
+ \frac12 \int_0^T\!\!\int_{\spX} \Bigl(\frac{\partial_x \mu_t}{2\mu_t} -b_t\Bigr)^2\mu_t
+ \int_0^T\!\!\int_{\spX} v_t \Bigl(\frac{\partial_x \mu_t}{2\mu_t} -b_t\Bigr)\mu_t.
\]
The first integral equals $\frac12\int_0^T |\dot \mu|^2(t)\, dt$ by the characterization~\eqref{eq:metric-derivative-v} of the velocity of absolutely-continuous curves.  By the characterizations~\eqref{char:b-minimal-element-subdiff} and~\eqref{char:metric-slope-subdifferential} of the element of minimal norm in the subdifferential, the second integral equals $\frac12\int_0^T |\partial \FreeEnergy|^2(\mu_t)\, dt $, and by the chain rule~\eqref{eq:chain-rule} the third integral equals $\FreeEnergy(\mu_T)- \FreeEnergy(\mu_0)$. 
\end{proof}

\newpage

\section{Proofs of Theorems~\ref{th:LDP-invmeas} and~\ref{th:LDP-path}}
\label{s:proof-dynamic-ldp}

\subsection{Proof of Theorem~\ref{th:LDP-path}}

In this section we finalize the proofs of the two main theorems. We first prove a version of Theorem~\ref{th:LDP-path} that allows a little more freedom in the initial data. 

Let $\mathcal G:\P(\R)\to\R$ be continuous and bounded, and set
\[
\InvMeasG_n(dy) := \frac1{\mathcal Z_n^{\mathcal G}} \exp \biggl[\,
  -n\mathcal G\Bigl(\frac1n\sum_{i=1}^n \delta_{y_i}\Bigr) - 2 \sum_{i=1}^n V(y_i) \biggr] \, \Lebesgue^n\Big|_{\Omega_n}(dy).
\]
We also define the modified free energy
\begin{equation}
\label{def:FE-G}
\hFreeEnergy^{\mathcal G}(\rho) := \hEnt_V(\rho) + \frac12 \mathcal G(\rho) + C_{\mathcal G},
\end{equation}
where the constant $C_{\mathcal G}$ is such that $\inf \hFreeEnergy^{\mathcal G}=0$.

Then
\begin{itemize}
\item For $\mathcal G=0$ the distribution $\InvMeasG_n$ coincides with $\InvMeasWiszero_n$;
\item For $\mathcal G(\rho) = \int f\rho$  the distribution $\InvMeasG_n$ coincides with $\InvMeasf_n$ and $\hFreeEnergy^{\mathcal G}$ with $\hFreeEnergy^f$;
\item For $\mathcal G(\rho) = \iint W(x-y)\rho(dx)\rho(dy)$ the distribution $\InvMeasG_n$ coincides with  $\InvMeas_n$ and $\hFreeEnergy^{\mathcal G}$ with $\hFreeEnergy$. 
\end{itemize}

\begin{theorem}[Large-deviation principle on path space, version with general initial distribution]
\label{th:LDP-path-strongestversion}
Assume that $V,W$ satisfy Assumption~\ref{ass:VW}. 
For each $n$, let the particle system $t\mapsto Y^n(t)\in\R^n$ be given by~\eqref{eq:SDE}, with initial positions drawn from the modified invariant measure $\InvMeasG_n$.

The random evolving empirical measures $\rho_n(t) =\frac1n\sum_{i=1}^n \delta_{Y^n_i(t)} $ then satisfy a large-deviation principle on $C\bigl([0,T];\P(\R)\bigr)$ with good rate function $\hat I^{\mathcal G}$. 
%
If in addition $\rho$ satisfies  $\rho(0)\in \P_2(\R)$ and  $\hFreeEnergy(\rho(0)) + \hat I^{\mathcal G}(\rho) < \infty$, then we have $\rho\in C([0,T];\P_2(\R))$ and $\hat I^{\mathcal G}(\rho)$ can be characterized as
\begin{align}
\hat I^{\mathcal G}(\rho) := 2\, &\hFreeEnergy^{\mathcal G}(\rho(0)) + \hFreeEnergy( \rho(T))-  \hFreeEnergy(\rho(0))
+\frac12\int_0^T |\dot {\rho}|^2(t)\, dt  
+ \frac12\int_0^T |\partial \hFreeEnergy|^2(\rho(t))\, dt.
\label{def:RF-ind-part-strongest}
\end{align}
Here $|\dot {\rho}|$ and $|\partial \hFreeEnergy|$ are the metric derivative and the local slope defined in Definition~\ref{def:metric-GFs}, for the Wasserstein metric space $\mathcal X=\P_2(\R)$.
\end{theorem}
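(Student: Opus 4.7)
The plan is to derive this directly from Theorem~\ref{th:LDP-pathwise-weakversion} by recognizing that $\InvMeasG_n$ is nothing more than a bounded-continuous tilt of $\InvMeasWiszero_n$ at the level of the initial data. Write $\bbP_n^{W=0}$ and $\bbP_n^{\mathcal{G}}$ for the laws of the empirical-measure path $t \mapsto \rho_n(t)$ on $C([0,T]; \P(\R))$ when $Y^n(0)$ is drawn from $\InvMeasWiszero_n$ and $\InvMeasG_n$ respectively. Since the SDE~\eqref{eq:SDE} has identical drift in both cases, the two path-laws differ only through the initial distribution, so
$$\frac{d\bbP_n^{\mathcal{G}}}{d\bbP_n^{W=0}}(\rho_\cdot) \;=\; \frac{\cZ_n^{W=0}}{\mathcal{Z}_n^{\mathcal{G}}}\exp\bigl[-n\,\mathcal{G}(\rho_\cdot(0))\bigr].$$
The evaluation map $\rho_\cdot \mapsto \rho_\cdot(0)$ is continuous from $C([0,T]; \P(\R))$ into $\P(\R)$, and $\mathcal{G}$ is continuous and bounded by assumption, so the functional $F : \rho_\cdot \mapsto \mathcal{G}(\rho_\cdot(0))$ is bounded and continuous on $C([0,T]; \P(\R))$.

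Theorem~\ref{th:LDP-pathwise-weakversion} supplies an LDP for $\bbP_n^{W=0}$ with good rate function $\hat I$. Since $F$ is bounded continuous, the standard tilted-LDP result (a direct consequence of Varadhan's lemma; see e.g.~\cite[Sec.~4.3]{DemboZeitouni98}) then yields an LDP for $\bbP_n^{\mathcal{G}}$ with good rate function
$$\hat I^{\mathcal{G}}(\rho_\cdot) \;=\; \hat I(\rho_\cdot) + \mathcal{G}(\rho_\cdot(0)) - \inf_{\tilde\rho \in C([0,T];\P(\R))}\bigl[\hat I(\tilde\rho) + \mathcal{G}(\tilde\rho(0))\bigr].$$
The additional statement that $\rho \in C([0,T]; \P_2(\R))$ under the finiteness assumption transfers immediately from Theorem~\ref{th:LDP-pathwise-weakversion}, because the tilt touches only the initial time-slice and does not alter the dynamic part of the rate function.

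To obtain the explicit formula~\eqref{def:RF-ind-part-strongest}, substitute the characterization~\eqref{def:RF-ind-part} of $\hat I$ into the tilted rate. Using $2\hEnt_V(\rho(0)) + \mathcal{G}(\rho(0)) = 2\hFreeEnergy^{\mathcal{G}}(\rho(0)) - 2C_{\mathcal{G}}$ from the definition~\eqref{def:FE-G}, the initial-datum terms combine to $2\hFreeEnergy^{\mathcal{G}}(\rho(0)) - 2C_{\mathcal{G}}$. It remains to identify the shift constant: the Energy-Dissipation piece $\tfrac12\int_0^T\!\bigl[|\dot\rho|^2 + |\partial\hFreeEnergy|^2(\rho(t))\bigr]\,dt + \hFreeEnergy(\rho(T)) - \hFreeEnergy(\rho(0))$ is non-negative and vanishes exactly along the gradient-flow solution of $\hFreeEnergy$ starting from $\rho(0)$, which exists for any admissible initial datum by Corollary~\ref{cor:existence-of-solution}. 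Minimizing first over paths with prescribed $\rho(0)$ and then over $\rho(0)$ gives
$$\inf_{\tilde\rho}\bigl[\hat I(\tilde\rho) + \mathcal{G}(\tilde\rho(0))\bigr] \;=\; \inf_{\rho(0) \in \P_2(\R)} \bigl[2\hFreeEnergy^{\mathcal{G}}(\rho(0)) - 2C_{\mathcal{G}}\bigr] \;=\; -2C_{\mathcal{G}},$$
where the last equality uses $\inf\hFreeEnergy^{\mathcal{G}} = 0$ built into the definition of $C_{\mathcal{G}}$. Substituting back, the two occurrences of $2C_{\mathcal{G}}$ cancel, recovering~\eqref{def:RF-ind-part-strongest}. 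The real work is already contained in Theorem~\ref{th:LDP-pathwise-weakversion}; the only slightly delicate point here is the bookkeeping that aligns the shift constant coming out of the tilted-LDP formula with the normalization constant $C_{\mathcal{G}}$ embedded in the definition of $\hFreeEnergy^{\mathcal{G}}$, for which the existence of gradient-flow solutions for every admissible $\rho(0)$ is essential.
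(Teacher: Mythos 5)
Your proposal follows exactly the paper's strategy for the LDP itself: identify $\InvMeasG_n$ as a bounded, narrowly continuous exponential tilt of $\InvMeasWiszero_n$, observe that the tilt factors through the evaluation map $\rho\mapsto\rho(0)$, invoke Theorem~\ref{th:LDP-pathwise-weakversion} for the base LDP, and apply Varadhan's lemma. The transfer of the $\P_2$-regularity statement and the algebraic bookkeeping that turns~\eqref{def:RF-ind-part} into~\eqref{def:RF-ind-part-strongest} are also the same computation the paper does.

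The one place where you diverge is in pinning down the additive normalization constant, and there are two small issues worth flagging. First, the paper identifies this constant by working in the compressed ``$X$'' coordinates: it uses the decomposition $I(\mu)=\RelEnt(\mu_0|\bbQ^{\mu_0})+\gamma(\mu_0)+\tfrac12\int_0^T\!\int u^2\,\mu_t$ from Lemma~\ref{l:char-RF-path}, together with its nonnegativity and a $T\downarrow0$ observation, to conclude $\inf\{I(\mu):\mu_0=\xi\}=2\Ent_V(\xi)$, and then matches infima of $\mathfrak F^{\mathcal G}$ and $2\hFreeEnergy^{\mathcal G}$ on all of $\P(\spX)$. You instead minimize the expanded EDP form~\eqref{def:RF-ind-part}, which is clean, but~\eqref{def:RF-ind-part} is only established under the hypotheses $\rho(0)\in\P_2(\R)$ and $\hFreeEnergy(\rho(0))+\hat I(\rho)<\infty$, so your minimization is a priori restricted to $\P_2$-valued initial data. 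To close the gap you would need to argue that the infimum of $2\hFreeEnergy^{\mathcal G}$ over $\P_2(\R)$ coincides with its infimum over $\P(\R)$; this is true (the coercivity $V\geq c_1|y|-c_2$ forces exponential tails on the minimizer) but is not addressed. Second, you appeal to Corollary~\ref{cor:existence-of-solution} for existence of a gradient-flow solution from ``any admissible initial datum''; that corollary only concerns the particular initial datum $\rho^{\circ,f}$. The general existence result you actually need is the one the paper mentions just before it, from the $\lambda$-convexity of $\hFreeEnergy$ via \cite{AmbrosioGigliSavare08}. Neither point is a conceptual error, but both deserve a sentence of justification.
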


Theorem~\ref{th:LDP-path} is Theorem~\ref{th:LDP-path-strongestversion} for the special case $\mathcal G(\rho) = \int f\rho$.

\begin{proof}[Proof of Theorem~\ref{th:LDP-path-strongestversion}]
As in the theorem, let $Y^n$ be solutions of equation~\eqref{eq:SDE} (or equivalently of Definition~\ref{def:particle-systems}\eqref{i:def:particle-systems:Y}) with initial data drawn from $\InvMeasG_n$. Let $\widetilde Y^n$ be solutions of the same system, but with initial data $\widetilde Y^n(0)$ drawn from $\InvMeasWiszero_n$. Let $\rho_n,\, \widetilde \rho_n \in C([0,T];\P(\spY))$ be the corresponding empirical measures, and write $\bbP_{\rho_n}$ and $\bbP_{\widetilde \rho_n}$ for their laws. By Theorem~\ref{th:LDP-pathwise-weakversion}, $\widetilde \rho_n$ satisfies a large-deviation principle in $C([0,T];\P(\spY))$ with the rate function $\hat I$ defined in~\eqref{def:hatI}.

Since the evolution of the two particle systems is the same, conditioned on their initial positions, we have as in the proof of Theorem~\ref{th:LDP-dynamic-X} that for any $\rho\in C([0,T];\P(\spY))$
\[
\frac{d\bbP_{\rho_n}}{d \bbP_{\widetilde\rho_n}}(\rho) = 
\frac{d(\eta_n)_\# \InvMeasG_n}{d (\eta_n)_\# \InvMeasWiszero_n}(\rho_0)
= C_n \exp\bigl( - n\mathcal G(\rho_0) \bigr)
\]
for some normalization constants $C_n>0$. Since the exponent is a bounded and narrowly continuous function of $\rho_0$, Varadhan's Lemma (e.g.~\cite[Th.~4.3.1]{DemboZeitouni98}) implies that $\rho_n$ satisfies a large-deviation principle in $C([0,T];\P(\spY))$ with rate function
\[
\hat I^{\mathcal G} (\rho) := \hat I(\rho) + \mathcal G(\rho_0) + C,
\]
where  the constant $C$ is chosen such that $\inf \hat I^{\mathcal G}=0$.

We now show the formula~\eqref{def:RF-ind-part-strongest}. Set 
\[
\mathfrak F^{\mathcal G}(\nu) := \inf\Bigl\{ \hat I^{\mathcal G}(\rho): \rho_0 = \nu\Bigr\}.
\]
We prove that $\mathfrak F^{\mathcal G} = 2\hFreeEnergy^{\mathcal G}$. 
Taking any $\mu$ with $I(\mu)<\infty$, by Lemma~\ref{l:char-RF-path} there exists a {unique} function $u$ such that
\[
I(\mu) = \mathfrak I_{\bbQ^{\mu_0}}(\mu) + \gamma(\mu_0) =
\RelEnt(\mu_0|\bbQ^{\mu_0}) + \gamma(\mu_0) + 
\frac12 \int_0^T \int_{\spX} u^2(t,x)\, \mu_t(dx) dt.
\]
By repeating this identity for a sequence $T\downarrow 0$ and using the uniqueness of $u$ we find that $\inf\{I(\mu) : \mu_0 = \xi\} = \RelEnt(\xi|\bbQ^{\xi}) + \gamma(\xi)=2\Ent_V(\xi)$.
We then observe that 
\begin{align*}
\mathfrak F^{\mathcal G}(\nu) &= \inf\Bigl\{ \hat I(\rho) + \mathcal G(\rho_0) + C: \rho_0 = \nu\Bigr\}\\
&= \inf\Bigl\{ I(\mu) : A\mu_0 = \nu\Bigr\} + \mathcal G(\nu) + C\\
&= 2\hEnt_V(\nu) + \mathcal G(\nu) + C\\
& \leftstackrel{\eqref{def:FE-G}}= 2\hFreeEnergy^{\mathcal G}(\nu) + C - 2C_{\mathcal G}.
\end{align*}
Since both $\mathfrak F^{\mathcal G}$ and $2\hFreeEnergy^{\mathcal G}$ have zero infimum, $C=2C_{\mathcal G}$. 
It follows that 
\[
\hat I^{\mathcal G}(\rho) = \hat I(\rho) + \mathcal G(\rho_0) + 2C_{\mathcal G}
= \hat I(\rho) - 2\hEnt_V(\rho_0) + 2\hFreeEnergy^{\mathcal G}(\rho_0),
\]
and~\eqref{def:RF-ind-part} then implies  the characterization~\eqref{def:RF-ind-part-strongest}.
\end{proof}

\subsection{Proof of Theorem~\ref{th:LDP-invmeas}}

We deduce the  large-deviation principle for the invariant measures $\InvMeas_n$ from Theorem~\ref{th:LDP-path-strongestversion} by applying the contraction principle. Let $Y^n(0)$ be drawn from $\InvMeas_n$, and let $Y^n$ be solutions of Definition~\ref{def:particle-systems}\eqref{i:def:particle-systems:Y} over a time interval $[0,T]$. Applying Theorem~\ref{th:LDP-path-strongestversion} with $\mathcal G(\rho) := \iint W(x-y)\rho(dx)\rho(dy)$ we find that $\rho_n = \eta_n(Y^n)$ satisfies a large-deviation principle on $C([0,T];\P(\spY))$ with good rate function $\hat I^{\mathcal G}$ given by~\eqref{def:RF-ind-part}. Since the evaluation map 
\[
C([0,T];\P(\spY)) \to \P(\spY), \qquad \mu \mapsto \mu(0)
\]
is continuous, the contraction principle (see e.g.~\cite[Th.~4.21]{DemboZeitouni98}) implies that $\rho_n(0)$ satisfies a large-deviation principle with good rate function
\[
\mathfrak F^{\mathcal G}(\nu) := \inf\Bigl\{ \hat I^{\mathcal G}(\rho): \rho_0 = \nu\Bigr\}.
\]
In the proof of Theorem~\ref{th:LDP-path-strongestversion} above we showed that $\mathfrak F^{\mathcal G}$ equals $2\hFreeEnergy^{\mathcal G}$, which coincides with $2\hFreeEnergy$ for this choice of $\mathcal G$; the characterization of Lemma~\ref{lemma:properties-of-the-functionals} then concludes the proof. 

\appendix

\section{Proof of Lemma~\ref{l:ex-un-SDEs}}

The existence and uniqueness of weak solutions  $X^n$ follows from arguments similar to the proof of Lemma~\ref{l:char-RF-path}, and we omit this proof. 

For the stochastic process $Y^n$ we prove the existence and uniqueness as follows. The main step is to show the unique existence of a fundamental solution of the parabolic partial differential equation $\partial_t \rho - \mathcal L_Y^*\rho=0$ on $\Omega_n$:
\begin{lemma}
\label{l:ex-un-fund-sol}
Fix $T>0$. 
For each $y_0\in \Omega_n$ there exists a unique fundamental solution $(t,y)\mapsto p(t,y;y_0)$ of the equation $\partial_t -\mathcal L_Y^* =0$ on $\Omega_n$ with Neumann boundary conditions, i.e., a function $p = p_{y_0}\in L^1((0,T)\times \Omega_n)$ satisfying 
\begin{equation}
\label{eq:weak-fund-sol}
0 = \int_0^T \int_{\Omega_n} p(t,y) \big[\partial_t \varphi(t,y) +(\mathcal L_Y \varphi)(t,y)\big]\, dy dt + \varphi(y_0)
\end{equation}
for all $\varphi\in C^{1,2}_b([0,T]\times\Omega_n)$. 
In addition, for each $t>0$ the function $y\mapsto p(t,\cdot;y_0)$ is non-negative and  continuous, and satisfies $\int_{\Omega_n} p(t,y;y_0) \, dy= 1$.
\end{lemma}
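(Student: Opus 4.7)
The plan is to realise $p(t,y;y_0)$ as the transition density of a reflected diffusion on $\Omega_n$ driven by $\mathcal L_Y$, and then to verify the weak identity \eqref{eq:weak-fund-sol} by It\^o's formula plus a duality argument for uniqueness.

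First I would exploit the geometry of $\Omega_n$. Each connected component is labelled by an ordering $\sigma\in S_n$, and under the change of variables $z_i := y_{\sigma(i)} - (i-1)\alpha/n$ it is mapped bijectively onto the closed Weyl chamber $\{z_1 \leq \dots \leq z_n\}$, which is a convex polyhedral region in $\R^n$. Since $\mathrm b$ is $C^1_b$ by Assumption~\ref{ass:VW}, the reflected SDE
\begin{equation*}
dY_i(t) = \mathrm b_i(Y(t))\,dt + dB_i(t) + dL_i(t), \qquad Y(0)=y_0,
\end{equation*}
where $L$ is the inward local-time push on $\partial\Omega_n$, admits a pathwise-unique strong solution by the Lions--Sznitman / Tanaka theory for reflected SDEs on convex domains. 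Next I would show that the law $\mu_t(y_0,\cdot)$ of $Y(t)$ has a continuous density. Using Girsanov's theorem (Novikov's condition holds because $\mathrm b$ is bounded), the question reduces to reflected Brownian motion on the Weyl chamber, whose transition kernel is given by the classical Karlin--McGregor/reflection-principle formula, i.e.\ a signed sum over $S_n$ of Gaussian heat kernels. This density is smooth and strictly positive in the interior and integrates to one, so setting $p(t,y;y_0) := d\mu_t(y_0,\cdot)/dy$ gives a nonnegative continuous $p$ with $\int_{\Omega_n} p(t,\cdot;y_0)\,dy=1$ and hence $p\in L^1((0,T)\times\Omega_n)$.

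To verify \eqref{eq:weak-fund-sol} I would apply It\^o's formula to $s\mapsto\varphi(s,Y(s))$ for $\varphi\in C^{1,2}_b$ satisfying the Neumann condition $\partial\varphi/\partial n=0$ on the smooth part of $\partial\Omega_n$ (a condition that may be absorbed into the definition of the admissible test-function class, since reflection directions are exactly perpendicular to the faces, so that the local-time integrals $\int \nabla\varphi(s,Y(s))\cdot dL(s)$ vanish). Taking expectations and rearranging yields the required identity. Uniqueness follows by duality: given a second fundamental solution $\tilde p$, the linear functional $\psi\mapsto \int (p-\tilde p)(T,\cdot)\,\psi$ is shown to vanish on a dense class of $\psi\in C_c^\infty(\Omega_n)$ by solving the backward parabolic problem $\partial_s\varphi+\mathcal L_Y\varphi=0$, $\varphi(T,\cdot)=\psi$, with Neumann conditions on $\partial\Omega_n$, and inserting this $\varphi$ into \eqref{eq:weak-fund-sol} for both $p$ and $\tilde p$.

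The main obstacle is the non-smoothness of $\partial\Omega_n$ at the lower-dimensional strata where three or more particles collide simultaneously, since the classical Neumann heat-kernel and Schauder theories assume a $C^2$ boundary. Two observations bypass this: the Tanaka convex-domain framework is insensitive to polyhedral corners, and the multi-collision strata have codimension $\geq 2$ in $\R^n$, hence are polar for the reflected process and can be neglected probabilistically. A secondary point is the unboundedness of $\Omega_n$, which is controlled through the coercivity bound $V(y)\geq c_1|y|-c_2$ of Assumption~\ref{ass:VW}: this provides exponential tightness of $Y(t)$ and ensures that the backward Cauchy problem used in the uniqueness argument is solvable for compactly supported terminal data, e.g.\ by a standard localisation/approximation by bounded sub-domains $\Omega_n\cap B_R$ with outer Dirichlet conditions and $R\to\infty$.
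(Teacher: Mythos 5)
Your construction is genuinely different from the paper's. The paper works entirely at the PDE level: it truncates $\Omega_n$ to bounded domains $\Omega_n^L$, solves the forward Neumann problem classically with smooth approximations $\phi^L\to\delta_{y_0}$, uses mass conservation and weak-$\ast$ compactness to extract a limit $p$ as $L\to\infty$, invokes the interior Fokker--Planck regularity result of Bogachev--Krylov--R\"ockner--Shaposhnikov to obtain a continuous Lebesgue density, and then passes to the limit in the weak formulation. You instead realise $p$ probabilistically as the transition density of the reflected diffusion built via Tanaka/Lions--Sznitman on the (convex) connected component of $\Omega_n$ containing $y_0$, and verify the weak identity with It\^o's formula, exploiting that the local-time term vanishes against Neumann test functions. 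The uniqueness step (duality with a backward Neumann parabolic problem, combined with truncation to handle the unboundedness of $\Omega_n$) is essentially the same in both. Your probabilistic route makes the Neumann condition and the role of reflection conceptually transparent, while the paper's analytic route avoids having to construct the reflected process and its local time.

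Two points in your existence step need tightening. First, the signed Karlin--McGregor determinant is the kernel of Brownian motion \emph{absorbed} at the Weyl-chamber walls; for the \emph{reflected} Brownian motion you need the \emph{unsigned} sum $\sum_{\sigma\in S_n}\prod_i q_t\bigl(z_i-(z_0)_{\sigma(i)}\bigr)$ (the law of the order statistics of $n$ independent Brownian motions). This does not affect your conclusion, but the formula as written is wrong. Second, and more substantively, Girsanov gives absolute continuity of the drifted law with respect to the reflected-Brownian law, but continuity in $y$ of the resulting transition density is not an immediate consequence: the density reads $p(t,y_0,y)=p_0(t,y_0,y)\,E_{P_0}\bigl[Z_t\mid Y_0=y_0,\,Y_t=y\bigr]$, and continuity of this bridge expectation in the terminal point is a separate regularity fact. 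You need either an interior parabolic-regularity argument for the forward Kolmogorov equation (which is precisely what the paper imports from Bogachev et al.), or a parametrix/Schauder argument for the drifted Neumann heat kernel on the chamber; as it stands, ``Girsanov plus the explicit reflected kernel'' does not by itself deliver the claimed continuity of $p$.
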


Informally, the fundamental solution $p$  satisfies
\begin{alignat}2
\partial_tp &= \mathcal L_Y^* p , &\qquad & \text{on }(0,T)\times \Omega_n,\\
\partial_n p &= 0 && \text{on }(0,T)\times \partial\Omega_n,\\
p(t=0,\cdot) &= \delta_{y_0}.
\end{alignat}
%
Given this fundamental solution,   the proof of Lemma~\ref{l:ex-un-SDEs} proceeds along classical lines. We  construct a consistent family of finite-dimensional distributions $P_{t_1,\dots,t_k}\in \P((\Omega_n)^k)$ in the usual way, by daisy-chaining copies of the fundamental solution $p({t_k-t_{k-1}}, \,\cdot\;;\,\cdot\,)$ (see e.g.~\cite[Th.~2.2.2]{StroockVaradhan97}). By applying the maximum-principle method of~\cite[Cor.~3.1.3]{StroockVaradhan97} we show that this consistent family satisfies the conditions of Kolmogorov's continuity theorem, and Theorem~2.1.6 of~\cite{StroockVaradhan97} then implies that $P_{t_1,\dots,t_k}$ is generated by a unique probability measure~$\mathbb P$ on the space $C([0,\infty);\Omega_n)$. This concludes the argument. 

\medskip

The main step therefore is the proof of Lemma~\ref{l:ex-un-fund-sol}, which we now give.

\begin{proof}
Define for $L>\alpha$ the truncated state space 
\[
\Omega_n^L := \big\{ y\in [-L,L]^n: |y_i-y_j|\geq \alpha/n \text{ for all }i\not=j\big\}.
\]
Fix an initial datum $\phi\in C_b(\Omega_n^L)$, $\phi\geq0$; by {classical methods} there exists a non-negative $C^{1,2}_b$ solution $u$ of the equation $\partial_t u - \mathcal L_Y^* u = 0$ on $(0,T)\times \Omega_n^L$, $\partial_n u = 0$ on $(0,\infty)\times \partial\Omega_n^L$, and $u(t=0) = \phi$. By integrating the equation over $[0,t]\times \Omega_n^L$ we find
\[
\int_{\Omega_n^L} u(t,y)\, dy = \int_{\Omega_n^L} \phi(y)\, dy.
\]

We now take a sequence  $L\to\infty$ and choose  $\phi^L\geq 0$ with $\int \phi^L = 1$ such that $\phi^L$  converges narrowly on $\Omega_n$ to $\delta_{y_0}$. For each $T>0$, the corresponding solution $(t,y)\mapsto u^L(t,y)$ is non-negative and has integral over $[0,T]\times \Omega_n$ equal to $T$ (where we extend $u^L$ on $\Omega_n\setminus \Omega_n^L$ by zero); by taking  a subsequence we can therefore assume that $u^L$ converges weakly, in duality with $C_c([0,T]\times \Omega_n)$, to a non-negative limit measure $p$ with $p([0,T]\times \Omega_n)\leq T$. 
By e.g.~\cite[Th.~6.4.1]{BogachevKrylovRocknerShaposhnikov15} the measure $p$ has a continuous Lebesgue density on $(0,T)\times \Omega_n$, implying that for $0<t<T$ we can write it as $p(dtdt) = p(t,y)dtdy$. 

By e.g.~\cite[Th.~6.4.1]{BogachevKrylovRocknerShaposhnikov15} the measure $p$ has a continuous Lebesgue density on $(0,T)\times \Omega_n$, implying that for $0<t<T$ we can write it as $p(dtdt) = p(t,y)dtdy$. Since $\int_{\Omega_n} u^L(t, y)\, dy =1$ for all $t$, and since narrow convergence implies narrow convergence of marginals, the measure $p(\cdot,\Omega_n)$ on $[0,T]$ coincides with Lebesgue measure on $[0,T]$, and therefore $\int_{\Omega_n}p(t,y)dy = 1$ for all $t\in (0,T)$.

We now show that the function $p$ satisfies~\eqref{eq:weak-fund-sol}. Take a function $\varphi\in C^{1,2}_c([0,T)\times \Omega_n)$ satisfying $\partial_n\varphi=0$ on $\partial\Omega_n$. Then $\varphi\in C^{1,2}_c([0,T)\times \Omega_n^L)$ for sufficiently large $L$, and 
in the weak form of the equation $\partial_t u^L - \mathcal L_Y u^L = 0$ with initial datum $\phi$, 
\[
0 = \int_0^T \int_{\Omega_n^L} u^L(t,y) \big[\partial_t \varphi(t,y) +(\mathcal L_Y \varphi)(t,y)\big]\, dy dt + \int_{\Omega_n^L} \phi^L(y)\varphi(y)\, dy,
\]
we can replace the domain of integration $\Omega_n^L$ by $\Omega_n$. By taking the limit $L\to\infty$ we find for all such  $\varphi$ the property
\begin{equation*}
0 = \int_0^T \int_{\Omega_n} p(t,y) \big[\partial_t \varphi(t,y) +(\mathcal L_Y \varphi)(t,y)\big]\, dy dt + \varphi(y_0).
\end{equation*}
By a standard approximation argument, using the fact that the total mass of $p$ is finite, this identity can be shown to hold for all $\varphi\in C^{1,2}_b([0,T]\times \Omega_n)$ with $\varphi(t=T) = 0$. This proves~\eqref{eq:weak-fund-sol}.
By taking $\varphi$ in~\eqref{eq:weak-fund-sol} to be a function only of $t$, we also find $\partial_t \int_{\Omega_n} p(t,y)\, dy = 0$ in distributional sense, and therefore $p(t,\cdot)$ has unit mass for all time $t$.

\medskip
Finally, we prove the uniqueness of $p$, which also implies that the final time $T$ can be taken equal to $\infty$. Let $p$ be a finite measure on $[0,T]\times \Omega_n$ that satisfies the weak equation~\eqref{eq:weak-fund-sol} with initial datum equal to zero, i.e. assume that for all $\varphi\in C^{1,2}([0,T]\times\Omega_n)$ with $\varphi(t=T)=0$, 
\begin{equation}
\label{eq:fund-sol-zero-id}
0 = \int_0^T \int_{\Omega_n} p(t,y) \big[\partial_t \varphi(t,y) +(\mathcal L_Y \varphi)(t,y)\big]\, dy dt.
\end{equation}
Fix $\chi\in C_b([0,T]\times\Omega_n)$. By arguments very similar to those above we can find a solution $\varphi\in C^{1,2}([0,T]\times\Omega_n)$ of the equation
\begin{alignat*}2
&\partial_t \varphi + \mathcal L_Y\varphi =\chi, &\qquad & \text{on }(0,\infty)\times \Omega_n,\\
&\partial_n \varphi = 0 && \text{on }(0,\infty)\times \partial\Omega_n,\\
&\varphi(t=T) = 0.
\end{alignat*}
By substituting this $\varphi$ in~\eqref{eq:fund-sol-zero-id} we find 
\[
\int_0^T \int_{\Omega_n} p(t,y)\chi(t,y)\, dydt= 0\qquad\text{for all }\chi\in C_b([0,T]\times\Omega_n).
\]
This implies that $p$ is the zero measure on $[0,T]\times \Omega_n$, and proves the uniqueness of solutions of~\eqref{eq:weak-fund-sol}.
\end{proof}

\bigskip
\textbf{Acknowledgements. }The authors would like to thank Jim Portegies, Oliver Tse, Jasper Hoeksema, Georg Prokert, and Frank Redig for several interesting discussions and insightful remarks. This work was partially supported by NWO grant 613.009.101.

\bibliographystyle{alphainitials}
\bibliography{ref-steric}

\end{document}